\newcommand{\CP}{C_\mathcal{P}}
\newcommand{\F}{\mathbb{F}}
\newcommand{\N}{\mathbb{N}}
\newcommand{\R}{\mathbb{R}}
\newcommand{\cS}{\mathcal{S}}
\newcommand{\cR}{\mathcal{R}}
\newcommand{\G}{\mathcal{G}}
\newtheorem{definition}{Definition}
\newtheorem{proposition}{Proposition}
\newtheorem{theorem}{Theorem}
\newtheorem{lemma}{Lemma}
\newtheorem{corollary}{Corollary}
\newtheorem{example}{Example}
\begin{document}
\title{Edge Coloring and Stopping Sets Analysis in Product Codes with MDS components}

\author{Fanny Jardel and Joseph J. Boutros \\
\thanks{This manuscript was submitted to the IEEE Transactions on Information Theory, paper IT-15-1104, Dec. 2015. Fanny Jardel is with Telecom ParisTech, 75013 Paris, France (email: fannjard@gmail.com).
She was with CEA, LIST, Communicating Systems Laboratory
BC 94, Gif Sur Yvette, F91191, France.
Joseph J. Boutros is with the Dept. of Electrical and Computer Engineering, 
Texas~A\&M University at Qatar, Education City, 
23874 Doha, Qatar (email: boutros@tamu.edu).}}


\maketitle

\begin{abstract}
We consider non-binary product codes with MDS components and their
iterative row-column algebraic decoding on the erasure channel. Both independent and block 
erasures are considered in this paper. 
A compact graph representation is introduced on which we define
double-diversity edge colorings via the rootcheck concept. An upper
bound of the number of decoding iterations is given as a function of the graph size
and the color palette size $M$. Stopping sets are defined in the context
of MDS components and a relationship is established with the graph representation.
A full characterization of these stopping sets is given up to a size
$(d_1+1)(d_2+1)$, where $d_1$ and $d_2$ are the minimum Hamming distances
of the column and row MDS components respectively. 
Then, we propose a differential evolution edge coloring algorithm
that produces colorings with a large population of minimal rootcheck order symbols.
The complexity of this algorithm per iteration is $o(M^{\aleph})$, 
for a given differential evolution parameter $\aleph$, where $M^{\aleph}$ itself is small
with respect to the huge cardinality of the coloring ensemble. 
The performance of MDS-based product codes with and without double-diversity coloring
is analyzed in presence of both block and independent erasures.
In the latter case, ML and iterative decoding are proven to coincide at small
channel erasure probability.
Furthermore, numerical results show excellent performance in presence of unequal erasure
probability due to double-diversity colorings.
\end{abstract}

\begin{keywords}
Product codes, MDS codes, iterative decoding, codes on graphs, differential evolution,
distributive storage, edge coloring, diversity, erasure channel, stopping sets.
\end{keywords}
\newpage
\section{Introduction}
The colossal amount of data stored or conveyed by network nodes requires
a special design of coding structures to protect information against loss or errors
and to facilitate its access. At the end-user level, coding is essential for transmitting
information towards the network whether it is located in a single node or distributed over many nodes.
At the network level, coding should help nodes to reliably save a big amount of data
and to efficiently communicate with each others. Powerful capacity-achieving
error-correcting codes developed in the last two decades are mainly efficient 
at large or asymptotic block length, e.g. low-density parity-check codes (LDPC)~\cite{Gallager1963} 
and their spatially-coupled ensembles~\cite{Kudekar2013},
parallel-concatenated convolutional (Turbo) codes~\cite{Berrou1996}\cite{Benedetto1996}, 
and polar codes derived from channel polarization~\cite{Arikan2009}. 
Data transmission and storage in many nowadays networks
may require short-length packets that are not suitable for capacity-achieving codes.
The current interest in finite-length channel coding rates \cite{Polyanskiy2010} put back the light
on code design for short and moderate block length. Many potential candidates are available for
this non-asymptotic length context such as binary and non-binary BCH codes,
including Reed-Solomon (RS) codes,
Reed-Muller (RM) codes, 
and tensor product codes of all these linear block codes~\cite{MacWilliams1977}\cite{Blahut2003}\cite{Lin2004}.

Product codes, introduced by Peter Elias in 1954~\cite{Elias1954}, 
are tensor products of two (or more) simple codes
with a structure that is well-suited to iterative decoding via its graphical description.
In the early decades after their invention, product codes received a great attention 
due to their capability of correcting multiple burst errors~\cite{Wolf1965}\cite{Overveld1987},
the availability of erasure-error bounded-distance decoding algorithms \cite{Wainberg1972},
the ability of correcting many errors beyond the guaranteed correction capacity~\cite{Abramson1968},
and their efficient implementation with a variable rate \cite{Weng1967}. 
The pioneering work by Tanner~\cite{Tanner1981} brought new tools to coding theory
and put codes on graphs, including product codes, 
and their iterative decoding in the heart of modern coding theory~\cite{Kschischang2001}\cite{Kschischang2003}\cite{Richardson2008}.
The graph approach of coding led to new optimal cycle codes on Ramanujan/Cayley graphs~\cite{Tillich1997} 
and to Generalizations of LDPC and product codes, known as GLD codes, studied for 
the binary symmetric channel (BSC) and the Gaussian channel~\cite{Boutros1999}.
The excellent performance of iterative (turbo) decoding of product codes 
on the Gaussian channel~\cite{Pyndiah1998}
made them compete with Turbo codes and LDPC codes for short and moderate block length.
The convergence rate and stability of product codes iterative decoding were studied
based on a geometric framework~\cite{Sella2001}. Product codes with mixed convolutional
and block components were also found efficient in presence of impulsive noise~\cite{Freeman1996}.
In addition, iterated Reed-Muller product codes were shown to exhibit good decoding thresholds
for the binary erasure channel, but at high and low coding rates only~\cite{Varodayan2002}.

The class of product codes in which the row and the column code are both Reed-Solomon codes
was extensively used since more than two decades in DVD storage media 
and in mobile cellular networks~\cite{Wicker1994}. In these systems, the channel
is modeled as a symbol-error channel without soft information, i.e. suited to algebraic decoding.
Improvements were suggested for these RS-based product codes such as soft information provided
by list decoding~\cite{Sarwate2005} within the iterative process in a Reddy-Robinson framework~\cite{Reddy1972}.
Also, RS-based product codes were directly decoded via a Guruswami-Sudan list decoder~\cite{Guruswami1999}
after being generalized to bivariate polynomials~\cite{Augot2006}.
For general tensor products of codes and interleaved, 
a recent efficient list decoding algorithm was published \cite{Gopalan2011}, 
with an improved list size in the binary case.
On channels with soft information, RS-based product codes may be row-column decoded
with soft-decision constituent decoders \cite{El-Khamy2006}\cite{Jiang2004}.

Tolhuizen found the Hamming weight distribution of both binary and non-binary 
product codes up to a weight less than 
$d_1d_2+\max(d_1\lceil d_2/q\rceil, d_2\lceil d_1/q\rceil)$~\cite{Tolhuizen2002}.
Enumeration of erasure patterns up to a weight less than $d_1d_2+\min(d_1,d_2)$ 
was realized by Sendrier for product codes with MDS components \cite{Sendrier1991}.
Rosnes studied stopping sets of binary product codes 
under iterative ML-component-wise decoding \cite{Rosnes2008},
where the defined stopping sets and their analysis are based 
on the generalized Hamming distance~\cite{Wei1991}\cite{Helleseth1992}.
\subsection{Paper content and structure}
In this paper, we consider non-binary product codes with MDS components and their
iterative algebraic decoding on the erasure channel. Both independent and block 
erasures are considered in our paper. The erasure channel is currently a major
area of research in coding theory \cite{Kudekar2015}\cite{Kumar2015} because
of strong connections with theoretical computer science \cite{Kumar2015} 
and its model that easily allows to understand the behavior of codes such as for LDPC codes~\cite{Di2002}, 
for general linear block codes~\cite{Schwartz2006}, and for turbo codes~\cite{Rosnes2007}.
Coding for block erasures was examined by Lapidoth in the context 
of convolutional codes~\cite{Lapidoth2005}. This was a basis to later construct codes for the block-fading
channel with additive white Gaussian noise~\cite{Guillen2006a}\cite{Boutros2010}. 
The notion of {\em rootcheck} introduced in \cite{Boutros2010}\cite{Boutros2009ita} for single-parity
checknodes was applied to more general checknodes in GLD codes \cite{Boutros2008GLD} 
and product codes \cite{Boutros2008} to achieve diversity on non-ergodic block-fading channels.
The rootcheck concept is the main tool in this paper, in a way similar to \cite{Boutros2008},
to define a compact graph representation and study iterative decoding in presence of block erasures.
Edge coloring is one of the most interesting problems in modern graph theory~\cite{Bollobas1998}. 
In this paper, edge coloring is a tool, when combined to the rootcheck concept, yields 
double-diversity product codes. Our work is valid for finite-length MDS-based product codes only.
Product codes for asymptotic block length were studied for single-parity codes constituents \cite{Rankin1999}
and for the erasure channel with a standard regular structure \cite{Schwartz2005} 
and MDS-based irregular structures \cite{Alipour2012}.

Whether a product code is endowed with an edge coloring or not, the analysis of stopping sets,
their characterization and their enumeration is a fundamental task to be able to design
codes for erasure channels and determine the decoder performance. Our work in this sense
is an improvement to previous works cited above by Tolhuizen, Sendrier, and Rosnes. 
Besides this objective of stopping sets characterization which is useful for independent channel
erasures and erasures occurring in blocks of symbols, 
recent works on locality \cite{Gopalan2012} stimulated us
to search for edge colorings with a large population of edges that admit a minimal rootcheck order.
Locality is a concept encountered in distributive storage \cite{Kubiatowicz2000}\cite{Rashmi2013}
where classic coding theory is adapted to the nature of a network with distributed nodes
with its own constraints of load in bandwidth and storage \cite{Dimakis2011}\cite{Oggier2013}.
Furthermore, product codes with MDS components appear to be suited to distributive storage \cite{Esmaili2013}
owing to their simple and mature techniques of erasure resilience.
In our search for good edge colorings, we provide a new algorithm based on the concept of 
differential evolution~\cite{Storn1997}\cite{Onwubolu2009}. We use no crossover in our evolution loop,
only a mutation of the population of bad edges is made to search for a better edge coloring.
Our MDS-based product codes equipped with a double-diversity edge coloring are suited to distributed
storage applications and to wireless networks where diversity is a key parameter.\\

The paper is structured as follows. Section II gives a list of mathematical notations. 
The graph representation of product codes is given in Section III, 
including compact and non-compact graphs. Also the rootcheck concept and its consequences
are also found in  Section III. The analysis of stopping sets is made in Section IV. 
Our edge coloring algorithm for bipartite graphs of product codes is described in Section V.
Finally, in Section VI, we study the performance of product codes with MDS components
on erasure channels and we give theoretical and numerical results before the conclusions
in the last section.
\subsection{Main results \label{sec_main_results}}
The main results in this paper are:
\begin{itemize}
\item Establishing a new compact graph for product codes. The compact graph has many
advantages, the main one being its ability to imitate a Tanner graph with parity-check nodes.
The compact graph is also the basis for the differential evolution edge coloring.
See Section~\ref{sec_graph_representations_sub2}.
\item Iterative decoding analysis of finite-length product codes, mainly the proof
of new bounds on the number of decoding iterations. See Theorem~\ref{th_root_order_max} and Corollary~\ref{cor_rho_max}.
\item Proving new properties of stopping sets for product codes with MDS components.
See Propositions~\ref{prop_not_MDS}\&\ref{prop_MDS_w}, Corollaries~\ref{coro_not_MDS}-\ref{cor_stop_in_Gc}, and Lemmas~\ref{lem_max_support}\&\ref{lem_max_zeros}.

\item Complete enumeration and characterization 
of stopping sets up to a size $(d_1+1)(d_2+1)$,
where $d_1, d_2$ are the minimum Hamming distances of the component codes.
This stopping set enumeration goes beyond the weight $d_1d_2+\max(d_1,d_2)$ of Tolhuizen's 
Theorem~3 for codeword enumeration in the MDS components case.
See Lemmas~\ref{lem_graph_bipartite_deg2}\&\ref{lem_graph_bipartite_deg2_1} and Theorems~\ref{th_stopping_sets_d}\&\ref{th_stopping_sets_d1_d2}.

\item A new edge coloring algorithm (DECA) capable of producing double-diversity colorings
despite the huge size of the coloring ensembles. See Section~\ref{sec_edge_coloring_sub2}.

\item Construction via the DECA algorithm of product codes maximizing the number of edges with root order $1$,
i.e. minimizing the locality when the process of repairing nodes is considered. See Section~\ref{sec_edge_coloring_sub3}.

\item First numerical results for MDS-based product codes on erasure channels
showing how close iterative decoding is to ML decoding, mainly for small $\epsilon$.
We proved that iterative decoding perform as well as ML decoding (the ratio of error probabilities
tends to $1$) for MDS-based product codes at small $\epsilon$.
See Proposition~\ref{prop_stop_codewords}, Corollary~\ref{cor_PewG_PewML}, and other performance results in Section~\ref{perf_erasure_sub2}.

\item Great advantage of double-diversity colorings 
of product codes (with respect to codes without coloring) in presence of unequal probability erasures. 
Thus, double-diversity colorings are efficient on both ergodic and non-ergodic erasure channels.
See Section~\ref{perf_erasure_sub3}.
\end{itemize}

\section{Mathematical notation and Terminology}
We start by the notation related to the product code and its row and column components.
The impatient reader may skip this entire section and then refer to it later to clarify
any notation within the text. Basic notions on product codes and fundamental properties
are found in main textbooks \cite{MacWilliams1977}\cite{Blahut2003}\cite{Lin2004}
and the encyclopedia of telecommunications \cite{Kschischang2003}.\\
The column code $C_1$ is a linear block code over the finite field $\F_q$ with parameters
$[n_1, k_1, d_1]_q$ which may be summarized by $[n_1,k_1]$ when no confusion is possible.
The integer $q$ is the code alphabet size, $n_1$ is the code length, $k_1$ is the code dimension
 as a vector subspace of $\F_q^{n_1}$, and $d_1$ is the minimum Hamming distance of $C_1$.
Similarly, the row code $C_2$ is a linear block code with parameters $[n_2, k_2, d_2]_q$.
Let $G_1$ and $G_2$ be two matrices of size $k_1 \times n_1$ and $k_2 \times n_2$
containing in their row a basis for the subspaces $C_1$ and $C_2$ respectively.
From the two generator matrices $G_1$ and $G_2$ a product code $C_P$ is constructed
as a subspace of $\F_q^N$ with a generator matrix $G_P=G_1 \otimes G_2$, where $N=n_1n_2$
and $\otimes$ denotes the Kronecker product \cite{MacWilliams1977}. $C_P$ has dimension $K=k_1k_2$
and minimum Hamming distance $d_P=d_1d_2$.
$C_1$ and $C_2$ are also called component codes, this is a terminology from concatenated codes.
In \cite{Tanner1981} and \cite{Boutros2008}, 
vertices associated to component codes are called subcode nodes.

A linear $[n,k,d]_q$ code is said to be MDS, i.e. Maximum Distance Separable, 
if it satisfies $d=n-k+1$. Binary MDS codes are the trivial repetition codes
and the single parity-check codes. In this paper, we only consider non-trivial
non-binary MDS codes where $q > n > 2$. 
A linear code over $\F_q$ of rate $R=k/n$ is said to be MDS diversity-wise 
or MDS in the block-fading/block-erasure sense if it achieves a diversity order 
$L$ such that $L=1+\lfloor M(1-R)\rfloor$, 
where $M$ is the number of degrees of freedom in the channel.
The right term $1+\lfloor M(1-R)\rfloor$ 
is known as the block-fading Singleton bound \cite{Malkamaki1999}\cite{Knopp2000}.
In this paper, $M$ shall denote the number of colors, i.e. the palette size of an edge coloring.
Assume that code symbols are partitioned into $M$ sub-blocks,
a code is said to attain diversity $L$ if it is capable of correct decoding
when $L-1$ sub-blocks are erased by the channel. The reader should
refer to \cite{Tse2005book}, chapter 3, for an exact definition of diversity
on fading channels with additive white Gaussian noise.

A product code shall be represented by a non-compact graph $\G=(V_1, V_2, E)$.
$\G$ is a complete bipartite graph where $V_1$ is the set of $n_2$ right vertices,
$V_2$ is the set of $n_1$ left vertices, and $E$ is the set of $N$ edges representing
the code symbols. A compact graph $\G^c$ will also be introduced in the next section
with $\G^c=(V_1^c, V_2^c, E^c)$. The number of edges (also called super-edges)
in the compact graph is $|E^c|=N^c$. A super-edge is equivalent to a super-symbol
that represents $(n_1-k_1)(n_2-k_2)$ symbols from $\F_q$. 
The ensemble of edge colorings is denoted $\Phi(E)$ and $\Phi(E^c)$
for $\G$ and $\G^c$ respectively. An edge coloring will be denoted by $\phi$.
Given $\phi$, the rootcheck order of an edge is $\rho(e)$. The greatest
$\rho(e)$ among all edges will be referred to as $\rho_{max}(\phi)$.
The number of edges $e$ satisfying $\rho(e)=1$ is $\eta(\phi)$,
this is the number of good edges and will be processed by the DECA algorithm
in Section~\ref{sec_edge_coloring}. The DECA parameter $\aleph$ shall
represent the number of edges to be mutated, i.e. those edges being chosen in the population
of bad edges satisfying $\rho(e)>1$.

Under iterative row-column decoding, the rootcheck order $\rho$ is equal
to the number of decoding iterations required to solve the edge value (or the symbol
associated to that edge). In this paper, one decoding iteration is equivalent
to decoding all rows or decoding all columns. A sequence of $n_1$ row decoders
followed by a sequence of $n_2$ column decoders is counted as two decoding iterations.

We give now a general definition of a stopping set. 
A detailed study is found in Section \ref{sec_stopping_sets}.
The notion of a stopping set is useful for iterative decoding in presence of erasures \cite{Di2002}.
\begin{definition}
\label{def_stopping_set}
Let $C[n,k]_q$ be a linear code. Assume that the symbols of a codeword 
are transmitted on an erasure channel.
The decoder $\mathcal{D}$ is using some deterministic decoding method. 
Consider a set $\mathcal{S}$ of $s$ fixed positions  
$i_1, i_2, \ldots, i_s$ where $1 \le i_j \le n$. 
The set $\mathcal{S}$ is said to be a Stopping Set 
if $\mathcal{D}$ fails in retrieving the transmitted codeword when all
symbols on the $s$ positions given by $\mathcal{S}$ are erased.
\end{definition}

This paper focuses on stopping sets of a product code under iterative algebraic row-column decoding,
i.e. referred to as type II stopping sets. The number of stopping sets of size $w$ is $\tau_w$.
The rectangular support $\cR(\cS)$ of a stopping set $\cS$ can be seen as the smallest rectangle 
containing $\cS$. After excluding rows and columns not involved in $\cS$, the rectangular 
support has size $\ell_1 \times \ell_2$ where $w=|\cS| \le \ell_1\ell_2$.
The word error performance of $C_P$ shall be estimated on erasure channels, 
$P_{ew}^{ML}$ is the word error probability under Maximum Likelihood decoding
and $P_{ew}^{\G}$ is the word error probability under iterative row-column decoding.
Three erasure channels are considered: 1- The Symbol Erasure Channel, $SEC(q,\epsilon)$,
where code symbols are independently erased with a probability $\epsilon$, 
2- The Color Erasure Channel, $CEC(q,\epsilon)$, where all symbols associated to the same color
are block-erased with a probability $\epsilon$. On the $CEC(q,\epsilon)$, block-erasure
events are independent from one color to another. 3- The unequal probability Symbol Erasure Channel,
$SEC(q,\{\epsilon_i\}_{i=1}^M)$, where symbol erasures are independent but their erasure probability 
varies from one color to another.

\section{Graph representations for diversity \label{sec_graph_representations}}
Efficient graph representation of codes was established by Tanner for different types
of coding structures \cite{Tanner1981}. Bounds on the code parameters 
and iterative decoding algorithms were also proposed for codes on graphs \cite{Tanner1981}. 
In this paper, we study the edge coloring of a product code graph, where edges represent code symbols.
As shown below, the original graph for a product code is too complex, i.e. it leads to a large ensemble
of colorings. Hence, we introduce a compact graph where symbols are grouped together with the same color
in order to reduce the size of the coloring ensemble. The compact graph also has another asset: 
grouping parity symbols together renders check nodes similar to parity-check nodes found
in standard low-density parity-check codes~\cite{Gallager1963}~\cite{Richardson2008}. 

\subsection{Non-compact graph \label{sec_graph_representations_sub1}}
Consider a product code $C_1[n_1,k_1]_q \otimes C_2[n_2,k_2]_q$ where $C_1$ is the column code
and $C_2$ is the row code. The product code is defined over the finite field $\F_q$ and has length $N$
and dimension $K$ given by \cite{MacWilliams1977}
\begin{equation}
N=n_1 n_2, ~~~~K=k_1 k_2.
\end{equation}
Each code symbol simultaneously 
belongs to one row and to one column. Product codes
studied in this paper are regular, 
in the sense that all columns are codewords of $C_1$ and all rows are codewords
of $C_2$. The graph of $C_1[n_1,k_1]_q \otimes C_2[n_2,k_2]_q$ is built as follows. We use the same
terminology as in \cite{Richardson2008}:
\begin{itemize}
\item $n_1$ check nodes are drawn on the left. A left check node represents the coding constraint
which states that a row belongs to $C_2$. The $n_1$ left check nodes are referred to as $C_2$ check nodes,
or row check nodes, or equivalently left vertices.
\item $n_2$ check nodes are drawn on the right. A right check node represents the coding constraint
which states that a column belongs to $C_1$. The $n_2$ right check nodes are referred to as $C_1$ check nodes,
or column check nodes, or equivalently right vertices.
\item An edge is drawn between a left vertex and right vertex. 
It represents a code symbol located on the row
of the left vertex and on the column of the right vertex. The code symbol belongs to $\F_q$.
\end{itemize}

\begin{figure}[!h]
\begin{center}
\includegraphics[width=0.6\columnwidth]{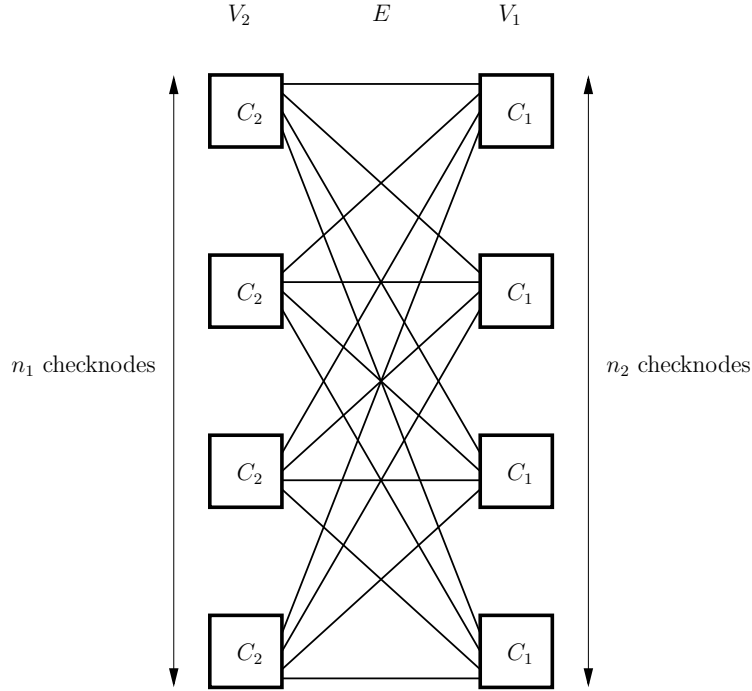}
\caption{Non-compact bipartite graph $\G=(V_1, V_2, E)$ of a product code $[4,2]^{\otimes 2}$, 
i.e. $n_1=n_2=4$, $k_1=k_2=2$, $|V_1|=|V_2|=4$, and $|E|=N=n_1n_2=16$ edges representing 16 symbols in~$\F_q$.\label{complete_graph4_4}}
\end{center}
\end{figure}

In summary, the product code graph $(V_1, V_2, E)$ is a complete biregular bipartite graph built from $n_1$ left vertices, $n_2$ right vertices,
and $N=|E|=n_1 n_2$ edges representing code symbols. The left degree is $n_2$ and the right degree is $n_1$.
Irregular product codes can be found in \cite{Alipour2012}. 
Our paper is restricted to regular product codes.
Figure~\ref{complete_graph4_4} shows the bipartite graph of
a square regular symmetric product code $[4,2] \otimes [4,2]$. The graph structure reveals $n_1$, $n_2$,
and $N=n_1n_2$. The dimensions $k_1$ and $k_2$ of the component codes have no effect on the number of vertices
and edges in the product code graph. Indeed,  a $[4,3] \otimes [4,3]$ code can also be defined by the graph
in Figure~\ref{complete_graph4_4}. The role of the dimensions $k_1$ and $k_2$ is played within the check constraints
inside left and right vertices. Similarly, the size of the finite field defining the code cannot be revealed
from the graph structure, i.e. the product code graph does not depend on $q$.\\

\begin{definition}
The non-compact graph $\G=(V_1, V_2, E)$ for a $[n_1,k_1] \otimes [n_2,k_2]$ product code is a 
complete bipartite graph with $n_1=|V_2|$ left vertices and $n_2=|V_1|$ right vertices.
\end{definition}

\subsection{Compact graph \label{sec_graph_representations_sub2}}
In \cite{Boutros2008} where the diversity of binary product codes was considered, 
vertices of the non-compact graph were grouped together into super-vertices (or supernodes)
because the different channel states lead to multiple classes of check nodes as in root-LDPC codes \cite{Boutros2010}.
To render a graph-encodable code, supernodes in \cite{Boutros2008} were made by putting $n-k$
nodes together for a $[n,k]$ component code. Also, $n-k$ is not necessarily a divisor of $n$.\\

\begin{definition}
The compact graph $\G^c=(V^c_1, V^c_2, E^c)$ for a $[n_1,k_1] \otimes [n_2,k_2]$ product code is a complete bipartite graph
with $\lceil \frac{n_1}{n_1-k_1} \rceil=|V^c_2|$ left vertices and $\lceil \frac{n_2}{n_2-k_2} \rceil=|V^c_1|$ right vertices.
\end{definition}
~\\
From the above definition, the number of edges in the compact graph $\G^c$ is found to be
\begin{equation}
\label{equ_Nc}
N^c=|E^c|=\left\lceil \frac{n_1}{n_1-k_1} \right\rceil \times \left\lceil \frac{n_2}{n_2-k_2} \right\rceil.
\end{equation}
Assuming that $(n_1-k_1)$ divides $n_1$ and $(n_2-k_2)$ divides $n_2$, 
a left check node in $\G^c$ is equivalent to $n_2-k_2$ row constraints 
and a right check node
in $\G^c$ is equivalent to $n_1-k_1$ column constraints. 
An edge in the compact graph carries $(n_1-k_1) \times (n_2-k_2)$ code symbols.
To avoid confusion between edges of $\G$ and $\G^c$, we may refer to those in $\G^c$
as super-edges or equivalently as super-symbols. 
If $n_i$ is not multiple of $n_i-k_i$, then the last row or column supernode will contain less
than $n_i-k_i$ check nodes. Figure~\ref{compact_graph2_2} depicts the compact graph of the $[4,2]^{\otimes 2}$ product code. All $[n,n/2]^{\otimes 2}$ product codes have a compact graph identical
to that of $[4,2]^{\otimes 2}$, for all $n \ge 2$, $n$ even. 

\begin{figure}[!h]
\begin{center}
\includegraphics[width=0.65\columnwidth]{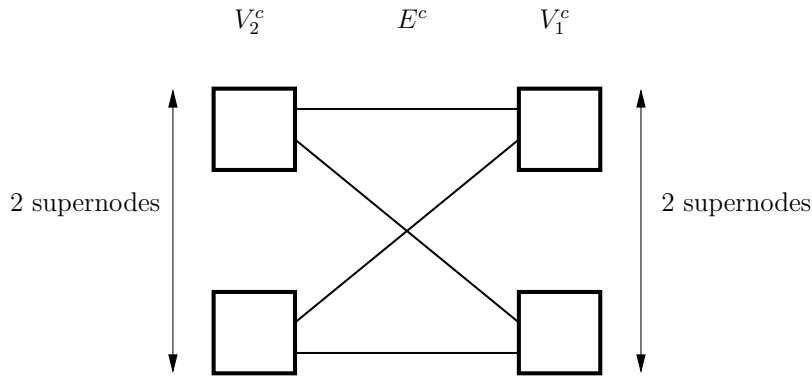}
\caption{Compact bipartite graph $\G^c=(V^c_1, V^c_2, E^c)$ with two supernodes on
each side for the product code $[n,n/2]^{\otimes 2}$, $|V^c_1|=|V^c_2|=2$ and $|E^c|=N^c=4$ supersymbols. Each super-symbol (i.e. super-edge) contains $n^2/4$ symbols (i.e. edges). \label{compact_graph2_2}}
\end{center}
\end{figure}

\subsection{Diversity and codes on graphs \label{sec_graph_representations_sub3}}
From a coding point of view, diversity is the art of creating many replicas of the same information.
From a channel point of view, diversity is the number of degrees of freedom available while transmitting information.
In distributive storage, independent failure of individual machines is modeled by independent erasures of code symbols,
while the outage of a cluster of machines is modeled as block erasures of code symbols.
Assuming a storage domain with a large set of machines partitioned into $M$ clusters, 
diversity of distributed coding is defined
as follows:\\

\begin{definition}
Consider a product code $C_P$ defined over $\F_q$. Assume that symbols
are given $M$ different colors. Erasing one color is equivalent to erasing all symbols
having this color. The code is said to achieve a diversity $L$ if it is capable of filling all erasures
after erasing $L-1$ colors. The code is full-diversity when $L=M$.
\end{definition}
~\\
The integer $L$ may also be called the diversity order. For Gaussian channels with fading, 
the diversity order appears as the slope of the error probability, 
i.e. $L = \lim_{\gamma \rightarrow \infty} -\frac{\log P_e}{\log \gamma}$ \cite{Boutros2010}.
In the above definition, a cluster has been replaced by a color. We will use this terminology
throughout the paper. Notice that coloring symbols is equivalent to edge coloring
of the product code graph. The number of edges is $N$ in the non-compact graph and $N^c$ in the compact graph.
In the sequel, all colorings are supposed to be perfectly balanced, i.e. $M$ divides
both $N$ and $N^c$ and the number of edges having the same color is $N/M$ and $N^c/M$ for
the non-compact graph and the compact graph respectively. More formally, our edge coloring
is defined as follows: an edge coloring $\phi$ of $\G=(V_1, V_2, E)$ 
is a mapping associating one color to every edge in $E$,
\begin{equation}
\phi: E \rightarrow \{1, 2, \ldots, M \},
\end{equation}
such that $|\phi^{-1}(i)|=N/M$ for $i=1 \ldots M$, where $\phi^{-1}(i)$ is the inverse image of $i$.
Similarly, $\phi: E^c \rightarrow \{1, 2, \ldots, M \}$ for $\G^c=(V^c_1, V^c_2, E^c)$
and $|\phi^{-1}(i)|=N^c/M$. The set of such mappings for $\G$ and $\G^c$
is denoted $\Phi(E)$ and $\Phi(E^c)$ respectively.
\\
Consider a coloring $\phi$ in $\Phi(E^c)$. It can be embedded into $\Phi(E)$ by
copying the color of a super-edge to its associated $(n_1-k_1)\times(n_2-k_2)$ edges in $E$.
Thus, let $\Phi(E^c \rightarrow E)$ be the subset of colorings in $\Phi(E)$ obtained by 
embedding all colorings of $\Phi(E^c)$ into $\Phi(E)$. We have
\begin{equation}
\Phi(E^c\rightarrow E) \subset \Phi(E) ~~~\text{and}~~~ |\Phi(E^c \rightarrow E)|=|\Phi(E^c)|.
\end{equation}

The size of the edge coloring ensembles $\Phi(E)$ and $\Phi(E^c)$ is obviously not the same
when $N^c < N$, 
which occurs for both row and column component codes not equal to single parity-check codes.
Indeed, when a palette of size $M$ is used to color edges, the total 
number of colorings of $E$ is
\begin{equation}
\label{equ_PhiE}
|\Phi(E)|=\frac{N!}{((N/M)!)^M}.
\end{equation}
This number for the compact graph is 
\begin{equation}
\label{equ_PhiEc}
|\Phi(E^c)|=\frac{N^c!}{((N^c/M)!)^M}.
\end{equation}
As an example, for the $[12,10]^{\otimes 2}$ code and $M=4$, 
there are $2\cdot 10^{83}$ edge colorings for the non-compact
graph and $2\cdot 10^{19}$ edge colorings for the compact graph.
It is clear that the construction of product codes for diversity is much easier
when based on $\G^c=(V^c_1, V^c_2, E^c)$ because its edge coloring ensemble is smaller. Furthermore,
as described below, vertices in $\G^c$ act in a way similar to standard LDPC check nodes
making the design very simple. Furthermore, we will see in Section~\ref{sec_stopping_sets} 
that edge colorings of the compact graph render 
larger stopping sets than colorings of the non-compact graph.\\

The diversity order $L$ attained by a code can never exceed $M$, the latter being 
the diversity from a channel point of view. A tighter upper bound of $L$ 
showing the rate-diversity tradeoff is the block-fading Singleton bound. 
The Singleton bound for the maximal achievable diversity order is valid
for all types of non-ergodic channels, including block-erasure and block-fading channels.
The block-fading Singleton bound states that~\cite{Knopp2000}~\cite{Malkamaki1999}
\begin{equation}
\label{equ_singleton_bound}
L \le 1+\lfloor M(1-R)\rfloor,
\end{equation}
where $R=K/N$ is the coding rate of the product code. 
Codes satisfying the equality in the above Singleton bound are referred to as 
diversity-wise MDS or block-fading MDS codes.
From (\ref{equ_singleton_bound}), we deduce that $R \le 1/M$ if $L=M$ (full-diversity coding). 
For example, we get $R \le 1/2$ 
with an edge coloring using $L=M=2$ colors and $R \le 1/4$ for $L=M=4$ colors.
The coding rate can exceed $1/M$ when $L<M$ in applications where full diversity is not
mandatory. An example suited to distributed storage is an edge coloring
with a palette of $M=4$ colors, a diversity $L=2$, and $R \le 3/4$.

\subsection{Rootcheck nodes and root symbols \label{sec_graph_representations_sub4}}
In a way similar to root-LDPC codes and product codes built for block-fading channels
\cite{Boutros2008}\cite{Boutros2010},
we introduce now the notion of root symbols and root-check nodes in product codes
to be designed for distributive storage. A linear $[n,k]_q$ code with parity-check matrix $H$
can fill $n-k$ erasures at positions where the columns of $H$ are independent. 
These $n-k$ symbols correspond to $n-k$ separate edges in the non-compact graph
and to a unique edge (supersymbol) in the compact graph. Therefore, for simplicity,
we start by defining a root supersymbol in the compact graph 
where supernodes are equivalent to standard LDPC parity-check nodes.

\begin{definition}
\label{def_root_supersymbol}
Let $\G^c$ be a compact graph of a product code, let $\phi$ be a given edge coloring,
and let $e \in E^c$ be a supersymbol. $e$ is a {\em root supersymbol} with respect to $\phi(e)$
if it admits a neighbor vertex $\upsilon$, $\upsilon \in V^c_1$ or $\upsilon \in V^c_2$,
such that all adjacent edges $f$ in $\upsilon$ satisfy $\phi(f) \ne \phi(e)$.
\end{definition}
~\\
In Definition~\ref{def_root_supersymbol}, if $\upsilon \in V^c_1$ then $e$ is a root supersymbol
thanks to the product code column to which it belongs, i.e. $e$ can be solved in one iteration
by its column component code when the color $\phi(e)$ is erased. Likewise, $e$ is protected
against erasures by its row component code if $\upsilon \in V^c_2$ in the previous definition. 
Finally, a root supersymbol
may be doubly protected by both its row and its column if both right and left neighbors $\upsilon_1 \in V^c_1$
and $\upsilon_2 \in V^c_2$ satisfy the condition of Definition~\ref{def_root_supersymbol}.

\begin{definition}
\label{def_root_symbol}
Let $\G$ be a non-compact graph of a product code, let $\phi$ be a given edge coloring,
and let $e \in E$ be a symbol. $e$ is a {\em root symbol} with respect to $\phi(e)$
if it admits a neighbor vertex $\upsilon$ such that:\\
$\phi(f)=\phi(e)$ for at most $n_2-k_2-1$ adjacent edges $f$ if $\upsilon \in V_1$, or\\ 
$\phi(f)=\phi(e)$ for at most $n_1-k_1-1$ adjacent edges $f$ if $\upsilon \in V_2$.
\end{definition}
~\\
As mentioned in the paragraph before Definition~\ref{def_root_supersymbol}, Definition~\ref{def_root_symbol}
implies that the $n_i-k_i$ root symbols with the same color should belong to positions of independent columns
in the parity-check matrix of the component code $C_i$. This constraint automatically disappears
for MDS component codes since any set of $n_i-k_i$ columns of $H_i$ has full rank.

\subsection{The rootcheck order in product codes \label{sec_graph_representations_sub5}}
Not all symbols of a product code are root symbols. Under iterative row-column decoding on channels with 
block erasures, some symbols may be solved in two decoding iterations or more. Some set of symbols
may never be solved and are referred to as stopping sets \cite{Di2002}\cite{Schwartz2006}\cite{Rosnes2008}.
Our study is restricted to erasing the symbols of one color out of $M$.
Hence, the rest of this paper is restricted to double diversity, $L=2$.
Absence of diversity is equivalent to $L=1$.
We establish now the root order $\rho$ of a symbol. For root symbols satisfying definitions
\ref{def_root_supersymbol} and \ref{def_root_symbol}, the root order is $\rho=1$.
For symbols that can be solved after two decoding iterations, we set $\rho=2$. The formal
definition of the root order $\rho$ can be written in the following recursive manner (for $\rho \ge 2$).\\

\begin{definition}
\label{def_root_order}
Let $\G^c$ be a compact graph of a product code, let $\phi \in \Phi(E^c)$ be an edge coloring,
and let $e \in E^c$ be a super-symbol. $e$ has {\em root order} $\rho(e)=\min(\rho_1, \rho_2)$ where:\\
1- Let $\upsilon_1 \in V^c_1$ be the column neighbor vertex of $e$. 
$\forall f$ adjacent to $e$ in $\upsilon_1$ and $\phi(f)=\phi(e)$, we have $\rho(f)<\rho_1$.\\
2- Let $\upsilon_2 \in V^c_2$ be the row neighbor vertex of $e$. 
$\forall f$ adjacent to $e$ in $\upsilon_2$ and $\phi(f)=\phi(e)$, we have $\rho(f)<\rho_2$.
\end{definition}
~\\
The previous definition implies that $\rho(e)=1$ if there exists no adjacent edge with the same color.
Also, for an edge $e$ that does not admit a finite $\rho(e)$, we set $\rho(e)=\infty$. 
When color $\phi(e)$ is erased, symbols belonging to the so-called stopping sets can never be solved
(even after an infinite number of decoding iterations) and hence their root order is infinite.
In the next section we review stopping sets as known in the literature and we study new stopping sets
for product codes based on MDS components under iterative algebraic decoding. 
Definition \ref{def_root_order} can be rephrased to make it suitable for the non-compact
graph $\G$. 
We pursue this section to establish an upper bound of the largest finite root order valid for all edge colorings $\phi$. 

\begin{theorem}
\label{th_root_order_max}
Let $C_P$ be a product code $[n_1,k_1] \otimes [n_2,k_2]$ with a compact graph $\G^c=(V^c_1, V^c_2, E^c)$.
$\forall \phi \in \Phi(E^c)$ and $\forall e \in E^c$ we have:\\
Case 1: $\nexists f \in E^c$ such that $\phi(f)=\phi(e)$ and $\rho(f)=\infty$, then 
\[
1 \le \rho(e) \le \left\lceil \frac{N^c}{2M} \right\rceil = \rho_u.
\]
Define the minimum number of good edges,
\[
\eta_{min}(\phi)= \min_{i=1 \ldots M} |\{ f \in E^c : \phi(f)=i, \rho(f)=1 \}|.
\]
Then, in Case 1,
\begin{equation}
\label{equ_rho_etamin}
2 \rho(e) + \eta_{min}(\phi) -3 \le \left\lceil \frac{N^c}{M} \right\rceil.
\end{equation}
Case 2: $\exists f \in E^c$ such that $\phi(f)=\phi(e)$ and $\rho(f)=\infty$, then 
\[
\rho(e)=\infty~~~or~~~ 1 \le \rho(e) \le \left\lceil \frac{N^c}{M} \right\rceil - 4,
\]
where $N^c=|E^c|$ is given by (\ref{equ_Nc}).\\
\end{theorem}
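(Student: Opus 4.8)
The plan is to reason about how the root order propagates along a single color class under iterative row-column decoding. Fix a coloring $\phi \in \Phi(E^c)$ and a color $i=\phi(e)$. Since the coloring is balanced, the color class $\phi^{-1}(i)$ has exactly $N^c/M$ super-edges, and every one of them sits in the complete bipartite compact graph $\G^c$. The key observation is that the root order of a super-edge $f$ with color $i$ is governed, via Definition~\ref{def_root_order}, entirely by the other super-edges of color $i$ that share a row supernode or a column supernode with $f$; super-edges of other colors are irrelevant once color $i$ is erased. So I would first restrict attention to the subgraph induced by $\phi^{-1}(i)$ and argue that $\rho(f)$ is essentially the ``distance'' at which $f$ becomes solvable once we iteratively peel off super-edges that are alone in their row or alone in their column (within color $i$).

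The first main step is Case~1. Here no super-edge of color $i$ has infinite root order, so the peeling process terminates and eventually removes all $|\phi^{-1}(i)|=N^c/M$ super-edges. I would order the super-edges of color $i$ by increasing root order, $\rho_1 \le \rho_2 \le \cdots$, and show that passing from root order $\rho$ to root order $\rho+1$ requires consuming at least two fresh super-edges of color $i$: to have $\rho(e)=\rho+1$ via its row (resp. column) neighbor, every color-$i$ super-edge sharing that row (resp. column) must already have root order $\le \rho$, and for $e$ not to have had root order $\le \rho$ itself there must be at least one such super-edge on the row and at least one on the column with strictly smaller root order that is not $e$. Counting greedily, the number of color-$i$ super-edges with root order $\le \rho$ is at least $2\rho-1$ (one edge of order $1$, then two new edges for each increment), which forces $2\rho - 1 \le N^c/M$, i.e. $\rho(e) \le \lceil N^c/(2M) \rceil = \rho_u$. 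For the refinement~\eqref{equ_rho_etamin}, I would sharpen the count: the color class contains at least $\eta_{min}(\phi)$ super-edges of root order exactly $1$ by definition of $\eta_{min}$, and if $\rho(e)=\rho$ then in addition to those $\eta_{min}(\phi)$ order-one edges we need at least $2(\rho-1)$ further edges (two per increment from $1$ up to $\rho$, counting conservatively so as not to double-count the order-one edges already tallied), giving $\eta_{min}(\phi) + 2(\rho-1) - 1 \le \lceil N^c/M \rceil$, which is~\eqref{equ_rho_etamin} after rearranging (the $-3 = -2-1$ bookkeeping term accounts for $e$ itself possibly being counted and for the off-by-one in the greedy chain).

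The second main step is Case~2. Now there is a color-$i$ super-edge $f_\infty$ with $\rho(f_\infty)=\infty$; by Definition~\ref{def_root_order} this means $f_\infty$ belongs to a stopping set, and in fact such a stopping set occupies at least two rows and at least two columns, so it contains at least four color-$i$ super-edges that are never solved. These at least four super-edges are removed from the pool available to build finite root-order chains. Re-running the Case~1 counting argument but with only $N^c/M - 4$ usable color-$i$ super-edges (those not permanently stuck) gives, for any $e$ with finite root order, $\rho(e) \le \lceil N^c/M \rceil - 4$; and of course $\rho(e)$ may instead be $\infty$.

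I expect the main obstacle to be the precise bookkeeping in the greedy counting — in particular, making rigorous the claim that each increment of the root order consumes \emph{two} previously uncounted color-$i$ super-edges (one on the critical row, one on the critical column) without any overlap that would let a single edge do double duty, and handling boundary cases where a supernode contains fewer than $n_i-k_i$ check nodes (the non-divisible case) so that a color-$i$ super-edge might be alone in its row for degree reasons rather than coloring reasons. Getting the ceilings and the constants $-3$ and $-4$ to come out exactly, rather than off by one, is where the care is needed; the structural idea (peeling + two-per-level) is straightforward.
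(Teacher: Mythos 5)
Your Case~1 argument and the refinement (\ref{equ_rho_etamin}) follow essentially the same route as the paper: the paper phrases it as a path of $2\rho-1$ same-colored edges whose orders rise from $1$ to $\rho$ and fall back to $1$, while you phrase it as a greedy count of two fresh edges per level; the arithmetic ($2\rho-1$ edges in total, of which exactly two have order one, hence $2\rho+\eta_1-3\le\lceil N^c/M\rceil$) is identical, and both arguments sit at the same, somewhat informal, level of rigor regarding distinctness of the counted edges.

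Case~2, however, has a genuine gap. You propose to ``re-run the Case~1 counting argument'' on the $\lceil N^c/M\rceil-4$ edges not trapped in the stopping set. Taken literally, that counting gives $2\rho(e)-1\le\lceil N^c/M\rceil-4$, i.e.\ roughly $\rho(e)\le\lceil N^c/M\rceil/2-2$ --- a bound that is both different from the one you state and, more importantly, false. The paper's second example (the $[14,12]\otimes[16,14]$ code with $M=4$, so $\lceil N^c/M\rceil=14$) exhibits a finite order $\rho=10=\lceil N^c/M\rceil-4$, which violates $2\rho-1\le 10$. What you are missing is that the two-sided descent underlying Case~1 is no longer forced once infinite-order edges exist: an edge $e$ whose row (say) contains a stopping-set edge of the same color has $\rho_2=\infty$, so its finite order $\rho(e)=\min(\rho_1,\infty)=\rho_1$ is certified by its column alone. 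The worst-case chain is therefore one-sided, $e_1,e_2,\ldots,e_\rho$ with orders $1,2,\ldots,\rho$, consuming only $\rho$ edges (one per level) rather than $2\rho-1$; adding the at least four edges of the smallest stopping set gives $\rho+4\le\lceil N^c/M\rceil$, which is the stated bound. As written, your Case~2 either proves a false strengthening or does not reach the claimed inequality.
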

\begin{proof}
Case 1 corresponds to a product code with diversity $L=2$, for a given color $\phi(e)$,
which is capable of solving all symbols when that color is erased. 
The graph has no infinite root order symbols. 
$\rho$ is recursively built by starting from $\rho=1$ following two paths in the graph
until reaching a common edge $e$ that has two neighboring vertices with edges of order $\rho(e)-1$.
There are up to $\lceil N^c/M \rceil$ edges, including $e$, having color equal to $\phi(e)$. 
The largest $\rho(e)$ is attained in the middle of the longest path of length $\lceil N^c/M \rceil$, 
hence $2 \rho(e)-1 \le \lceil N^c/M \rceil$ which is translated into the stated result for Case 1.
An illustrated instance is given for the reader in Example~\ref{ex_path_12_10}.
Back to the path of length $2 \rho(e)-1$ ending with edges of order $1$ on both sides,
if the population of order $1$ edges is $\eta_1$ for the color $\phi(e)$, 
then the path can only use a maximum of $\lceil N^c/M \rceil-(\eta_1-2)$ edges.
We get the inequality $2 \rho(e)-1 \le \lceil N^c/M \rceil-(\eta_1-2)$.
By plugging $\eta_{min}(\phi)$ instead of $\eta_1$, 
this inequality becomes independent from the particular color.
The stated inequality in (\ref{equ_rho_etamin}) is obtained after grouping $\rho(e)$ 
and $\eta_{min}(\phi)$ on the left side.
\\
Case 2 corresponds to bad edge coloring where the product code does not have double diversity, i.e. 
stopping sets do exist for the color $\phi(e)$. The order of $e$ may be infinite if $e$ is involved
in a stopping set with another edge $f$ having the same color. Otherwise, consider the smallest
stopping set of size four symbols (the smallest cycle in $\G^c$ with edges of color $\phi(e)$),
then there remains $\lceil N^c/M \rceil-4$ edges of color $\phi(e)$. 
A path of length $\lceil N^c/M \rceil-4$ starting with $\rho=1$
and ending at $\rho=\infty$ may exist. 
The largest finite order in this path before reaching the stopping set is $\rho=\lceil N^c/M \rceil-4$.
\end{proof}

\begin{corollary}
\label{cor_rho_max}
Let $C_P$ be a product code $[n_1,k_1] \otimes [n_2,k_2]$ with a compact graph $\G^c$.
Let $\phi \in \Phi(E^c)$ be an edge coloring.
We define 
\begin{equation}
\rho_{max}(\phi)=\max_{e \in E^c} \rho(e).
\end{equation}
$C_P$ attains double diversity under iterative row-column decoding if and only if $\rho_{max}(\phi)<\infty$.
In this case, we say that $\phi$ is a double-diversity coloring
and $\forall e \in E^c$, $e$ can be solved after at most $\rho_{max}$ decoding iterations where
$\rho_{max}(\phi)\le \rho_u$.\\
\end{corollary}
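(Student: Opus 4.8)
The plan is to prove Corollary~\ref{cor_rho_max} directly from Definition~\ref{def_root_order} and Theorem~\ref{th_root_order_max}, treating the "if and only if" as two implications and then reading off the iteration bound from Case~1 of the theorem.

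First I would handle the direction "$\rho_{max}(\phi)<\infty \Rightarrow$ double diversity." By Definition~\ref{def_root_order}, every super-symbol $e$ with $\rho(e)<\infty$ is solved after exactly $\rho(e)$ iterations of row-column decoding: one proves this by induction on $\rho(e)$, the base case $\rho(e)=1$ being precisely Definition~\ref{def_root_supersymbol} (an isolated-color edge is recovered by its neighbor vertex in one iteration), and the inductive step following because all same-color neighbors of $e$ in one of its two incident supernodes have strictly smaller root order, hence are already solved, so that supernode's parity constraint recovers $e$ at the next iteration. If $\rho_{max}(\phi)=\max_{e}\rho(e)<\infty$, then after $\rho_{max}(\phi)$ iterations every erased super-symbol of the erased color has been filled, so the decoder succeeds when any one color is erased; by the definition of diversity in Section~\ref{sec_graph_representations_sub3} this is exactly $L=2$. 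Conversely, for "double diversity $\Rightarrow \rho_{max}(\phi)<\infty$," I would argue contrapositively: if $\rho(e)=\infty$ for some $e$, then $\phi(e)$ is a color whose erasure leaves $e$ unsolved even after infinitely many iterations (the remark after Definition~\ref{def_root_order} identifies such $e$ with a stopping set), so the code fails on that erasure pattern and $L=1$, not $2$.

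Next, the quantitative part. Assuming $\rho_{max}(\phi)<\infty$, no edge has infinite root order, so in particular for every color $i$ there is no $f$ with $\phi(f)=i$ and $\rho(f)=\infty$; this is exactly the hypothesis of Case~1 of Theorem~\ref{th_root_order_max}, applied to any $e$. Therefore $\rho(e)\le \lceil N^c/(2M)\rceil=\rho_u$ for all $e\in E^c$, whence $\rho_{max}(\phi)\le\rho_u$. Combining with the first paragraph, every super-symbol is recovered within $\rho_{max}(\phi)\le\rho_u$ iterations, which is the claimed statement. The terminology "$\phi$ is a double-diversity coloring" is then just a name attached to the established equivalence.

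The only mildly delicate point — and the one I would write out carefully rather than assert — is the induction establishing that finite root order equals the exact number of iterations to solve the edge, since Definition~\ref{def_root_order} is recursive and one must check that the two-path/$\min(\rho_1,\rho_2)$ structure genuinely corresponds to "solved by whichever of its row or column constraint fires first." Everything else is bookkeeping: the forward implication is the induction, the reverse is the stopping-set remark, and the bound $\rho_{max}(\phi)\le\rho_u$ is an immediate invocation of Theorem~\ref{th_root_order_max}. I do not expect any real obstacle; the corollary is essentially a repackaging of the theorem together with the operational meaning of $\rho$ already fixed in Section~\ref{sec_graph_representations_sub5}.
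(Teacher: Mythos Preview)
Your proposal is correct and matches the paper's intent: the paper states this result as a corollary to Theorem~\ref{th_root_order_max} without an explicit proof, treating the equivalence and the bound as immediate from the operational interpretation of $\rho(e)$ (the remark preceding Definition~\ref{def_root_order} that $\rho$ equals the number of decoding iterations, and that $\rho(e)=\infty$ exactly on stopping sets) together with Case~1 of the theorem. What you have written is precisely the argument the paper leaves implicit, and your identification of the induction on $\rho(e)$ as the one step worth spelling out is apt.
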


For colorings in $\Phi(E)$, we extend the same definition as in Corollary~\ref{cor_rho_max}
and we say that $\phi \in \Phi(E)$ is double-diversity if all edges have a finite rootcheck order.
The parameter $\rho_{max}$ is important in practical applications to bound from above 
the amount of conveyed information within a network 
(whether it is a local-area or a wide-area network).
In fact, in coding for distributed storage, the locality of a product code per decoding iteration is $\max(n_1, n_2)$ in $\G$ under algebraic decoding of its row and column components. 
Here, the locality is the number of symbols to be accessed in order to repair an erased symbol~\cite{Gopalan2012}.
Locality is $\max(k_1, k_2)$ for MDS components under ML decoding of the product code components.
Finally, for a product code, the information transfer per symbol is bounded from above by 
\begin{equation}
\rho_{max}(\phi) \times \max(n_1, n_2).
\end{equation}
The exact transfer cost to fill all erasures with iterative decoding can be determined by multiplying
each order $\rho$ with the corresponding edge population size. This exact cost may vary in a wide range
from one coloring to another. The DECA algorithm presented in Section~\ref{sec_edge_coloring} dramatically
reduces $\rho_{max}$ by enlarging the edge population with root order $1$.
The interdependence between $\rho$  and the population of order~$1$ was revealed
in inequality (\ref{equ_rho_etamin}). This inequality is useful in intermediate cases
where $\rho_{max}=1$ is not attained, i.e. outside the case where all edges have order $1$.
The influence of the component decoding method on the performance
of a product code via its stopping sets is discussed in Section~\ref{sec_stopping_sets}.\\

\begin{example}
\label{ex_path_12_10}
Consider a $[12,10]^{\otimes 2}$ product code and a coloring $\phi$ with $M=4$ colors. 
The compact graph has $|E^c|=6\times 6$ edges. Instead of drawing $\G^c$,
we draw the $6\times 6$ compact matrix representation of the product code in Fig.~\ref{matrix_max_order6_6}.
Supersymbols corresponding to a color $\phi(e)=1$ are shaded.
Fig.~\ref{matrix_max_order6_6} also shows a path in $\G^c$ 
such that a maximal order $\rho_{max}=\rho_u=5$
is attained for $\phi(e)=1$. If $\phi$ has double diversity 
then $\rho_{max}$ will not exceed $\rho_u=5$ for all colors
$\phi(e) \in \{ 1, 2, \ldots, M\}$. Note that the parameters of this product code are such that 
$N^c/M-4$ is also equal to 5 for a $\phi$ with a diversity defect.
\end{example}

\begin{figure}[!h]
\begin{center}
\includegraphics[width=0.65\columnwidth]{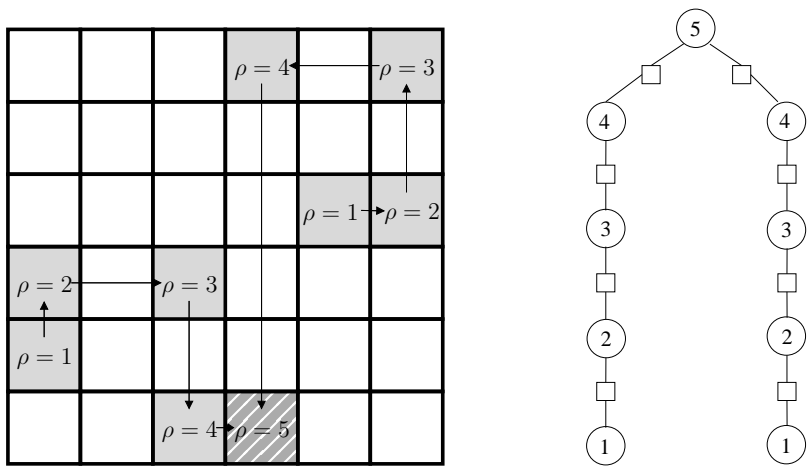}
\caption{Compact matrix (left) and path in compact graph (right) for a product code $[12,10]^{\otimes 2}$
showing a maximal root order of 5.\label{matrix_max_order6_6}}
\end{center}
\end{figure}

\begin{example}
Consider a $[14,12] \otimes [16,14]$ product code and a coloring $\phi$ with $M=4$ colors. 
The compact graph has $|E^c|=7\times 8$ edges. The compact matrix and a path attaining $\rho=10$
are illustrated in Fig.~\ref{matrix_max_order7_8_not_div2}. $\phi$ is chosen
such that the first color has a cycle involving four supersymbols.
Starting from the root supersymbol ($\rho=1$) it is possible to create a path in the graph
such that $\rho=10$ is reached. Note that a double-diversity $\phi$ cannot exceed a root order $\rho_u=7$.
\end{example}

\begin{figure}[!h]
\begin{center}
\includegraphics[width=0.85\columnwidth]{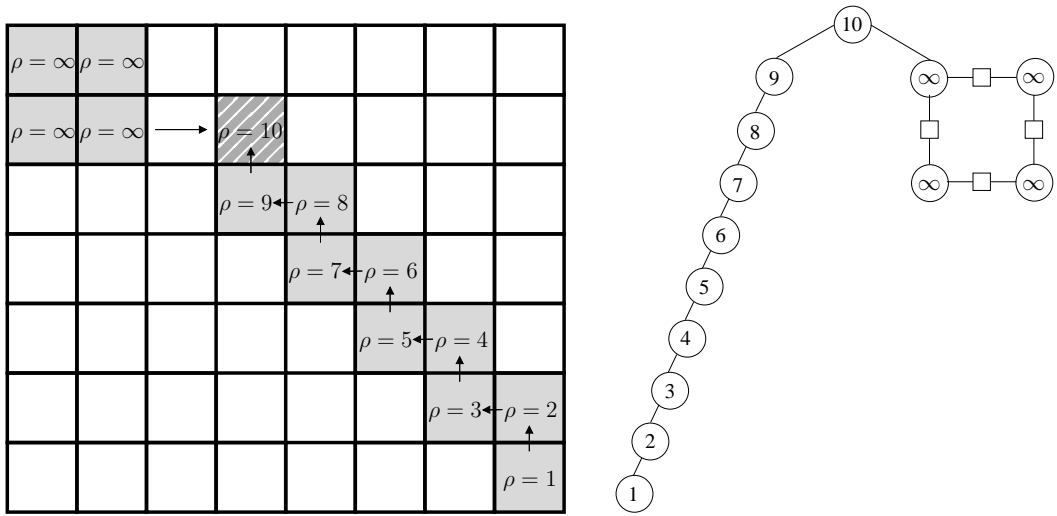}
\caption{Compact matrix (left) and path in compact graph (right) for a product code $[14,12] \otimes [16,14]$
showing a maximal finite root order of 10.\label{matrix_max_order7_8_not_div2}}
\end{center}
\end{figure}

The ideal situation is to construct a product code and its edge coloring in order 
to obtain $\rho(e)=1$ for all edges. We investigate now the conditions on the product code rate
and its components rates in this ideal situation. The analysis based on $\rho_u$
reveals the existence of a trade-off between minimizing the number of decoding iterations
and the valid range of both coding rates for the product code components.\\

Firstly, let us look at the upper bound $\rho_u$ from Theorem~\ref{th_root_order_max}.
Without loss of generality, assume that $n_i-k_i$ divides $n_i$.
Then, we have 
\begin{equation}
\label{equ_Ri_and_Vi}
R_i=1-\frac{n_i-k_i}{n_i}=1-\frac{1}{|V^c_i|}.
\end{equation}
The total coding rate becomes
\begin{equation}
R=R_1R_2=\left(1-\frac{1}{|V^c_1|}\right) \cdot \left(1-\frac{1}{|V^c_2|}\right).
\end{equation}
Using $N^c=|V^c_1| \cdot |V^c_2|$, we get
\begin{equation}
\label{equ_Nc_and_R}
R_1R_2=R_1+R_2-1+\frac{1}{N^c}.
\end{equation}
Finally, from (\ref{equ_Nc_and_R}) and Theorem~\ref{th_root_order_max}, 
the upper bound of the root order for double-diversity edge coloring
of the compact graph can be expressed as
\begin{equation}
\label{equ_rho_u_R1_R2}
\rho_u=\left\lceil \frac{N^c}{2M} \right\rceil
=\left\lceil \frac{1}{2M\times (1+R_1R_2-R_1-R_2)} \right\rceil.
\end{equation}
Fix the product code rate $R$, force the upper bound to $\rho_u=1$, and take $M=4$ colors. 
Then the denominator in (\ref{equ_rho_u_R1_R2}) should be less than $1$
or equivalently $-8R_1^2+(7+8R)R_1-8R~>~0$. 
This second-degree polynomial in $R_1$ is non-negative if and only if
\begin{equation}
\label{equ_R_0.41}
R < \frac{9}{8}-\frac{1}{\sqrt{2}} \approx 0.4178,
\end{equation}
and
\begin{equation}
-\sqrt{64R^2-144R+49} < 16R_1-8R-7 < +\sqrt{64R^2-144R+49}.
\end{equation}
As a result, with a palette of four colors,
(\ref{equ_R_0.41}) tells us that $\rho(e)=1$ for all edges is feasible
for a product code with a rate less than $0.4178$.
It is obvious that (\ref{equ_R_0.41}) 
is a very constraining condition because $\rho_u$ is an upper bound
of $\rho_{max}(\phi)$ for all $\phi \in \Phi(E^c)$. It is worth noting
that $R_1$ and $R_2$ vary in a smaller range when $R$ approaches $\frac{9}{8}-\frac{1}{\sqrt{2}}$,
which corresponds to a product code with balanced components.\\

In Section~\ref{sec_edge_coloring_sub1}, 
we will show unbalanced product codes where a sufficient condition
on the component rates imposes order $1$ to all edges.
The sufficient condition, not based on $\rho_u$, is given by Lemma~\ref{lem_rho=1}.
But before introducing an efficient edge coloring algorithm in Section~\ref{sec_edge_coloring},
we analyze stopping sets in product codes with MDS components in the next section, 
we describe the relationship between stopping sets and the product code graph representation,
and finally we enumerate obvious and non-obvious stopping sets. Stopping sets
enumeration is useful to determine the performance 
of a product code with and without edge coloring.
\section{Stopping sets for MDS components \label{sec_stopping_sets}}
The purpose of this section is to prepare the way for determining the performance
of iterative decoding of non-binary product codes. 
The analysis of stopping sets in a product code will yield a tight upper bound 
of its iterative decoding performance over a channel with independent erasures. 
The same analysis will be useful to accurately estimate the performance under 
edge coloring in presence of block and multiple erasure channels.
\subsection{Decoding erasures \label{sec_stopping_sets_sub1}}

\begin{definition}
An erasure pattern is said to be ML-correctable
if the ML decoder is capable of solving all its erased symbols.
\end{definition}
For an erasure pattern which is not correctable under ML or iterative decoding, 
the decoding process may fill none or some of the erasures and then stay stuck on the remaining ones. 
Before describing the stopping sets of a product code, let us recall some
fundamental results regarding the decoding of its row and column component codes.
The ML erasure-filling capability of a linear code satisfies the following property.
\begin{proposition}
\label{prop_not_MDS}
Let $C[n,k,d]_q$ be a linear code with $q \ge 2$. Assume that $C$ is not MDS and
the $n$ symbols of a codeword are transmitted on an erasure channel.
Then, there exists an erasure pattern of weight greater than $d-1$ that is ML-correctable.
\end{proposition}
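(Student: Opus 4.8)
The plan is to exploit the defining property of a non-MDS code: its minimum distance is strictly below the Singleton bound, i.e. $d \le n-k$. Equivalently, the dual code $C^\perp$ has dimension $n-k$ and, since $C$ is not MDS, some set of $d$ columns of a parity-check matrix $H$ (where $H$ is $(n-k)\times n$ of full row rank) is linearly dependent. I would start by fixing a full-rank parity-check matrix $H$ for $C$. Because $d \le n-k$, we can pick a dependent set $D$ of column indices with $|D| = d$. The key observation is that the rank deficiency on $D$ is exactly $1$: any $d-1$ columns of $H$ are independent (by definition of $d$ as the smallest size of a dependent column set), so $\mathrm{rank}(H_D) = d-1$ where $H_D$ is the submatrix of $H$ on columns $D$.

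Next I would enlarge $D$ to a correctable erasure set of weight $d$ or more by adding carefully chosen further columns. Since $\mathrm{rank}(H_D) = d-1 < n-k = \mathrm{rank}(H)$, there exist columns outside $D$ that, together with a maximal independent subset of $H_D$, extend to a full-rank family. Concretely, choose $T \supseteq D$ with $|T| = (n-k)+1$ such that $\mathrm{rank}(H_T) = n-k$ (this is possible: start from the $d-1$ independent columns inside $D$, keep the remaining one column of $D$, and greedily append columns from outside $D$ until the rank reaches $n-k$; the count works out to $|T| = d + (n-k) - (d-1) = n-k+1$). The erasure pattern $\mathcal{E} = T$ has weight $|T| = n-k+1 > d-1$ since $C$ is not MDS means $n-k+1 \ge d+1 > d-1$; actually more simply $n-k+1 \ge d$ when not MDS gives $|T| \ge d > d-1$. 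ML decoding succeeds on an erasure set $\mathcal{E}$ precisely when the columns of $H$ indexed by $\mathcal{E}$ are linearly independent — but here $H_T$ has rank $n-k < |T|$, so this naive choice fails. I would instead argue the other direction: the complement. ML recovers the erased symbols iff $\mathrm{rank}(H_{\mathcal{E}}) = |\mathcal{E}|$, so I must build an \emph{independent} column set of size $> d-1$. That is immediate: since $\mathrm{rank}(H) = n-k \ge d$ (non-MDS), $H$ has $n-k \ge d$ independent columns, hence an independent column set of size $d$, which is an ML-correctable erasure pattern of weight $d > d-1$.

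So the cleaner route, which I would actually write up, is: (i) recall ML decoding fills an erasure set $\mathcal{E}$ iff the corresponding columns of $H$ are $\F_q$-linearly independent; (ii) observe $\mathrm{rank}(H) = n - k$, and since $C$ is not MDS, $d \le n-k$, so $H$ contains at least $n-k \ge d$ linearly independent columns; (iii) pick any $d$ of them — this is an erasure pattern of weight $d > d-1$ that is ML-correctable. The main subtlety to get right is step (i), the standard equivalence between erasure correctability and column independence of $H$, and the careful inequality bookkeeping ensuring $d \le n-k$ genuinely gives a weight strictly exceeding $d-1$ (indeed $d > d-1$ trivially, so the statement is in fact quite weak and the real content is just exhibiting \emph{one} correctable pattern of weight $\ge d$, witnessing that the guaranteed correction radius $d-1$ is not tight for non-MDS codes). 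I expect no genuine obstacle here; the only care needed is to state precisely why the argument breaks for MDS codes (there $\mathrm{rank}(H_{\mathcal{E}}) = |\mathcal{E}|$ forces $|\mathcal{E}| \le n-k = d-1$, so no correctable pattern of weight $\ge d$ exists), confirming the hypothesis "$C$ not MDS" is used essentially.
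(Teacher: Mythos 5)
Your final write-up (the ``cleaner route'': rank of $H$ equals $n-k\ge d$ since $C$ is not MDS, pick $d$ linearly independent columns, and erase those positions) is exactly the paper's proof, which likewise invokes the existence of $w$ independent columns for any $w\in[d,n-k]$ and solves the erasures by Gaussian reduction of $H$. The initial detour through a dependent column set $D$ is correctly abandoned by you and does not affect the validity of the argument you actually propose to write up.
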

\begin{proof}
Let $H$ be an $(n-k) \times n$ parity-check matrix of $C$ with rank $n-k > d-1$.
For any integer $w$ in the range $[d, n-k]$, there exists a set of $w$ linearly independent columns in $H$. 
Choose an erasure pattern of weight $w$ with erasures located at the positions of the $w$
independent columns. 
Then, the ML decoder is capable of solving all these erasures by simple Gaussian reduction of $H$.
\end{proof}
For MDS codes, based on a proof similar to the proof of Proposition~\ref{prop_not_MDS}, 
we state a well-known result in the following corollary. 
\begin{corollary}
\label{coro_not_MDS}
Let $C[n,k,d]_q$ be an MDS code. All erasure patterns of weight greater than $d-1$ are not ML-correctable.\\
\end{corollary}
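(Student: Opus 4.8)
The plan is to mirror the argument of Proposition~\ref{prop_not_MDS}, but exploit the defining property of an MDS code: every set of $n-k$ columns of a parity-check matrix $H$ is linearly independent, and hence $H$ has rank exactly $n-k = d-1$. First I would fix an MDS code $C[n,k,d]_q$ with parity-check matrix $H$ of size $(n-k)\times n$. Let $\mathcal{S}$ be an erasure pattern of weight $w = |\mathcal{S}| > d-1 = n-k$. The erased symbols can be recovered by ML decoding if and only if the corresponding $w$ columns of $H$ — call this submatrix $H_{\mathcal{S}}$ — have rank $w$, equivalently if the only codeword supported on $\mathcal{S}$ is the all-zero word.

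The key step is the dimension count. Since $H$ has only $n-k$ rows, any $w > n-k$ of its columns are linearly dependent: $\mathrm{rank}(H_{\mathcal{S}}) \le n-k < w$. Therefore the homogeneous system $H_{\mathcal{S}}\,x = 0$ has a nonzero solution $x \in \F_q^{w}$, which extends (by zeros outside $\mathcal{S}$) to a nonzero codeword of $C$ whose support lies inside $\mathcal{S}$. Equivalently, the set of erased positions admits more than one consistent completion, so the ML decoder cannot uniquely determine the transmitted symbols. This shows every erasure pattern of weight exceeding $d-1$ is not ML-correctable, which is exactly the claim.

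The argument is essentially a one-line rank inequality, so there is no real obstacle; the only thing to be careful about is the logical direction relative to Proposition~\ref{prop_not_MDS}. That proposition produced a correctable pattern of weight above $d-1$ by using that a non-MDS code has a parity-check matrix of rank strictly greater than $d-1$; here the opposite phenomenon is in force, and the corollary is really the observation that $d-1 = n-k$ is a hard ceiling for erasure correction in the MDS case — one cannot correct more erasures than the number of parity checks. I would state this explicitly, noting that it recovers the familiar fact that an MDS code corrects exactly $d-1$ erasures and no more, and that this sharpness is what later lets us pin down the stopping sets of product codes built from MDS components.
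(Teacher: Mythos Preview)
Your proof is correct and follows exactly the approach the paper indicates: it mirrors the rank argument of Proposition~\ref{prop_not_MDS}, using that for an MDS code the parity-check matrix has only $n-k=d-1$ rows, so any $w>d-1$ columns are necessarily dependent and hence the erasure pattern covers a nonzero codeword. The paper does not spell this out in more detail than you have; your write-up is in fact more explicit than the one-line reference given there.
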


We conclude from the previous corollary that an algebraic decoder for an MDS code
attains the word-error performance of its ML decoder. What about symbol-error performance?
Indeed, for general binary and non-binary codes, 
the ML decoder may outperform an algebraic decoder since
it is capable of filling some of the erasures when
dealing with a pattern which is not ML-correctable.
In the MDS case, the answer comes from the absence of spectral holes for any MDS code
beyond its minimum distance. This basic result is proven via standard tools from
algebraic coding theory \cite{MacWilliams1977}\cite{Blahut2003}:
\begin{proposition}
\label{prop_MDS_w}
Let $C[n,k,d]_q$ be a non-binary MDS code ($q>n>2$). For any $w$ satisfying $d \le w \le n$ 
and any support $\mathcal{X}=\{i_1, i_2, \ldots, i_w\}$, where $ 1 \le i_j \le n$,
there exists a codeword in $C$ of weight $w$ having $\mathcal{X}$ as its own support.
\end{proposition}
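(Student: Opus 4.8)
The plan is to reduce the statement to finding a full-support codeword in a shortened code. Fix $\mathcal{X}=\{i_1,\ldots,i_w\}$ with $d\le w\le n$, and let $S=\{1,\ldots,n\}\setminus\mathcal{X}$, so $|S|=n-w$. Since $w\ge d=n-k+1$ we have $|S|\le k-1<k$; because the dual $C^\perp$ is MDS with minimum distance $k+1$, no nonzero vector supported on $S$ lies in $C^\perp$, so the projection of $C$ onto the coordinates of $S$ is surjective. Hence the shortened code $C'$ — all codewords of $C$ vanishing on $S$, with the $S$-coordinates deleted — is linear of length $w$ and dimension $k'=k-(n-w)=w-d+1\ge 1$. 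First I would check that $C'$ is again MDS: a nonzero codeword of $C'$ lifts to a nonzero codeword of $C$ vanishing on $S$, hence of weight at least $d$, while the Singleton bound gives minimum distance at most $w-k'+1=n-k+1=d$; so $C'$ is an $[w,k',d]_q$ MDS code.

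Next I would rule out an identically-zero coordinate in $C'$: if coordinate $j$ of $C'$ were always zero, puncturing it would give an $[\,w-1,k',d\,]_q$ code, contradicting $d\le(w-1)-k'+1=d-1$. Therefore, for each $j\in\{1,\ldots,w\}$ the evaluation map $C'\to\F_q,\ c'\mapsto c'_j$ is surjective, so its kernel has exactly $q^{k'-1}$ elements; that is, precisely $q^{k'-1}$ codewords of $C'$ have a zero in position $j$.

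The counting step then finishes it. By the union bound, the number of codewords of $C'$ with at least one zero coordinate is at most $w\,q^{k'-1}$, so the number of codewords of $C'$ with support all of $\{1,\ldots,w\}$ is at least $q^{k'}-w\,q^{k'-1}=q^{k'-1}(q-w)$. Since $q>n\ge w$ forces $q-w\ge q-n\ge 1$, this quantity is strictly positive, so some $c'\in C'$ has full support. Lifting $c'$ back through the shortening yields a codeword $c\in C$ with $c|_{\mathcal{X}}=c'$ and $c|_S=0$, hence $c$ has support exactly $\mathcal{X}$ and weight exactly $w$, as claimed.

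The genuinely routine parts are the standard facts about shortening and puncturing MDS codes; the only place the hypothesis is used in full strength is the final inequality $q-w>0$ in the extreme case $w=n$, which is exactly where $q>n$ enters. So the one thing to be careful about is keeping the union-bound estimate strictly positive uniformly over $d\le w\le n$ — equivalently, guaranteeing that $C'$ is a nontrivial MDS code of dimension at least $1$ over a field strictly larger than its length.
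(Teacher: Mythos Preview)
Your proof is correct and takes a genuinely different route from the paper's. The paper works directly with the parity-check matrix: it fixes $r=n-k$ positions of $\mathcal{X}$ whose columns $h_1,\ldots,h_r$ span, writes the remaining $\ell=w-r$ columns $\zeta_j$ as $\zeta_j=\sum_i a_{i,j}h_i$ (all $a_{i,j}\ne 0$ by the MDS property), and then greedily picks nonzero scalars $\alpha_1,\ldots,\alpha_\ell$ one at a time, each avoiding a forbidden set of size at most $r$, so that the resulting linear dependence $\sum_j\alpha_j\zeta_j-\sum_i(\sum_j\alpha_j a_{i,j})h_i=0$ has all coefficients nonzero; this dependence \emph{is} the desired codeword. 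That argument is explicitly constructive and in fact only needs $q-1>r$, i.e.\ $q>d$, although the paper does not isolate this.

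Your approach---shorten to an MDS $[w,w-d+1,d]_q$ code and count full-support words by a union bound---is more conceptual and, as a bonus, gives a quantitative lower bound $q^{w-d}(q-w)$ on the number of codewords with support exactly $\mathcal{X}$, which the paper's argument does not. The trade-off is that your method is nonconstructive and uses the hypothesis $q>n$ at its tightest (for $w=n$), whereas the paper's greedy construction succeeds under the weaker $q>d$. Both are valid; yours is arguably cleaner, the paper's is sharper on the field-size requirement and yields an explicit witness.
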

\begin{proof}
By assumption we have $w > r=n-k$. Let $H$ be a parity-check matrix of $C$ with rank $r=n-k$.
Recall that the MDS property makes full-rank any set of $n-k$ columns of $H$ \cite{MacWilliams1977}.
$w$ is written as $w=r+\ell$, where $\ell=1 \ldots k$. 
The $w$ positions of $\mathcal{X}$ are anywhere inside the range $[1,n]$,
but for simplicity let us denote $h_1 \ldots h_r$ the $r$ columns of $H$
in the first $r$ positions. The last $\ell$ columns are denoted 
$\zeta_1 \ldots \zeta_{\ell}$.
For any $j=1 \ldots \ell$, we have
\[
\zeta_j=\sum_{i=1}^{r} a_{i,j} h_i,
\]
where $a_{i,j} \in \F_q \setminus \{0\}$ otherwise it contradicts $d=n-k+1$.
Now, select $\alpha_1 \ldots \alpha_{\ell}$ from $\F_q \setminus \{0\}$ such that:
$\alpha_1$ is arbitrary, $\alpha_2$ is chosen outside the set $\{-\alpha_1a_{i,1}/a_{i,2} \}_{i=1}^r$,
then $\alpha_3$ is chosen outside the set $\{(-\alpha_1a_{i,1}-\alpha_2a_{i,2})/a_{i,3} \}_{i=1}^r$,
and so on, up to $\alpha_{\ell}$ which is chosen outside the set 
$\{-\sum_{u=1}^{\ell-1}\alpha_ua_{i,u}/a_{i,\ell} \}_{i=1}^r$.
Here, the notation $a/b$ in $\F_q \setminus \{0\}$ is equivalent to the standard
algebraic notation $ab^{-1}$.
The equality
\[
\sum_{j=1}^{\ell} \alpha_j \zeta_j = \sum_{i=1}^{r} \sum_{j=1}^{\ell}  \alpha_j a_{i,j} h_i
\]
produces a codeword of Hamming weight $w$.
Hence, there exists a codeword of weight $w$ with non-zero symbols in all positions given by $\mathcal{X}$. 
\end{proof}
Now, at the symbol level for an MDS code and an erasure pattern which is not ML-correctable ($w > d-1$),
we conclude from Proposition~\ref{prop_MDS_w} that the ML decoder cannot solve any of the $w$ erasures
because they are covered by a codeword. Consequently, an algebraic decoder for an MDS code
also attains the symbol-error performance of the ML decoder.
This behavior will have a direct
consequence on the iterative decoding of a product code with MDS components: 
stopping sets are identical when dealing with algebraic and ML-per-component decoders.\\

A general description of a stopping set was given by Definition~\ref{def_stopping_set}.
The exact definition of a stopping set depends on the iterative decoding type.
For product codes, four decoding methods are known:
\begin{itemize}
\item Type I: ML decoder. This is a non-iterative decoder. 
It is based on a Gaussian reduction of the parity-check matrix of the product code. 
\item Type II: Iterative algebraic decoder. At odd decoding iterations, 
component codes $C_1$ on each column are decoded via an algebraic
decoder (bounded-distance) that fills up to $d-1$ erasures.
Similarly, at even decoding iterations, 
component codes $C_2$ on each row are decoded via an algebraic decoder.
\item Type III: Iterative ML-per-component decoder. 
This decoder was considered by Rosnes in \cite{Rosnes2008} for binary product codes.
At odd decoding iterations, column codes $C_1$ are decoded via an optimal decoder (ML for $C_1$).
At even decoding iterations, row codes $C_2$ are decoded via a similar optimal decoder (ML for $C_2$).
\item Type IV: Iterative belief-propagation decoder based on the Tanner graph of $\CP$, 
as studied by Schwartz et al. for general linear block codes~\cite{Schwartz2006} 
and by Di et al. for low-density parity-check codes~\cite{Di2002}.
\end{itemize}
The three iterative decoders listed above give rise to three different kinds
of stopping sets. As previously indicated, from Corollary~\ref{coro_not_MDS} 
and Propositions~\ref{prop_MDS_w},
we concluded that type-II and type-III stopping sets are identical
if component codes are MDS.

\subsection{Stopping set definition\label{sec_stopping_sets_sub2}}

Let $C$ be a $q$-ary linear code of length $n$, i.e. $C$ is a sub-space of dimension $k$ of $\F_q^n$. 
The support of $C$, denoted by $\mathcal{X}(C)$, is the set of $\ell$ distinct 
positions $\{ i_1, i_2, \ldots, i_{\ell}\}=\{ i_j\}_{j=1}^{\ell}$,
$1~\le~i_j~\le ~n$, such that, for all $j$, there exists a codeword $c=(c_1 \ldots c_n) \in C$ with $c_{i_j} \ne 0$.
This notion of support $\mathcal{X}$ is applied to rows and columns in a product code.\\

Now, we define a rectangular support which is useful to represent a stopping set in a bi-dimensional
product code. Let $\mathcal{S} \subseteq \{1,\ldots,n_1\} \times \{1,\ldots,n_2\}$ be a set of symbol positions
in the product code. The set of row positions associated to $\mathcal{S}$ is 
$\mathcal{R}_1(\mathcal{S})=\{ i_1, \ldots, i_{\ell_1}\}$ where $|\mathcal{R}_1(\mathcal{S})|=\ell_1$
and for all $i \in \mathcal{R}_1(\mathcal{S})$ there exists $(i,\ell) \in \mathcal{S}$.
The set of column positions associated to $\mathcal{S}$ is 
$\mathcal{R}_2(\mathcal{S})=\{ j_1, \ldots, j_{\ell_2}\}$ where $|\mathcal{R}_2(\mathcal{S})|=\ell_2$
and for all $j \in \mathcal{R}_2(\mathcal{S})$ there exists $(\ell,j) \in \mathcal{S}$.
The rectangular support of $\mathcal{S}$ is 
\begin{equation}
\mathcal{R}(\mathcal{S})=\mathcal{R}_1(\mathcal{S}) \times \mathcal{R}_2(\mathcal{S}), 
\end{equation}
i.e. the smallest $\ell_1 \times \ell_2$ rectangle including all columns and all rows of $\mathcal{S}$.

\newpage
\begin{definition}
\label{def_exact_stop_sets}
Consider a product code $C_P=C_1 \otimes C_2$. Let $\mathcal{S} \subseteq \{1,\ldots,n_1\} \times \{1,\ldots,n_2\}$
with $|\mathcal{R}_1(\mathcal{S})|=\ell_1$ and $|\mathcal{R}_2(\mathcal{S})|=\ell_2$.
Consider the $\ell_1$ rows of $\mathcal{S}$ given by $\mathcal{S}_r^{(i)}=\{ j : (i,j) \in \mathcal{S} \}$
and the $\ell_2$ columns of $\mathcal{S}$ given by $\mathcal{S}_c^{(j)}=\{ i : (i,j) \in \mathcal{S} \}$. The set $\mathcal{S}$ is a stopping set of type~III for $C_P$
if there exist linear subcodes $C_c^{(j)} \subseteq C_1$ and $C_r^{(i)} \subseteq C_2$ such that
$\mathcal{X}(C_c^{(j)})=\mathcal{S}_c^{(j)}$ and $\mathcal{X}(C_r^{(i)})=\mathcal{S}_r^{(i)}$ for all
$i \in \mathcal{R}_1(\mathcal{S})$ and for all $j \in \mathcal{R}_2(\mathcal{S})$.
\end{definition}

The cardinality $|\mathcal{S}|$ is called the size of the stopping set and will also be referred
to in the sequel as the weight of $\mathcal{S}$.
Recall that type II and type III stopping sets are identical when both $C_1$ and $C_2$ are MDS.
Stopping sets of type III were studied for binary product codes by Rosnes \cite{Rosnes2008}.
His analysis is based on the generalized Hamming distance~\cite{Wei1991}\cite{Helleseth1992}
because sub-codes involved in Definition~\ref{def_exact_stop_sets} may have a dimension greater than 1.
In the non-binary MDS case, according to Proposition~\ref{prop_MDS_w}, all these sub-codes
have dimension 1, i.e. they are generated by a single non-zero codeword. Consequently,
the generalized Hamming distance is not relevant when using MDS components.
In such a case, the analysis of type II stopping sets is mainly combinatorial and
does not require algebraic tools.\\

Stopping sets for decoder types II-IV can be characterized by four main properties summarized as follows.
\begin{itemize}
\item Obvious or not obvious sets, also known as rank-1 sets. 
A stopping set $\mathcal{S}$ is obvious if $\mathcal{S}=\mathcal{R}(\mathcal{S})$.
\item Primitive or non-primitive stopping sets. 
A stopping set is primitive if it cannot be partitioned 
into two or more smaller stopping sets. Notice that all stopping sets, 
whether they are primitive or not, are involved in the code performance.
\item Codeword or non-codeword. A stopping set $\mathcal{S}$  is said to be a codeword stopping set
if there exists a codeword $c$ in $C_P$ such that $\mathcal{X}(c)=\mathcal{S}$.
\item ML-correctable or non-ML-correctable.
A stopping set $\mathcal{S}$ cannot be corrected via ML decoding if 
it includes the support of a non-zero codeword.\\ 
\end{itemize}

In the remaining material of this paper, we restrict our study to type II stopping sets.
\newpage
\begin{example}
\label{ex_w=9}
Consider a $[n_1,n_1-2,3]_q \otimes [n_2,n_2-2,3]_q$ product code. A stopping set $\mathcal{S}$
of size $w=9$ is shown as a weight-$9$ matrix of size $n_1 \times n_2$, where $1$
corresponds to an erased position:
\begin{equation}
\label{equ_stop_w=9_large}
\cS~=~\left(
\begin{array}{ccccccc}
0 & 0 & 0 & 0 & 0 & 0  \\
0 & 0 & 0 & 0 & 0 & 0  \\
0 & 1 & 0 & 1 & 1 & 0  \\
0 & 0 & 0 & 0 & 0 & 0  \\
0 & 1 & 0 & 1 & 1 & 0  \\
0 & 1 & 0 & 1 & 1 & 0  \\
0 & 0 & 0 & 0 & 0 & 0 
\end{array}
\right).
\end{equation}
We took $n_1=n_2=7$ for illustration. 
The rectangular support is shown in a compact representation 
as a matrix of size $ \ell_1 \times \ell_2 = 3 \times 3$,
\begin{equation}
\label{equ_stop_w=9}
\mathcal{R}(\mathcal{S}) ~=~
\left(
\begin{array}{ccc}
1 & 1 & 1 \\
1 & 1 & 1 \\
1 & 1 & 1 
\end{array}
\right).
\end{equation}
The stopping set in (\ref{equ_stop_w=9_large}) is obvious, 
it has the same size as its rectangular support. 
It corresponds to a matrix of rank 1. Each row and each column of $\mathcal{S}$
has weight $3$. Iterative row-column decoding based on component algebraic decoders 
fails in decoding rows and columns since the number of erasures exceeds 
the erasure-filling capacity of the MDS components. This stopping set is not
ML-correctable because it is a product-code codeword. 
In the sequel, all stopping sets (type II) shall be represented in this compact manner
by a smaller rectangle of size $\ell_1 \times \ell_2$.
\end{example}

\begin{example}
\label{ex_w=12}
For the same $[n_1,n_1-2,3]_q \otimes [n_2,n_2-2,3]_q$ product code used in the previous example,
the following stopping sets of size $12$ are not obvious.
\begin{equation}
\cS_1~=~\left(
\begin{array}{ccccccc}
0 & 0 & 0 & 0 & 0 & 0 & 0 \\
0 & 0 & 0 & 0 & 0 & 0 & 0 \\
0 & 1 & 1 & 1 & 0 & 0 & 0 \\
0 & 0 & 1 & 1 & 1 & 0 & 0 \\
0 & 1 & 0 & 1 & 1 & 0 & 0 \\
0 & 1 & 1 & 0 & 1 & 0 & 0 \\
0 & 0 & 0 & 0 & 0 & 0 & 0 
\end{array}
\right),
\end{equation}

\begin{equation}
\cS_2~=~\left(
\begin{array}{ccccccc}
0 & 0 & 0 & 0 & 0 & 0 & 0 \\
0 & 0 & 0 & 0 & 0 & 0 & 0 \\
0 & 1 & 0 & 1 & 1 & 0 & 0 \\
0 & 0 & 0 & 1 & 1 & 1 & 0 \\
0 & 1 & 0 & 0 & 1 & 1 & 0 \\
0 & 1 & 0 & 1 & 0 & 1 & 0 \\
0 & 0 & 0 & 0 & 0 & 0 & 0 
\end{array}
\right).
\end{equation}
In compact form, their rectangular support is
\begin{equation}
\label{equ_stop_w=12}
\mathcal{R}(\mathcal{S}_1) ~=~\mathcal{R}(\mathcal{S}_2) ~=~
\left(
\begin{array}{cccc}
1 & 1 & 1 & 0 \\
0 & 1 & 1 & 1 \\
1 & 0 & 1 & 1 \\
1 & 1 & 0 & 1 
\end{array}
\right).
\end{equation}
These stopping sets have size $12$ and a $4 \times 4$ rectangular support. 
For $w=12$, it is also possible
to build an obvious stopping set in a $3 \times 4$ rectangle or a $4 \times 3$ rectangle full of $1$.
$\cS_1$ is ML-correctable since it does not cover a product code codeword. 
$\cS_2$ covers a codeword hence it is not ML-correctable.
\end{example}


\subsection{Stopping sets and subgraphs of product codes\label{sec_stopping_sets_sub3}}
A stopping set as defined by Definition~(\ref{def_exact_stop_sets}) corresponds to erased
edges in the non-compact graph $\G$ introduced in Section~\ref{sec_graph_representations_sub1}.
Indeed, consider the size-$9$ stopping set given by (\ref{equ_stop_w=9_large}) 
or (\ref{equ_stop_w=9}). The nine symbol positions involve nine edges in $\G$, three
row checknodes, and three column checknodes. Each of these six checknodes has three erased
symbols making the $[12,10,3]$ decoder fail. This stopping set is equivalent
to a subgraph of $9$ edges in $\G$ as shown in Figure~\ref{fig_G_12_12_w=9}.

\begin{figure}[!h]
\begin{center}
\includegraphics[width=0.36\columnwidth]{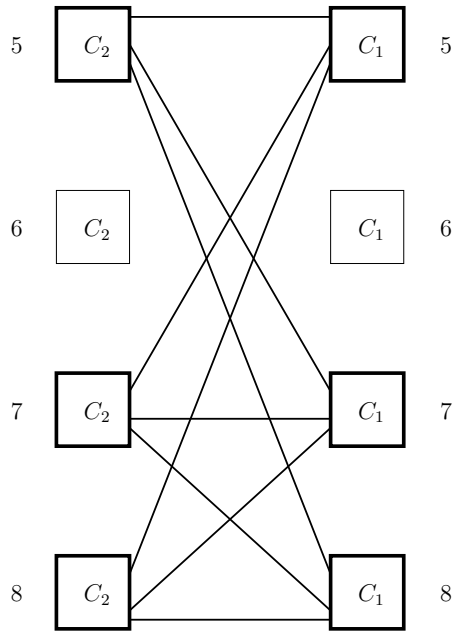}
\caption{A sub-graph of $\G$ representing the size-$9$ obvious stopping set. The graph $\G$
has $|E|=144$ edges, $|V_2|=12$ left (row) checknodes, and $|V_1|=12$ left (column) checknodes.
Only the stopping set edges are drawn.
\label{fig_G_12_12_w=9}}
\end{center}
\end{figure}

The subgraph in Figure~\ref{fig_G_12_12_w=9} has three length-$4$ cycles and two length-$6$ cycles.
The small cycles of length-$4$ are associated to an erasure pattern with a $2\times 2$ rectangular support which is not a stopping set ($d_1=d_2=3$). Similarly, length-$6$ cycles
are not stopping sets and are associated to erasure patterns with a $2\times 3$ rectangular support.
We will see in the next section that the minimum stopping set 
size is $d_1d_2=9$, i.e. it is equal to the minimum Hamming distance of the product code.

A subgraph of $\G^c$ can be embedded into $\G$ by splitting each super-edge 
into $(n_1-k_1)\times (n_2-k1)$ edges. The converse is not always true.
The subgraph with nine edges in Figure~\ref{fig_G_12_12_w=9} cannot be compressed
into a subgraph of $\G^c$. For the $[12,10,3]^{\otimes 2}$ product code, a supersymbol in $\G^c$
contains four edges. Hence, a necessary condition for a stopping set in $\G$
to become a valid stopping set in $\G^c$ is to erase edges in groups of $4$.
Knowing that type II and type III stopping sets are identical when row and column
codes $C_1$ and $C_2$ are MDS, 
Definition~(\ref{def_exact_stop_sets}) leads to the following corollaries.

\begin{corollary}
\label{cor_stop_in_G}
Let $C_P=C_1 \otimes C_2$ be a product code with MDS components $C_1$ and $C_2$
having minimum Hamming distance $d_1$ and $d_2$ respectively. Assume that symbols (edges)
of $\G=(V_1, V_2, E)$ are sent over an erasure channel. A stopping set for the iterative decoder
is a subgraph of $\G$ such that all column vertices in $V_1$ have a degree greater than or equal
to $d_1$ and all row vertices in $V_2$ have a degree greater than or equal to $d_2$.
\end{corollary}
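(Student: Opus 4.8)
The plan is to read the statement off Definition~\ref{def_exact_stop_sets} once it is specialized to MDS components, using the support characterization of Proposition~\ref{prop_MDS_w} together with the fact (established above) that type~II and type~III stopping sets coincide in the MDS case. First I would fix the bijection between erased symbols and edges of $\G$: an erased symbol at position $(i,j)$ is the edge joining the row checknode $i \in V_2$ to the column checknode $j \in V_1$, so that the set $\cS$ of erased symbols is literally a subgraph of $\G$, the degree of a row checknode $i$ in this subgraph equals $|\cS_r^{(i)}|$, and the degree of a column checknode $j$ equals $|\cS_c^{(j)}|$.

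Next I would invoke Proposition~\ref{prop_MDS_w}: for a non-binary MDS code $C[n,k,d]_q$, a set of positions $\mathcal{X}$ of size $w$ is the support of some (necessarily dimension-$1$) subcode if and only if $w=0$ or $d \le w \le n$, there being no intermediate value since $C$ has no nonzero codeword of weight less than $d$ while the proposition supplies a codeword of every admissible weight on any prescribed support. Hence in Definition~\ref{def_exact_stop_sets} the requirement $\mathcal{X}(C_r^{(i)})=\cS_r^{(i)}$ with $C_r^{(i)} \subseteq C_2$ is satisfiable exactly when $\cS_r^{(i)}$ is empty or $|\cS_r^{(i)}| \ge d_2$, and symmetrically $\cS_c^{(j)}$ is empty or $|\cS_c^{(j)}| \ge d_1$. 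Discarding the rows and columns not met by $\cS$ (they contribute only isolated, degree-$0$ checknodes that are not part of the subgraph), this is precisely the assertion that every column vertex of $V_1$ present in the subgraph has degree at least $d_1$ and every row vertex of $V_2$ present has degree at least $d_2$. Conversely, any subgraph with these degree properties has every nonempty row slice of size $\ge d_2$ and every nonempty column slice of size $\ge d_1$, so Proposition~\ref{prop_MDS_w} furnishes the subcodes required by Definition~\ref{def_exact_stop_sets}.

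Finally I would verify that such a configuration is genuinely a stopping set in the sense of Definition~\ref{def_stopping_set} under the type~II decoder, i.e. that iterative decoding is stuck from the outset: on every row meeting $\cS$ the number of erasures is at least $d_2$, so by Corollary~\ref{coro_not_MDS} the algebraic $C_2$ decoder cannot resolve all of them, and by Proposition~\ref{prop_MDS_w} these erasures are covered by a weight-$|\cS_r^{(i)}|$ codeword of $C_2$, so it cannot resolve \emph{any} of them; the same holds for columns with $d_1$. Thus no erased edge of the subgraph is ever filled, at any iteration, which is exactly the failure demanded by Definition~\ref{def_stopping_set}. The only mildly delicate point is the bookkeeping of ``involved'' versus ``not involved'' rows and columns, so that the degree condition is stated for the checknodes actually present in the subgraph; everything else is a direct transcription, since the MDS hypothesis collapses all subcodes in Definition~\ref{def_exact_stop_sets} to dimension one and removes the generalized-Hamming-distance machinery needed in the general binary setting.
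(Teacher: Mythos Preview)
Your proposal is correct and follows exactly the route the paper indicates: the paper does not give a separate proof but simply states that Definition~\ref{def_exact_stop_sets}, together with the equivalence of type~II and type~III stopping sets for MDS components (established via Corollary~\ref{coro_not_MDS} and Proposition~\ref{prop_MDS_w}), yields the corollary directly. You have merely unpacked this in detail, including the bookkeeping about involved versus uninvolved rows and columns, which the paper leaves implicit.
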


\begin{corollary}
\label{cor_stop_in_Gc}
Let $C_P=C_1 \otimes C_2$ be a product code with MDS components $C_1$ and $C_2$
having minimum Hamming distance $d_1$ and $d_2$ respectively. Assume that supersymbols 
(super-edges) of $\G^c=(V^c_1, V^c_2, E^c)$ are sent over an erasure channel. 
A stopping set for the iterative decoder is a subgraph of $\G^c$ such that all 
column vertices in $V^c_1$ have a degree greater than or equal
to $2$ and all row vertices in $V_2$ have a degree greater than or equal to $2$.
\end{corollary}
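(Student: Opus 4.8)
The plan is to deduce Corollary~\ref{cor_stop_in_Gc} from the already-established Corollary~\ref{cor_stop_in_G} by passing through the canonical embedding of $\G^c$ into $\G$. Recall that this embedding splits every super-edge of $\G^c$ into an $(n_1-k_1)\times(n_2-k_2)$ block of edges of $\G$, namely the intersection of the $n_1-k_1$ rows bundled by its left (row) vertex of $V^c_2$ with the $n_2-k_2$ columns bundled by its right (column) vertex of $V^c_1$. The one structural fact used throughout is that the $n_1$ rows are partitioned among the vertices of $V^c_2$ and the $n_2$ columns among the vertices of $V^c_1$; consequently all erased symbols lying on a given row come from super-edges incident to that row's unique super-node, and similarly for columns. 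Note also that, since the components are non-trivial MDS codes, $n_i-k_i\ge 1$ and $d_i=n_i-k_i+1$, so $2(n_i-k_i)\ge d_i$.

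First I would prove the sufficiency direction. Let $\cS\subseteq E^c$ be a subgraph in which every vertex of $V^c_1$ touched by $\cS$ has degree at least $2$ and every vertex of $V^c_2$ touched by $\cS$ has degree at least $2$, and embed $\cS$ into $\G$. If a row is bundled by a touched vertex $\upsilon\in V^c_2$, then its erased symbols come from at least two distinct super-edges incident to $\upsilon$, each contributing $n_2-k_2$ symbols on that row; hence the row carries at least $2(n_2-k_2)\ge d_2$ erasures. Symmetrically, every column bundled by a touched vertex of $V^c_1$ carries at least $2(n_1-k_1)\ge d_1$ erasures, while every other row and column carries none. By Corollary~\ref{cor_stop_in_G}, the embedded pattern is a stopping set of $\G$, so $\cS$ is a stopping set for iterative row-column decoding.

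Then I would prove necessity by contraposition. Suppose some touched vertex of $\cS$, say a row super-node $\upsilon\in V^c_2$, has degree exactly $1$, and let $e\in\cS$ be the unique super-edge incident to $\upsilon$. Because each of the $n_1-k_1$ rows bundled by $\upsilon$ receives erasures only through $\upsilon$, each such row carries exactly $n_2-k_2=d_2-1$ erased symbols. A bounded-distance algebraic decoder for the MDS code $C_2$ fills any pattern of $d_2-1$ erasures, so all symbols carried by $e$ are recovered in one decoding iteration and $\cS$ is not a stopping set; the symmetric argument, using the $C_1$ decoder and $d_1-1=n_1-k_1$ erasures per column, disposes of a degree-$1$ vertex of $V^c_1$. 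Combining the two directions gives the stated characterization.

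The argument is essentially degree bookkeeping, so I expect no deep obstacle; the point requiring the most care is the non-divisible case, where the last vertex of $V^c_2$ (resp. $V^c_1$) bundles fewer than $n_1-k_1$ rows (resp. $n_2-k_2$ columns). I would check that this is harmless: a single erased super-edge still contributes exactly $d_2-1$ (resp. $d_1-1$) erasures to each row (resp. column) it meets, and two or more still contribute at least $d_2$ (resp. $d_1$), so the inequalities above and the one-iteration decodability argument remain valid verbatim, independently of any divisibility assumption.
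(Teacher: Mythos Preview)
Your proof is correct and follows exactly the reasoning the paper intends: the paper states Corollary~\ref{cor_stop_in_Gc} without a separate proof, deriving it (together with Corollary~\ref{cor_stop_in_G}) directly from Definition~\ref{def_exact_stop_sets} and the MDS identity $d_i=n_i-k_i+1$, which is precisely the degree bookkeeping you spell out via the embedding $\G^c\hookrightarrow\G$. One cosmetic point: in the non-divisible case a single super-edge through a short super-node contributes \emph{at most} $d_2-1$ (not exactly $d_2-1$) erasures per row, but your decodability and $\ge d_2$ conclusions are unaffected.
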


The above corollaries suppose a symbol (or a supersymbol) channel with independent erasures.
When $\G$ is endowed with an edge coloring $\phi$, we get the same constraint on the validity
of a subgraph embedding from $\G^c$ into $\G$. 
We know from Section~\ref{sec_graph_representations_sub1} 
that $\Phi(E^c \rightarrow E)$ is a subset of $\Phi(E)$, 
i.e. some edge colorings of $\G$ are not edge colorings
of $\G^c$. Consequently, on a block-erasure channel, 
if all super-edges of the same color are erased, 
stopping sets in $\G^c$ are a subset of those in $\G$. 
The non-compact graph $\G$ has a larger ensemble of stopping sets, with or without edge coloring. 
As an example, for the  $[12,10,3]^{\otimes 2}$ product code,
the smallest stopping set in $\G^c$ has size $2 \times 2$ when four super-edges are erased
which yields a stopping set of size $16$ in $\G$.\\

\begin{example} Consider the $[9,6,4]_q^{\otimes 2}$ product code where $d_1=d_2=4$
and $q>9$. Assume that our palette has $M=3$ colors. 
The non-compact graph admits an ensemble of 
$|\Phi(E)|~=~4490186382903298862950669893074864640$ edge colorings! 
The compact graph has $|\Phi(E^c)|~=~1680$ only. In $\G^c$, each color is used $N^c/M=3$ times.
For a channel erasing all symbols of the same color,
the compact graph has no stopping sets (the $2\times 2$ rectangular support cannot be filled
by a single color). A compact matrix representation of $\G^c$ attaining double diversity
with all symbols of order 1 is given by the trivial matrix
\begin{equation}
\left[
\begin{array}{ccc}
R & G & B \\
B & R & G \\
G & B & R 
\end{array}
\right],
\end{equation}
where the color $\phi(e)=1$ is replaced by the letter 'R', $\phi(e)=2$ is replaced by the letter 'G',
and $\phi(e)=3$ is replaced by the letter 'B'. 
The non-compact graph has $9 \times 9$ edges, each color is used $27$ times.
Double diversity is lost in $\G$ if one of the $4 \times 4$, $4\times 5$, or $5\times 5$ obvious 
stopping sets is covered by a unique color. Clearly, $\G^c$ makes the design much easier.
This double-diversity product code has a relatively low coding rate.
More challenging product code designs are given in Section~\ref{sec_edge_coloring} 
with higher rates up to the one imposed by the block-fading/block-erasure Singleton bound.
\end{example}

\subsection{Enumeration of stopping sets\label{sec_stopping_sets_sub4}}
For a fixed non-zero integer $w$, the number of stopping sets of size $w$, denoted as $\tau_w$,
falls in two different cases. Firstly, $\tau_w~=~0$ if $w$ is small with respect to the minimum
Hamming distance of the product code. Also, $\tau_w=0$ for special erasure patterns obtained
by adding a small neighborhood to a smaller obvious set. Secondly, for both obvious and non-obvious stopping sets,
$\tau_w$ is non-zero and the weight $w$ may correspond to many rectangular supports of different height and width.
The code performance over erasure channels is dominated by not-so-large stopping sets.
Non-empty stopping sets of the second case satisfy the general property stated in the following lemma.
\begin{lemma}
\label{lem_max_support}
Given a weight $w \le (d_1+1)(d_2+1)$ and assuming $\tau_w >0$, then $\exists \mathcal{S}^0$ such that $\forall \mathcal{S}$ with $|\mathcal{S}|=w$, we have
$\|\mathcal{R}(\mathcal{S})\| \le \|\mathcal{R}(\mathcal{S}^0)\|=(\ell_1^0, \ell_2^0)$, where
\begin{align}
\label{equ_ell1_0} \ell_1^0 &\le d_1+1+\left\lfloor \frac{d_1+1}{d_2} \right\rfloor, \\
\label{equ_ell2_0} \ell_2^0 &\le d_2+1+\left\lfloor \frac{d_2+1}{d_1} \right\rfloor. 
\end{align}
\end{lemma}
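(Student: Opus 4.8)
The plan is to reduce the statement to two elementary counting inequalities that follow from the characterization of type~II stopping sets (Corollary~\ref{cor_stop_in_G}), and then simply to plug in the hypothesis $w \le (d_1+1)(d_2+1)$. So first I would fix an arbitrary type~II stopping set $\mathcal{S}$ with $|\mathcal{S}|=w$ and rectangular support $\mathcal{R}(\mathcal{S})$ of dimensions $\ell_1 \times \ell_2$. By Corollary~\ref{cor_stop_in_G} (equivalently, by Definition~\ref{def_exact_stop_sets} together with the MDS property, which by Proposition~\ref{prop_MDS_w} forces every involved subcode to have dimension $1$ and full support), each of the $\ell_1$ rows that meet $\mathcal{S}$ must carry at least $d_2$ erased symbols: the bounded-distance decoder of the MDS row code $C_2$ fills up to $d_2-1$ erasures, so it can only fail on a row with $\ge d_2$ erasures. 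Symmetrically each of the $\ell_2$ columns meeting $\mathcal{S}$ carries at least $d_1$ erased symbols. Summing the per-row and per-column erasure counts gives $w \ge \ell_1 d_2$ and $w \ge \ell_2 d_1$.

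Next I would combine these with $w \le (d_1+1)(d_2+1)$. This yields $\ell_1 \le w/d_2 \le (d_1+1)(d_2+1)/d_2 = (d_1+1) + (d_1+1)/d_2$, and since $\ell_1$ is an integer, $\ell_1 \le \lfloor (d_1+1)(d_2+1)/d_2 \rfloor = d_1+1+\lfloor (d_1+1)/d_2 \rfloor$, which is exactly (\ref{equ_ell1_0}); the bound (\ref{equ_ell2_0}) on $\ell_2$ follows in the same way from $w \ge \ell_2 d_1$. Since $\tau_w>0$ the family of type~II stopping sets of size $w$ is non-empty and finite, so the coordinate-wise maxima $\ell_1^0 = \max_{\mathcal{S}}\ell_1$ and $\ell_2^0 = \max_{\mathcal{S}}\ell_2$ are attained and satisfy the stated upper bounds; one then takes $\mathcal{S}^0$ to be a size-$w$ stopping set whose rectangular support realizes $(\ell_1^0,\ell_2^0)$, so that $\|\mathcal{R}(\mathcal{S})\| \le \|\mathcal{R}(\mathcal{S}^0)\|$ in the product order for every $\mathcal{S}$ with $|\mathcal{S}|=w$.

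The two counting inequalities are routine; the genuinely delicate point is the last step, namely that a single stopping set $\mathcal{S}^0$ realizes both coordinate-wise maxima at once rather than these being attained by two different extremal patterns. I expect to settle this by an explicit worst-case construction — a circulant ``staircase'' erasure pattern of the appropriate bandwidth and total weight $w$, whose support is exactly an $\ell_1^0 \times \ell_2^0$ rectangle and which dominates every other $\mathcal{R}(\mathcal{S})$ — reusing the band-type patterns that appear in the enumeration lemmas and theorems that follow; should a uniformly dominating pattern fail to exist for some parameter triple $(d_1,d_2,w)$, the statement would still be read coordinate-wise, and only the two bounds (\ref{equ_ell1_0})--(\ref{equ_ell2_0}) are actually invoked downstream.
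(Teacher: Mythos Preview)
Your core argument is exactly the paper's: from the stopping-set characterization you derive $w\ge \ell_1 d_2$ and $w\ge \ell_2 d_1$, then plug in $w\le (d_1+1)(d_2+1)$ and take integer parts to obtain (\ref{equ_ell1_0}) and (\ref{equ_ell2_0}). The paper's proof does precisely this (it even argues at the extremal weight $w=(d_1+1)(d_2+1)$ and remarks that smaller $w$ can only shrink the support), so on the substantive inequalities you are aligned with the paper.

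Where you are \emph{more} careful than the paper is on the existence of a single $\mathcal{S}^0$ dominating all $\mathcal{R}(\mathcal{S})$ in the product order. The paper's proof simply establishes the two coordinate bounds separately and does not construct or otherwise justify a common maximizer; the sentence after the lemma calls it ``the existence of a maximal rectangular support,'' but no argument is given. Your instinct to flag this is sound, and your fallback---that only the coordinate-wise bounds are ever used downstream in Theorems~\ref{th_stopping_sets_d} and~\ref{th_stopping_sets_d1_d2}---is exactly how the paper uses the lemma in practice. So you may safely drop the staircase construction and state the lemma's content as two separate upper bounds on $\ell_1$ and $\ell_2$; that is all the paper proves and all it needs.
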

\begin{proof}
Let $w$ be equal to $(d_1+1)(d_2+1)$. In order to establish an upper bound of the height $\ell_1$, 
we build the highest possible rectangular support for this weight $w$. Assume the rectangle
is $\ell_1^0 \times \ell_2$, each of its rows should have at least $d_2$ erasures to make
the type-II decoder fail. Then $d_2 \ell_1^0 \le (d_1+1)(d_2+1)$
which becomes the upper bound given by (\ref{equ_ell1_0}).
Now, if $w$ is less than $(d_1+1)(d_2+1)$, the rectangular support of the stopping set
can only shrink in size. The upper bound of the width in (\ref{equ_ell2_0}) is proven
in a similar way. 
\end{proof}
The above lemma states the existence of a maximal rectangular support for a given stopping set size.
The example given below cites stopping sets with a unique-size rectangular support
and stopping sets with multiple-size rectangular supports.
\begin{example}
Consider a $C_1 \otimes C_2$ product code where $C_1$ and $C_2$ are both MDS
with minimum Hamming distance $3$. The stopping set given by (\ref{equ_stop_w=9}) 
cannot have a large rectangular support.
In general, all stopping sets of size $d_1d_2$ have a rectangular support of fixed dimensions
$d_1 \times d_2$. Now, let $w=12$. As indicated in Example~\ref{ex_w=12}, stopping sets
of size $12$ may be included in rectangular supports of dimensions $3 \times 4$, $4 \times 3$,
and $4 \times 4$. For $w=12$, it is impossible to build a $4 \times 5$ rectangular support
(reductio ad absurdum) making $\ell_1^0=4$ and $\ell_2^0=4$. A similar proof by contradiction
yields $\ell_1^0=5$ and $\ell_2^0=5$ for $w=15$.
\end{example}
The next lemma gives an obvious upper bound of the size of $\cR(\cS)$ by stating a simple
limit on the number of zeros (non-erased positions) inside $\cR(\cS)$.
\begin{lemma}
\label{lem_max_zeros}
Let  $\cR(\cS)$ be the $\ell_1 \times \ell_2$ rectangular support of a stopping set $\cS$ of size $w$.
Let $\beta=\ell_1\ell_2-w$ be the number of zero positions, or equivalently $\beta$ is the size of the
set $\cR(\cS) \setminus \cS$. Then 
\begin{equation}
\beta \le \min((\ell_1-d_1)\ell_2, \ell_1(\ell_2-d_2)).
\end{equation}
\end{lemma}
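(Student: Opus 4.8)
The plan is to bound the number $\beta$ of zero (non-erased) positions inside the rectangular support $\cR(\cS)$ by counting row-wise and column-wise separately, then taking the minimum. First I would recall the defining property of a type~II stopping set with MDS components (via Corollary~\ref{cor_stop_in_G}): every row of $\cS$ that is actually involved — that is, each of the $\ell_1$ rows indexed by $\cR_1(\cS)$ — must contain at least $d_2$ erased symbols, since otherwise the $[n_2,k_2,d_2]$ row decoder would fill that row. Symmetrically, each of the $\ell_2$ columns indexed by $\cR_2(\cS)$ must contain at least $d_1$ erased symbols.

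The row count goes as follows. Fix one of the $\ell_1$ rows of $\cR(\cS)$. It has $\ell_2$ entries, of which at least $d_2$ are erased, so at most $\ell_2 - d_2$ are zeros. Summing over all $\ell_1$ rows, the total number of zero positions in $\cR(\cS)$ satisfies $\beta \le \ell_1(\ell_2 - d_2)$. Running the identical argument over the $\ell_2$ columns — each having at most $\ell_1 - d_1$ zeros — gives $\beta \le (\ell_1 - d_1)\ell_2$. Since both inequalities hold simultaneously, $\beta \le \min\bigl((\ell_1-d_1)\ell_2,\; \ell_1(\ell_2-d_2)\bigr)$, which is the claim.

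There is essentially no obstacle here: the only thing to be careful about is that the bound on erasures per row/column applies to the reduced rectangular support (rows and columns genuinely meeting $\cS$ have been retained, empty ones excluded), which is exactly how $\cR(\cS)$ was defined in Section~\ref{sec_stopping_sets_sub2}; and that the MDS hypothesis is what makes ``fewer than $d_i$ erasures'' equivalent to ``correctable'', via Corollary~\ref{coro_not_MDS} and the stopping-set characterization. One could also note in passing that this immediately forces $\ell_1 \ge d_1$ and $\ell_2 \ge d_2$ for any nonempty stopping set (taking $\beta \ge 0$), recovering the lower end of the rectangular-support sizes discussed after Lemma~\ref{lem_max_support}. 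The equivalent formulation $w \ge \max(d_2\ell_1,\, d_1\ell_2)$ is just a rearrangement of the two inequalities.
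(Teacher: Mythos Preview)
Your argument is correct and is exactly the ``obvious'' row/column counting the paper has in mind: it states the lemma without proof, calling it an obvious bound, and in the surrounding proofs (e.g., of Theorem~\ref{th_stopping_sets_d}) it repeatedly uses precisely the fact that each involved row must carry at least $d_2$ erasures and each involved column at least $d_1$. So your write-up matches the paper's implicit reasoning.
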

Before stating and proving Theorem~\ref{th_stopping_sets_d}, we announce two results
in Lemma~\ref{lem_graph_bipartite_deg2} and Lemma~\ref{lem_graph_bipartite_deg2_1} 
on bipartite graphs enumeration. We saw in the previous section that stopping sets
are sub-graphs of $\G$ and $\G^c$, see Corollary~\ref{cor_stop_in_G} and Corollary~\ref{cor_stop_in_Gc}.
In other words, the enumeration of stopping sets represented as matrices of a given
distribution of row weight and column weight is equivalent to enumerating
bipartite graphs where left vertices stand for rows and right vertices stand for columns.
An edge should be drawn between a left vertex and a right vertex according to some rule,
e.g. the rule used in the previous section draws an edge in the bipartite graph
for each $1$ in the stopping set matrix. Stopping sets enumeration in the next theorem
is based on $\beta$, the number of zeros or the number of non-erased positions.
Hence, we shall use the opposite rule. A stopping set of weight $w$ and having a $\ell_1 \times \ell_2$
rectangular support shall be represented by a bipartite graph with $\ell_1$ left vertices,
$\ell_2$ right vertices, and a total of $\beta=\ell_1\ell_2-w$ edges. Notice
that these bipartite graphs have no length-2 cycles because parallel edges are forbidden.

For finite $\ell_1$ and $\ell_2$, given the left degree distribution and the right degree distribution,
there exists no exact formula for counting bipartite graphs. The best recent results 
are asymptotic in the graph size for sparse and dense matrices 
\cite{Canfield2009}~\cite{Greenhill2012} 
and cannot be applied in our enumeration. 
The following two lemmas solve two cases encountered in Theorem~\ref{th_stopping_sets_d}
for $w=d(d+2)$ and $w=(d+1)(d+1)$ both inside a $(d+2) \times (d+2)$ 
rectangular support. The definition of special partitions is required
before introducing the two lemmas.

\begin{definition}
\label{def_special_partition}
Let $\ell \ge 2$ be an integer. A {\em special partition} of length $j$ of $\ell$
is a partition defined by a tuple $(\ell_1, \ell_2, \ldots, \ell_j)$ 
such that its integer components satisfy:
\begin{itemize}
\item $\ell_1 \le \ell_2 \le \ldots \le \ell_j$.
\item $\sum_{i=1}^j \ell_i = \ell$.
\item $\ell_i \ge 2$, $\forall j$.
\item $1 \le j \le \ell/2$. 
\end{itemize}
A special partition shall be denoted by $((\ell_1,\ldots, \ell_j))$.
\end{definition}

\begin{definition}
The { \em group number} of a special partition, 
denoted by $\kappa=\kappa(\ell_1, \ell_2, \ldots, \ell_j)$,
is the number of different integers $\ell_j$, for $j=1 \ldots \ell/2$.
In other words, following set theory, the set including the $j$ integers $\ell_i$'s
is $\{\ell_{i_1}, \ell_{i_2}, \ldots, \ell_{i_{\kappa}} \}$.
The group number divides the partition of $\ell$ into $\kappa$ groups
where the $m^{th}$ group includes $\ell_{i_m}$ repeated $g_m$ times,
and $\sum_{m=1}^{\kappa} g_m = j$.
\end{definition}

\begin{lemma}
\label{lem_graph_bipartite_deg2}
Consider bipartite graphs defined as follows: $\ell$~left vertices,
$\ell$ right vertices, all vertices have degree 2, and no length-2 cycles
are allowed. For $\ell \ge 2$, 
the total number $x_{\ell}$ of such bipartite graphs is given by the expression
\begin{equation}
\label{equ_x_ell}
x_{\ell}=
\sum_{((\ell_1,\ldots, \ell_j))} \frac{1}{\prod_{m=1}^{\kappa(\ell_1, \ldots, \ell_j)} g_m!}
\prod_{k=1}^j \frac{\prod_{u=0}^{\ell_k-1} (\ell-\sum_{i=1}^{k-1} \ell_i-u)^2}{2 \ell_k}
\end{equation}
where $\sum_{((\ell_1,\ldots, \ell_j))}$ is a summation over all special partitions
of the integer $\ell$, $\kappa(\ell_1, \ldots, \ell_j)$ is the group number
of the special partition $((\ell_1, \ldots, \ell_j))$, and $g_m$ is the size
of the $m^{th}$ group.
\end{lemma}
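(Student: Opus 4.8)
The plan is to reduce the enumeration to the structure theorem for $2$-regular graphs. First I would observe that a bipartite graph in which every vertex has degree exactly $2$ is a vertex-disjoint union of cycles, and that in a bipartite graph every cycle has even length. A cycle of length $2\ell_k$ alternates between left and right vertices, so it uses exactly $\ell_k$ left vertices and $\ell_k$ right vertices; forbidding length-$2$ cycles (parallel edges) forces $\ell_k \ge 2$. Hence the components of such a graph induce a partition of the $\ell$ left vertices, and simultaneously of the $\ell$ right vertices, into parts of size at least $2$, i.e. a special partition $((\ell_1,\dots,\ell_j))$ in the sense of Definition~\ref{def_special_partition}. Consequently $x_\ell$ splits as $x_\ell = \sum_{((\ell_1,\dots,\ell_j))} N(\ell_1,\dots,\ell_j)$, where $N(\ell_1,\dots,\ell_j)$ counts the graphs whose cycle lengths are $2\ell_1,\dots,2\ell_j$.

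Next I would compute $N(\ell_1,\dots,\ell_j)$ by an over-counted sequential construction. Order the $\ell$ left vertices in one of $\ell!$ ways and the $\ell$ right vertices in one of $\ell!$ ways, then cut both ordered lists into consecutive blocks of sizes $\ell_1,\dots,\ell_j$ and, for each $k$, glue the $k$-th left block $a_1\cdots a_{\ell_k}$ and right block $b_1\cdots b_{\ell_k}$ into the alternating cycle $a_1 b_1 a_2 b_2 \cdots a_{\ell_k} b_{\ell_k} a_1$. This produces every graph of the prescribed cycle type, so it remains to find the multiplicity with which each graph arises. The key sub-claim is that a single alternating cycle on a fixed set of $\ell_k$ labelled left and $\ell_k$ labelled right vertices comes from exactly $2\ell_k$ of the $(\ell_k!)^2$ block-orderings: the dihedral group of order $2\ell_k$ generated by the $\ell_k$ rotations ``by one left vertex'' and the reversal acts on the orderings, and for $\ell_k \ge 2$ this action is free, since a nontrivial dihedral symmetry would force a repeated vertex label. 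In addition, permuting blocks among cycles of equal size leaves the graph unchanged, contributing a factor $\prod_{m=1}^{\kappa} g_m!$. Hence $N(\ell_1,\dots,\ell_j) = (\ell!)^2 / \big(\prod_{k=1}^j (2\ell_k)\cdot \prod_{m=1}^{\kappa} g_m!\big)$.

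Finally I would reconcile this closed form with the stated expression. Setting $m_k = \ell - \sum_{i=1}^{k-1}\ell_i$ for the number of left vertices still unused before building the $k$-th cycle, one has $\prod_{u=0}^{\ell_k-1}(m_k-u) = m_k!/m_{k+1}!$, so the product telescopes: $\prod_{k=1}^j \big(m_k!/m_{k+1}!\big)^2 = (\ell!)^2$ because $m_{j+1}=0$. Therefore $\prod_{k=1}^j \frac{\prod_{u=0}^{\ell_k-1}(m_k-u)^2}{2\ell_k} = (\ell!)^2/\prod_{k=1}^j (2\ell_k)$, and the summand appearing in (\ref{equ_x_ell}), namely $\frac{1}{\prod_m g_m!}\prod_{k=1}^j \frac{\prod_{u=0}^{\ell_k-1}(m_k-u)^2}{2\ell_k}$, equals exactly $N(\ell_1,\dots,\ell_j)$. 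Summing over all special partitions of $\ell$ yields (\ref{equ_x_ell}).

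I expect the main obstacle to be the bookkeeping in the multiplicity step: proving cleanly that the dihedral action on block-orderings is free for $\ell_k \ge 2$ (so every cycle is counted exactly $2\ell_k$ times, with no exceptional cycles counted fewer times), and that the only further coincidences come from permuting equal-size blocks, so the total over-count is precisely $\prod_k (2\ell_k)\prod_m g_m!$ and nothing more. The degenerate case $\ell_k = 1$, which would produce the non-integer factor $\tfrac12$ and corresponds exactly to the excluded double edge, is a useful consistency check that the hypothesis ``no length-$2$ cycles'' is precisely what makes the formula valid.
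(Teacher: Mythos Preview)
Your proposal is correct and follows essentially the same approach as the paper: decompose the $2$-regular bipartite graph into even cycles indexed by a special partition, count the number of single cycles on $\ell_k+\ell_k$ labelled vertices as $(\ell_k!)^2/(2\ell_k)$, and correct for the over-count from permuting equal-size blocks by $\prod_m g_m!$. The only cosmetic difference is that the paper first chooses the vertex sets of the blocks via the binomials $\binom{\ell-\sum_{i<k}\ell_i}{\ell_k}^2$ and then counts Hamiltonian cycles on each block, whereas you order all $\ell$ left and $\ell$ right vertices at once and cut into consecutive blocks; your telescoping step $\prod_k \prod_{u=0}^{\ell_k-1}(m_k-u)^2=(\ell!)^2$ is exactly what reconciles the two presentations.
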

\begin{proof}
Firstly, let us find the number of Hamiltonian bipartite graphs having $\ell_k$ 
left vertices, $\ell_k$ right vertices, all vertices of degree $2$,
and no length-$2$ cycles allowed. There are $(\ell_k!)^2$ ways to choose the order
of all left and right vertices. If the Hamiltonian cycle is represented by
a sequence of $2\ell_k$ integers corresponding to the $2\ell_k$ vertices of the bipartite
graph, then there are $2\ell_k$ ways to shift the Hamiltonian cycle without changing
the graph. Hence, the number of Hamiltonian bipartite graphs of degree $2$ is
\begin{equation}
\frac{(\ell_k!)^2}{2\ell_k}.
\end{equation}
Secondly, given the half-size $\ell$ of the bipartite graph stated in this lemma,
all special partitions of $\ell$ are considered. For a fixed special partition 
$((\ell_1, \ell_2, \ldots, \ell_j))$ the bipartite graph is decomposed into $j$
Hamiltonian graphs each of length $\ell_k$, $k=1 \ldots j$. 
The number of choices for selecting the vertices of the $j$ Hamiltonian graphs
is
\begin{equation}
\prod_{k=1}^j {\ell-\sum_{i=1}^{k-1} \ell_i \choose \ell_k}^2. 
\end{equation}
The above number should be multiplied by the number of Hamiltonian graphs
for each selection of vertices to get
\begin{equation}
\prod_{k=1}^j {\ell-\sum_{i=1}^{k-1} \ell_i \choose \ell_k}^2 
\frac{\left( \ell_k ! \right)^2}{2 \ell_k}.
\end{equation}
But for a given special partition, each group of size $g_m$ is creating $g_m!$ 
identical bipartite graphs. Hence, the final result for a fixed partition
becomes
\begin{equation}
\label{equ_xell_oneterm}
\frac{1}{\prod_{m=1}^{\kappa(\ell_1, \ldots, \ell_j)} g_m!}
\prod_{k=1}^j {\ell-\sum_{i=1}^{k-1} \ell_i \choose \ell_k}^2 
\frac{\left( \ell_k ! \right)^2}{2 \ell_k}.
\end{equation}
Then, $x_{\ell}$ is obtained by summing (\ref{equ_xell_oneterm}) over all special partitions
of the integer $\ell$ to yield
\begin{equation}
x_{\ell}=
\sum_{((\ell_1,\ldots, \ell_j))} \frac{1}{\prod_{m=1}^{\kappa(\ell_1, \ldots, \ell_j)} g_m!}
\prod_{k=1}^j {\ell-\sum_{i=1}^{k-1} \ell_i \choose \ell_k}^2 
\frac{\left( \ell_k ! \right)^2}{2 \ell_k}.
\end{equation}
The simplification of the factors $(\ell_k!)^2$ yields the expression stated by this lemma.
\end{proof}

\begin{lemma}
\label{lem_graph_bipartite_deg2_1}
Consider bipartite graphs defined as follows: $\ell$~left vertices,
$\ell$ right vertices, all left vertices have degree $2$ except one vertex of degree $1$, 
all right vertices have degree $2$ except one vertex of degree $1$, and finally 
no length-$2$ cycles are allowed. For $\ell \ge 3$, 
the total number $y_{\ell}$ of such bipartite graphs is
\begin{equation}
y_{\ell}=\ell^2 \cdot \left( (2\ell-1)\cdot x_{\ell-1} + (\ell-1)^2 \cdot x_{\ell-2} \right),
\end{equation}
where $x_{\ell}$ is determined via Lemma~\ref{lem_graph_bipartite_deg2} and $x_1=0$.
\end{lemma}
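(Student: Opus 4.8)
The plan is to mimic the structural decomposition behind Lemma~\ref{lem_graph_bipartite_deg2}. First I would separate the choice of the two exceptional vertices: there are $\ell$ choices for the degree-$1$ left vertex $L_0$ and $\ell$ choices for the degree-$1$ right vertex $R_0$, giving the leading factor $\ell^2$. It then suffices to count, with $L_0$ and $R_0$ fixed, the bipartite graphs $G$ in which $L_0$ and $R_0$ have degree $1$, every other vertex has degree $2$, and no two edges are parallel; I will show this number equals $(2\ell-1)\,x_{\ell-1}+(\ell-1)^2\,x_{\ell-2}$.

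Since a bipartite graph of maximum degree $2$ with no isolated vertex is a disjoint union of simple paths and cycles whose path-endpoints are exactly its degree-$1$ vertices, such a $G$ is a single path $P$ from $L_0$ to $R_0$ together with a collection of cycles on the remaining vertices. As the host is bipartite and balanced, $P$ carries the same number $p$ of vertices on each side, $1\le p\le\ell$, and the leftover cycles live on $\ell-p$ left and $\ell-p$ right vertices, where by Lemma~\ref{lem_graph_bipartite_deg2} there are $x_{\ell-p}$ admissible such collections. I would then split on $p$. If $p=1$, then $P$ is the single edge $L_0R_0$ and the other $(\ell-1)+(\ell-1)$ vertices carry an arbitrary Lemma~\ref{lem_graph_bipartite_deg2}-graph, contributing $x_{\ell-1}$. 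If $p=2$, then $P=L_0\!-\!r\!-\!s\!-\!R_0$ with $r$ a right vertex different from $R_0$ and $s$ a left vertex different from $L_0$, so there are $(\ell-1)^2$ choices for $P$ and $x_{\ell-2}$ for the cycles, contributing $(\ell-1)^2\,x_{\ell-2}$. The case $p\ge 3$ is the one requiring real work, and it should contribute $2(\ell-1)\,x_{\ell-1}$.

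For $p\ge 3$ I would exhibit a bijection between these graphs and pairs $(H,e)$, where $H$ is a Lemma~\ref{lem_graph_bipartite_deg2}-graph on the $2(\ell-1)$ vertices other than $L_0,R_0$ (there are $x_{\ell-1}$ of them) and $e$ is one of the $2(\ell-1)$ edges of $H$. Writing the path of $G$ as $L_0\!-\!r\!-\!\cdots\!-\!s\!-\!R_0$ with $r$ a right vertex and $s$ a left vertex, the forward map sends $G$ to $\bigl((G\setminus\{L_0,R_0\})+rs,\ rs\bigr)$: deleting $L_0$ and $R_0$ turns $P$ into a path from $r$ to $s$, and inserting the new edge $rs$ closes it into a cycle of length $2(p-1)\ge 4$, so the first coordinate really is a degree-$2$ graph with no parallel edges and $rs$ one of its edges. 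The inverse takes $(H,e)$ with $e=rs$, deletes $e$ (breaking the cycle of $H$ through $e$, of length $\ge 4$, into a path from $r$ to $s$), and attaches a pendant $L_0$ at $r$ and a pendant $R_0$ at $s$; the resulting path then has $\ge 3$ vertices on each side. The points to verify are that $p\ge 3$ forces $r$ and $s$ to be non-adjacent in $G\setminus\{L_0,R_0\}$ (so inserting $rs$ creates no parallel edge), that the absence of length-$2$ cycles in $H$ keeps everything admissible under the inverse, and that the two maps compose to the identity both ways; each is a short check.

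Adding the three contributions gives $x_{\ell-1}+(\ell-1)^2\,x_{\ell-2}+2(\ell-1)\,x_{\ell-1}=(2\ell-1)\,x_{\ell-1}+(\ell-1)^2\,x_{\ell-2}$, and multiplying by $\ell^2$ gives the claimed $y_\ell$. I also note that the convention $x_1=0$, forced because a degree-$2$ bipartite graph on $1+1$ vertices would require a length-$2$ cycle, is exactly what makes the $p=2$ term vanish when $\ell=3$, so the formula is valid for every $\ell\ge 3$. The main obstacle throughout is the $p\ge 3$ bijection, specifically checking that the insertion/deletion of the edge $rs$ never manufactures a forbidden double edge; this is precisely where the hypothesis $p\ge 3$ (equivalently, that the relevant cycle of $H$ has length at least $4$) is used.
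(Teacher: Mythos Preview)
Your proof is correct and takes essentially the same approach as the paper's: both extract the factor $\ell^2$ by fixing the two degree-$1$ vertices, and then the paper's three constructions---directly joining $L_0R_0$, breaking one of the $2(\ell-1)$ edges of an $x_{\ell-1}$-graph to splice in $L_0,R_0$, and breaking a length-$2$ cycle on a chosen pair $(r,s)$ with an $x_{\ell-2}$-graph on the rest---are exactly your cases $p=1$, $p\ge 3$, and $p=2$. Your explicit path/cycle decomposition and the bijection for $p\ge 3$ are a cleaner packaging of the same idea.
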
 
\begin{proof}
Let the first $\ell-1$ left vertices and the first $\ell-1$ right vertices be of degree $2$.
There exists two ways to complete this bipartite graph such that the two remaining vertices
have degree~$1$. 
\begin{itemize}
\item Each of the $x_{\ell-1}$ sub-graphs has $2(\ell-1)$ edges. Break one edge into two edges
and connect them to the remaining left and right vertices, the number of such graphs
is $2(\ell-1)x_{\ell-1}$. Another set of $x_{\ell-1}$ bipartite graphs is built by directly
connecting the last two vertices together without breaking any edge in the upper sub-graph.
Now, we find  $2(\ell-1)x_{\ell-1}+x_{\ell-1}=(2\ell-1)x_{\ell-1}$ bipartite graphs. 
\item Fix a vertex among the $\ell-1$ upper left vertices and fix one among
the $\ell-1$ upper right vertices ($(\ell-1)^2$ choices). 
Consider a length-$2$ cycle including these two vertices.
One edge of this cycle can be broken into two edges and then attached to the degree-$1$
vertices at the bottom. The remaining $\ell-2$ left and right vertices may involve $x_{\ell-2}$ 
sub-graphs. Consequently, the number of graphs in this second case is $(\ell-1)^2x_{\ell-2}$.
\end{itemize}
The total number of bipartite graphs enumerated in the above cases is
\begin{equation}
\label{equ_yell_part}
(2\ell-1)x_{\ell-1}+(\ell-1)^2x_{\ell-2}.
\end{equation}
Finally, the degree-$1$ left vertex has $\ell$ choices
and so has the degree-$1$ right vertex. 
The number of graphs in (\ref{equ_yell_part}) should be multiplied by $\ell^2$.
\end{proof}

We make no claims about a possible generalization of Lemma~\ref{lem_graph_bipartite_deg2} and Lemma~\ref{lem_graph_bipartite_deg2_1} to finite bipartite graphs with higher vertex degrees. 
As mentioned before, for general degree distributions, results on 
enumeration of asymptotic bipartite graphs were published by Brendan McKay and his
co-authors \cite{Canfield2009}~\cite{Greenhill2012}.
Table~\ref{tab_seq_spec_parts} shows the number of special partitions for $\ell=2 \ldots 32$.
The number of standard partitions (the partition function) can be found by a recursion resulting from
the pentagonal number theorem \cite{Conway1996book}. To our knowledge, there exists no such recursion
for special partitions. The number of bipartite graphs under the assumptions
of Lemma~\ref{lem_graph_bipartite_deg2} and Lemma~\ref{lem_graph_bipartite_deg2_1}
is found in Table~\ref{tab_xell_yell} for a graph half-size up to $8$. 
Finally, we are ready to state and prove the first theorem on stopping sets enumeration.

\begin{table}[h!]
\begin{center}
\begin{tabular}{|l|}
\hline
1, 1, 2, 2, 4, 4, 7, 8, 12, 14, 21, 24, 34, 41, 55, 66, 88, 105, 137,\\
165, 210, 253, 320, 383, 478, 574, 708, 847, 1039, 1238, 1507  \\ \hline
\end{tabular}
\end{center}
\caption{\label{tab_seq_spec_parts}Sequence of the number of special partitions of the integer $\ell$, for $\ell=2\ldots 32$. Special partitions are described in Definition~\ref{def_special_partition}.
The sequence for standard partitions can be found in~\cite{A000041}.}
\end{table}
\vspace{-1cm}
\begin{table}[h!]
\begin{center}
\begin{tabular}{|c|c|c|c|c|c|c|c|}
\hline
$\ell$    & 2 & 3 &  4  & 5 & 6 & 7 & 8 \\ \hline \hline 
$x_{\ell}$ & 1 & 6  & 90  & 2040 & 67950 & 3110940 & 187530840\\ \hline 
$y_{\ell}$ & 0 & 45 & 816 & 22650 & 888840 & 46882710 & 3199593600\\ \hline
\end{tabular}
\end{center}
\caption{\label{tab_xell_yell}Number of bipartite graphs not including length-2 cycles
from Lemma~\ref{lem_graph_bipartite_deg2} and Lemma~\ref{lem_graph_bipartite_deg2_1}.}
\end{table}

In the sequel, the open interval between two real numbers $a$ and $b$ 
will be denoted $]a,b[$,
\[
]a,b[ ~=~ \{x \in \R : a < x < b\}.
\]

\begin{theorem}
\label{th_stopping_sets_d}
Let $C_P$ be a product code $[n_1,k_1,d_1]_q \otimes [n_2,k_2,d_2]_q$ built from row and column MDS
component codes, where the alphabet size $q$ is greater than $\max(n_1, n_2)$. 
Let $\tau_w$ be the number of stopping sets of size $w$. We write $\tau_w=\tau^a+\tau^b$, where
$\tau^a$ counts obvious stopping sets and $\tau^b$ counts non-obvious stopping sets.
Under (type-II) iterative algebraic decoding and for $d_1=d_2=d \ge 2$, 
stopping sets are characterized as follows:
\begin{itemize}
\itemsep=2mm
\item For $w<d^2$,
\[
\tau^a=\tau^b=0.
\]
\item For $w=d^2$,
\[
\tau^a={n_1 \choose d} {n_2 \choose d}, ~~~\tau^b=0.
\]
\item For $w\in ]d^2, d(d+1)[$,
\[
\tau^a=\tau^b=0.
\]
\item For $w=d(d+1)$,
\begin{align}
\tau^a & ={n_1 \choose d}  {n_2 \choose d+1} + {n_1 \choose d+1}  {n_2 \choose d}, 
\nonumber \\
\tau^b & =(d+1)!  {n_1 \choose d+1}  {n_2 \choose d+1}.\nonumber
\end{align}
\item For $w \in ~]d(d+1),d(d+2)[$.\\
Let us write $w=d^2+d+\lambda$, where $\lambda \in [1, d-1]$.
\begin{align}
\tau^a & =0, \nonumber \\
\tau^b & =(d+1-\lambda)!  {d+1 \choose \lambda}^2  {n_1 \choose d+1}  {n_2 \choose d+1}. \nonumber
\end{align}
\item For $w=d(d+2)$,
\begin{align}
\tau^a & ={n_1 \choose d}  {n_2 \choose d+2}+{n_1 \choose d+2}  {n_2 \choose d}, \nonumber \\
\tau^b & =(d+1)^2  {n_1 \choose d+1}  {n_2 \choose d+1}  \nonumber \\
~& +\sum_{2r_0+r_1=d} {d+1 \choose r_0} {d+1-r_0 \choose r_1} \frac{(d+2)!}{2^{r_2}}   
\left[ 
{n_1 \choose d+1}  {n_2 \choose d+2} + {n_1 \choose d+2}  {n_2 \choose d+1}
\right] \nonumber \\
~& + x_{d+2}  {n_1 \choose d+2}  {n_2 \choose d+2}, \nonumber
\end{align}
where $\sum_{2r_0+r_1=d}$ is a summation over $r_0$ and $r_1$, both being non-negative
and satisfying $2r_0+r_1=d$, $r_2=d+1-r_0-r_1$, and $x_{d+2}$ is determined 
from Lemma~\ref{lem_graph_bipartite_deg2}.
\item For $w=(d+1)(d+1)$ 
\begin{align}
\tau^a & = {n_1 \choose d+1}  {n_2 \choose d+1},  \nonumber \\
\tau^b & = \sum_{2r_0+r_1=d+1} {d+1 \choose r_0} {d+1-r_0 \choose r_1} \frac{(d+2)!}{2^{r_0}} 
\left[ 
{n_1 \choose d+1}  {n_2 \choose d+2} + {n_1 \choose d+2}  {n_2 \choose d+1}
\right] \nonumber \\
 & + y_{d+2}  {n_1 \choose d+2}  {n_2 \choose d+2}, \nonumber
\end{align}
where $y_{d+2}$ is determined from Lemma~\ref{lem_graph_bipartite_deg2_1}.
\end{itemize}
\end{theorem}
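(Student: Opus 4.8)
\emph{Proof plan.} The whole statement reduces to a finite case analysis over admissible rectangular supports, and within each support to a bipartite‑graph counting problem. First I would invoke Corollary~\ref{cor_stop_in_G}: for MDS components a (type~II) stopping set is exactly a $0$–$1$ matrix in which every occupied row has weight $\ge d$ and every occupied column has weight $\ge d$ (equivalently, the induced subgraph of $\G$ has all row checknodes of degree $\ge d$ and all column checknodes of degree $\ge d$). Consequently $\ell_1=|\cR_1(\cS)|\ge d$ and $\ell_2=|\cR_2(\cS)|\ge d$, since each occupied column meets $d$ distinct rows and symmetrically; meanwhile $w\ge d\ell_1$ and $w\ge d\ell_2$ force $\ell_1,\ell_2\le\lfloor w/d\rfloor$, and $w\le\ell_1\ell_2$. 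Together these already give $w\ge d^2$ (hence $\tau_w=0$ for $w<d^2$) and, for each value or interval of $w$ in the statement, cut $(\ell_1,\ell_2)$ down to at most three or four possibilities — the quantitative sharpening of Lemmas~\ref{lem_max_support} and~\ref{lem_max_zeros} (e.g.\ for $w\in{}]d^2,d(d+1)[$ one is forced to $\ell_1=\ell_2=d$, contradicting $w\le\ell_1\ell_2$, so $\tau_w=0$; for $w\in{}]d(d+1),d(d+2)[$ only $(d+1,d+1)$ survives, and it is non‑obvious). Since $\cR(\cS)$ is determined by $\cS$, the contributions of different $(\ell_1,\ell_2)$ are disjoint, so it suffices to count, for each admissible pair, the number of valid matrices and multiply by $\binom{n_1}{\ell_1}\binom{n_2}{\ell_2}$.

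For the obvious part, $\cS=\cR(\cS)$ means the whole $\ell_1\times\ell_2$ rectangle is erased, which is a stopping set iff $\ell_1\ge d$ and $\ell_2\ge d$; thus $\tau^a=\sum\binom{n_1}{\ell_1}\binom{n_2}{\ell_2}$ over factorizations $\ell_1\ell_2=w$ with $d\le\ell_1,\ell_2\le\lfloor w/d\rfloor$. Running through $w=d^2$, $d(d+1)$, $d(d+2)$, $(d+1)^2$ gives exactly the stated $\tau^a$ (a short divisibility check, e.g.\ $d+1\nmid d(d+2)$, rules out spurious rectangles), and for $w$ strictly between consecutive such values no admissible factorization exists, so $\tau^a=0$ there.

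For the non‑obvious part I would pass to the \emph{complement} bipartite graph $B$ with $\ell_1$ left and $\ell_2$ right vertices and $\beta=\ell_1\ell_2-w$ edges, one per non‑erased position, with no parallel edges (distinct positions). The stopping‑set constraints become left‑degree $\le\ell_2-d$ and right‑degree $\le\ell_1-d$, while ``$\cR(\cS)$ is exactly $\ell_1\times\ell_2$'' is automatic because $d\ge2$. Each surviving case is then a bounded count: $\beta=d+1$ in the $(d+1)\times(d+1)$ square forces a perfect matching, $(d+1)!$ of them; $\beta=d+1-\lambda$ in that square forces a partial matching, $(d+1-\lambda)!\binom{d+1}{\lambda}^2$; $\beta=1$ in the $(d+1)\times(d+1)$ square gives $(d+1)^2$ placements; a $(d+1)\times(d+2)$ support (or its transpose) forces all $d+2$ degrees on the tight side to be $1$ and leaves on the other side a degree sequence with $r_0$ zeros, $r_1$ ones and $r_2$ twos, contributing the multinomial coefficient $\binom{d+1}{r_0,r_1,r_2}$ times $(d+2)!/2^{\,r_2}$ (the ways to distribute the $d+2$ unit endpoints), summed over the admissible triples $(r_0,r_1,r_2)$ — which is precisely the displayed $\sum_{2r_0+r_1=\cdots}$ sum; and a $(d+2)\times(d+2)$ support with $\beta=2(d+2)$ (resp.\ $\beta=2d+3$) forces every vertex to have degree $2$ (resp.\ all but one vertex on each side to have degree $2$ and one to have degree $1$), which is exactly the enumeration of Lemma~\ref{lem_graph_bipartite_deg2}, i.e.\ $x_{d+2}$ (resp.\ Lemma~\ref{lem_graph_bipartite_deg2_1}, i.e.\ $y_{d+2}$). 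Summing the per‑support contributions reproduces $\tau^b$.

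The \textbf{main obstacle} is the mixed‑dimension cases ($d+1$ on one side, $d+2$ on the other): one must enumerate labelled simple bipartite graphs with a prescribed irregular degree sequence on one side and an almost‑uniform one on the other, verify that no parallel edge is ever forced and that no row or column gets emptied, and confirm that the resulting multinomial sum matches the stated $\sum_{2r_0+r_1=\cdots}$ expression after using the symmetry $\binom{d+1}{r_0}\binom{d+1-r_0}{r_1}=\binom{d+1}{r_0,r_1,r_2}$. A secondary delicate point is the bookkeeping: checking that every $w$ in the theorem's ranges is assigned all and only its admissible triples $(\ell_1,\ell_2,\beta)$, with the obvious/non‑obvious split correctly attributed and nothing double‑counted across supports.
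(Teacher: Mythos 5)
Your proposal is correct and follows essentially the same route as the paper's proof: a case split over the admissible rectangular supports $(\ell_1,\ell_2)$ forced by $w\ge d\ell_1$, $w\ge d\ell_2$, $w\le\ell_1\ell_2$ (Lemmas~\ref{lem_max_support} and~\ref{lem_max_zeros}), followed by counting the $\beta=\ell_1\ell_2-w$ non-erased positions as a simple bipartite graph with bounded degrees — which is exactly the paper's "zeros" bookkeeping, including the permutation-matrix counts, the $\sum_{2r_0+r_1=\cdot}$ multinomial sums for the $(d+1)\times(d+2)$ supports, and the reduction to $x_{d+2}$ and $y_{d+2}$ via Lemmas~\ref{lem_graph_bipartite_deg2} and~\ref{lem_graph_bipartite_deg2_1} for the $(d+2)\times(d+2)$ supports. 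The two "obstacles" you flag are handled in the paper exactly as you propose, so no genuinely different ideas are involved.
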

\begin{proof} For $w$ satisfying $d^2 \le w \le (d+1)^2$, the admissible size of $\cR(\cS)$ varies  
from $d^2$ up to $(d+2)^2$ as given by Lemma~\ref{lem_max_support}.
All cases stated in the theorem shall use the following sequence of $\cR(\cS)$
listed in the order of increasing size $\ell_1\ell_2$: $d^2$, $d(d+1)$, $d(d+2)$, $(d+1)^2$, $(d+1)(d+2)$,
and $(d+2)^2$. For these rectangular supports, the stopping set weight also has six
cases to be considered, where $w$ takes the following values (or ranges) in increasing order:
$w=d^2$, $w \in ]d^2,d(d+1)[$, $w=d(d+1)$, $w \in ]d(d+1),d(d+2)[$, $w=d(d+2)$, and $w=(d+1)^2$.\\

\begin{itemize}
\itemsep=3mm
\item The case $w < d^2$.\\
Consider a stopping set of size $w<d^2$. Its rectangular support $\cR(\cS)$ has size 
$\ell_1\ell_2 \ge w$. All columns should have a weight greater than or equal to $d$,
we find that $ w \ge d\ell_2$. Similarly, all rows must have a weight greater than or equal to $d$,
then $ w \ge d\ell_1$. By combining the two inequalities, we find $w^2 \ge d^2 \ell_1\ell_2 \ge d^2w$,
so we get $w \ge d^2$ which is a contradiction unless these stopping sets do not exist,
i.e. $\tau_w=0$ for $w < d^2$ under type II iterative decoding.
\item The case $w=d^2$.\\
We use similar inequalities as in the previous case.
We have $w=d^2 \ge d\ell_2$ because column decoding must fail. We obtain $\ell_2 \le d$.
In a symmetric way, $w=d^2 \ge d\ell_1$ because row decoding must fail. We obtain $\ell_1 \le d$.
But $\cR(\cS)$ cannot be smaller than $\cS$, i.e. we get $\ell_1=d$ and $\ell_2=d$.
We just proved that all stopping set of size $d^2$ are obvious. Their number is given by choosing
$d$ rows out of $n_1$ and $d$ columns out of $n_2$. 
\item The case $ d^2 < w < d(d+1)$.\\
Given that $\ell_1\ell_2 \ge w > d^2$, we get $\ell_1 \ge d$ and $\ell_2 \ge d$ 
since the support $\cR(\cS)$ is larger than a $d \times d$ rectangle, the latter being
the smallest stopping set as proven in the previous case.
Take $\ell_1=d$, then $\ell_2 \ge d+1$ because $w > d^2$. The weight of each column must be at least $d$
giving us $w \ge d\ell_2 \ge d(d+1)$, which is a contradiction unless $\tau_w=0$.
For $\ell_1>d$, the same arguments hold.
\item The case $w=d(d+1)$.
\begin{itemize}
\item The smallest $\cR(\cS)$ is $d\times (d+1)$ or $(d+1)\times d$. According to Lemma~\ref{lem_max_zeros}, we have $\beta=0$. All these stopping sets are obvious.
Their number is
\[
{n_1 \choose d}  {n_2 \choose d+1}+{n_1 \choose d+1}  {n_2 \choose d}.
\]
\item $\cR(\cS)$ has size $d(d+2)$. Each column must have at least $d$ erasures. 
Then $\cS$ can only be obvious with weight $d(d+2)$ which contradicts $w=d(d+1)$.
Hence, this size of rectangular support yields no stopping sets, $\tau_w=0$ in this sub-case.
\item $\cR(\cS)$ has size $(d+1)(d+1)$. Let $\beta$ be the number of zeros in $\cR(\cS)$,
$\beta=(d+1)^2-w=d+1$.
All these stopping sets are found by considering the $(d+1)!$ permutations where a unique $0$
is placed per row and per column. Then, the binomial coefficient must be multiplied 
by $(d+1)!$ which yields the $\tau^b$ announced in the theorem for $w=d(d+1)$.
\item $\cR(\cS)$ has size $(d+1)(d+2)$. The number of zeros is $\beta=(d+1)(d+2)-w=2d+2$. 
Then $\beta > d+2 = (\ell_1-d)\ell_2$ which contradicts Lemma~\ref{lem_max_zeros}.
We get $\tau_w=0$ in this sub-case.
The same arguments are valid for larger rectangles.
\end{itemize}
\item The case $d(d+1) < w < d(d+2)$.\\
Let us write $w=d^2+d+\lambda$, where $\lambda \in [1, d-1]$. 
We consider below three sub-cases corresponding to admissible sizes of $\cR(\cS)$.
\begin{itemize}
\item The smallest $\cR(\cS)$ is $d\times (d+2)$ or $(d+2)\times d$. Take the rectangle
of size $d\times (d+2)$.
Each column must have at least $d$ erasures. Then $\cS$ can only be obvious with weight $d(d+2)$ which
is outside the range for $w$ in this case. 
Hence, this size of the rectangular support yields
no stopping sets, $\tau_w=0$ in this sub-case.
\item $\cR(\cS)$ has size $(d+1)\times (d+1)$. 
The number of zeros is $\beta=(d+1)^2-w=d+1-\lambda$, where $\beta \in [2,d]$.\\
Put the zeros in $\cR(\cS)$ not exceeding one per column and not exceeding one per row. 
The enumeration of these stopping sets
is given by selecting the $\beta$ rows and the $\beta$ columns, then filling all $\beta \times \beta$
permutation matrices in the zero positions. Hence, given that ${d+1 \choose \beta}={d+1 \choose \lambda}$,
we get for this sub-case
\[
\tau_w=\beta!  {d+1 \choose \lambda}^2  {n_1 \choose d+1}  {n_2 \choose d+1}. 
\]
All corresponding stopping sets are not obvious (the rank is greater than 1).
\item $\cR(\cS)$ has size $(d+1)(d+2)$. The number of zeros is 
$\beta=(d+1)(d+2)-w=2d+2-\lambda \in [d+3, 2d+1]$. Then $\beta > d+2=(\ell_1-d)\ell_2$
which contradicts Lemma~\ref{lem_max_zeros}.
In a similar way, it can be proven that $\tau_w=0$ in the sub-case $\cR(\cS)$ with size $(d+2)^2$.
\end{itemize}
\item The case $w=d(d+2)$.\\
The admissible rectangular support can have four sizes:
$d(d+2)$, $(d+1)(d+1)$, $(d+1)(d+2)$, and $(d+2)(d+2)$.
\begin{itemize}
\item $\cR(\cS)$ has size $d(d+2)$. 
According to Lemma~\ref{lem_max_zeros}, we have $\beta=0$. All these stopping sets are obvious.
Their number is
\[
{n_1 \choose d}  {n_2 \choose d+2}+{n_1 \choose d+2}  {n_2 \choose d}.
\]
\item $\cR(\cS)$ has size $(d+1)(d+1)$. We have $\beta=1$. The number of these
stopping sets is
\[
(d+1)^2  {n_1 \choose d+1}  {n_2 \choose d+1}.
\]
\item $\cR(\cS)$ has size $(d+1)(d+2)$. The number of zeros is $\beta=d+2$. 
Each column must have
a unique zero and each row cannot have more than two zeros. 
Let $r_i$ be the number of rows containing $i$ zeros, $i=0, 1, 2$. 
Then $r_0+r_1+r_2=d+1$ and $\beta=2r_2+r_1$, so the constraint is $2r_0+r_1=d$.
Given a stopping set satisfying this constraint, a permutation can be applied
on the $(d+2)$ columns to create another stopping set.
But a row with two zeros creates two identical columns, so the number of stopping
sets should be divided by $2^{r_2}$, where $r_2=d+1-r_0-r_1$.
The number of stopping sets in this sub-case is
\[
 \sum_{2r_0+r_1=d} {d+1 \choose r_0} {d+1-r_0 \choose r_1} \frac{(d+2)!}{2^{r_2}} 
\left[ 
{n_1 \choose d+1}  {n_2 \choose d+2} + {n_1 \choose d+2}  {n_2 \choose d+1}
\right]. 
\]
\item $\cR(\cS)$ has size $(d+2)(d+2)$. We have $\beta=2d+4$ reaching
the upper bound in Lemma~\ref{lem_max_zeros}. $\cR(\cS)$ must have
two zeros in each column and two zeros in each row. 
A first group of these stopping sets can be enumerated
by building $\cR(\cS)$ with two zero length-$(d+2)$ diagonals (to be folded
if not the main diagonal) and then
applying all row and column permutations. This generates all Hamiltonian
bipartite graphs with $d+2$ left vertices and $d+2$ right vertices,
their number is 
\[
\frac{\left( (d+2)! \right)^2}{2(d+2)},
\]
as known from Lemma~\ref{lem_graph_bipartite_deg2}.
In fact, the full exact enumeration of stopping sets in this case  
is already made by Lemma~\ref{lem_graph_bipartite_deg2} and its proof,
just take $\ell=d+2$. 
Then, in this sub-case, the number of stopping sets is given by
\[
x_{d+2}  {n_1 \choose d+2}  {n_2 \choose d+2}.
\]
\end{itemize}
\item The case $w=(d+1)(d+1)$.\\
The admissible rectangular support can have three possible sizes
$(d+1)(d+1)$, $(d+1)(d+2)$, and $(d+2)(d+2)$.
\begin{itemize}
\item $\cR(\cS)$ has size $(d+1)(d+1)$. We have $\beta=0$, i.e. $\cR(\cS)=\cS$.
The number of these obvious stopping sets is
\[
{n_1 \choose d+1}  {n_2 \choose d+1}.
\]
\item $\cR(\cS)$ has size $(d+1)(d+2)$. We have $\beta=d+1$. 
A column of $\cR(\cS)$ should contain at most one zero
and a row should contain at most two zeros.
Let $r_i$ be the number of rows containing $i$ zeros, $i=0, 1, 2$. 
Then $r_0+r_1+r_2=d+1$ and $\beta=2r_2+r_1$, so the constraint is $2r_0+r_1=d+1$.
Given a stopping set satisfying this constraint, a permutation can be applied
on the $(d+2)$ columns to create another stopping set.
The number of stopping sets in this sub-case is 
\[
 \sum_{2r_0+r_1=d+1} {d+1 \choose r_0} {d+1-r_0 \choose r_1} \frac{(d+2)!}{2^{r_2}}  
\left[ 
{n_1 \choose d+1}  {n_2 \choose d+2} + {n_1 \choose d+2}  {n_2 \choose d+1}
\right], 
\]
where $r_2=r_0$.
\item $\cR(\cS)$ has size $(d+2)(d+2)$. $\beta=2d+3$ which is less than the upper
bound in Lemma~\ref{lem_max_zeros}. These stopping sets are equivalent to bipartite
graphs considered in Lemma~\ref{lem_graph_bipartite_deg2_1}. 
Then, in this sub-case, the number of stopping sets is given by
\[
y_{d+2}  {n_1 \choose d+2}  {n_2 \choose d+2}.
\]
\end{itemize}

\end{itemize}
\end{proof}
From the proof of Theorem~\ref{th_stopping_sets_d}, in the case $w=d(d+2)$
with a $(d+1)\times (d+2)$ rectangular support, the enumeration of stopping sets
is directly converted into enumeration of trivial bipartite graphs defined by:
a- $\ell$ left vertices and a left degree $0$, $1$, or $2$, and b- $\ell+1$ right vertices all of degree~$1$.
Similarly, the proof for the case $w=d(d+2)$ with a $(d+1)\times (d+2)$ rectangular support
is directly related to the enumeration of bipartite graphs with one edge less.

\begin{theorem}
\label{th_stopping_sets_d1_d2}
Let $C_P$ be a product code $[n_1,k_1,d_1]_q \otimes [n_2,k_2,d_2]_q$ built from row and column MDS
components, where the alphabet size $q$ is greater than $\max(n_1, n_2)$. 
Let $\tau_w$ be the number of stopping sets of Hamming weight~$w$. 
We write $\tau_w=\tau^a+\tau^b$, where
$\tau^a$ counts obvious stopping sets and $\tau^b$ counts non-obvious stopping sets.
It is assumed that $2 < d_1 < d_2 < 3d_1-1$ or $2=d_1 < d_2 < 4d_1-1$. 
Under iterative algebraic decoding, stopping sets are characterized as follows.
\begin{itemize}
\itemsep=2mm
\item For $w<d_1d_2$,
\[
\tau^a=\tau^b=0.
\]
\item For $w=d_1d_2$,
\[
\tau^a ={n_1 \choose d_1}  {n_2 \choose d_2}, ~~~\tau^b=0.
\]
\item For $w\in ]d_1d_2, d_1(d_2+1)[$,
\[
\tau^a=\tau^b=0.
\]
\item For $w=d_1(d_2 + 1)$,
\[
\tau^a ={n_1 \choose d_1}  {n_2 \choose d_2+1}, ~~~\tau^b=0. 
\]
\end{itemize}
For larger weights, the enumeration of stopping sets distinguishes
three cases: A, B, and C.\\
\noindent
{\bf Case A: $d_2 <  2d_1$.}
\begin{itemize}
\itemsep=2mm
\item For $w\in ]d_1(d_2 + 1),(d_1+1)d_2[$.\\
\[
\tau^a=\tau^b=0.
\]
\item For $w=(d_1+1)d_2$.\\
\begin{align}
\tau^a &={n_1 \choose d_1+1}  {n_2 \choose d_2}, \nonumber \\
\tau^b &=(d_1+1)!  {d_2+1 \choose d_2-d_1}  {n_1 \choose d_1+1}  {n_2 \choose d_2+1}. \nonumber
\end{align}
\item For $w\in ](d_1+1)d_2,d_1(d_2+2)[$, write $w=(d_1+1)d_2+\lambda$.\\
\[
\tau^b=\mathds{1}_{\{d_2<2d_1-1\}} \times
(d_1+1-\lambda)!  {d_1+1 \choose \lambda} {d_2+1 \choose d_1+1-\lambda} 
{n_1 \choose d_1+1}  {n_2 \choose d_2+1}.
\]
\item For $w=d_1(d_2+2)$.\\
\begin{align}
\tau^a &= {n_1 \choose d_1}  {n_2 \choose d_2+2}, \nonumber \\
\tau^b &= (d_2-d_1+1)! {d_1+1 \choose d_2-d_1+1}  {d_2+1 \choose d_2-d_1+1}{n_1 \choose d_1+1}  {n_2 \choose d_2+1} \nonumber \\
& + \sum_{2r_0+r_1=2d_1-d_2} {d_1+1 \choose r_0} {d_1+1-r_0 \choose r_1} \frac{(d_2+2)!}{2^{r_0+d_2-d_1+1}} {n_1 \choose d_1+1}  {n_2 \choose d_2+2}. \nonumber 
\end{align}

\item For $w\in ]d_1(d_2+2), (d_1+1)(d_2+1)[$, write $w=d_1(d_2+2)+\lambda$.\\
\begin{align}
\tau^b &=(d_2-d_1+1-\lambda)! {d_1+1 \choose 2d_1-d_2+\lambda}  {d_2+1 \choose d_1+\lambda} {n_1 \choose d_1+1}  {n_2 \choose d_2+1} \nonumber \\
& + \sum_{2r_0+r_1=2d_1-d_2+\lambda} {d_1+1 \choose r_0} {d_1+1-r_0 \choose r_1} \frac{(d_2+2)!}{2^{r_2}  \lambda!} {n_1 \choose d_1+1}  {n_2 \choose d_2+2}, \nonumber
\end{align}
where $r_2=r_0+\lambda-d_1-1$.
\item For $w=(d_1+1)(d_2+1)$.\\
\begin{align}
\tau^a&={n_1 \choose d_1+1}  {n_2 \choose d_2+1}
+ \mathds{1}_{\{d_2=2d_1-1\}} {n_1 \choose d_1}  {n_2 \choose d_2+3} 
+ \mathds{1}_{\{d_2=d_1+1\}} {n_1 \choose d_1+2}  {n_2 \choose d_2}, \nonumber \\
\tau^b&=\sum_{2r_0+r_1=d_1+1} {d_1+1 \choose r_0} {d_1+1-r_0 \choose r_1}
\frac{(d_2+2)!}{2^{r_0}(d_2-d_1+1)!} {n_1 \choose d_1+1}  {n_2 \choose d_2+2} \nonumber \\
& +\mathds{1}_{\{d_2=d_1+1\}} \times  
\sum_{2r_0+r_1=d_2+1} {d_2+1 \choose r_0} {d_2+1-r_0 \choose r_1} 
\frac{(d_1+2)!}{2^{r_0}} {n_1 \choose d_1+2}  {n_2 \choose d_2+1} \nonumber \\
& + \mathds{1}_{\{d_2=2d_1-1\}} \times 
\sum_{3r_0+2r_1+r_2=d_1+1} {d_1+1 \choose r_0} {d_1+1-r_0 \choose r_1} {d_1+1-r_0-r_1 \choose r_2}\times \nonumber \\
&  ~~~~~~~~~~~~~~~~~~~~~~~~~~~~~~~~~~~~~~~~~~~~~ \frac{(d_2+3)!}{2^{r_2}6^{r_3}} {n_1 \choose d_1+1}  {n_2 \choose d_2+3} \nonumber \\
& +\mathds{1}_{\{d_2=d_1+1\}} 
\left((d_2+2)x_{d_1+2} + \frac{(d_2+2)y_{d_1+2}}{2} \right) 
{n_1 \choose d_1+2}  {n_2 \choose d_2+2} \nonumber \\
& + \mathds{1}_{\{d_1=2,d_2=3\}}  \times 1860  {n_1 \choose d_1+2}  {n_2 \choose d_2+3}.\nonumber
\end{align}
where $r_3=d_1+1-r_0-r_1-r_2$, and $x_{d_1+2}$ and $y_{d_1+2}$ are determined 
from Lemma~\ref{lem_graph_bipartite_deg2} and Lemma~\ref{lem_graph_bipartite_deg2_1} respectively.
\end{itemize}

\noindent
{\bf Case B: $d_2 =  2d_1$.}
\begin{itemize}
\itemsep=2mm
\item For $w\in ]d_1(d_2 + 1),(d_1+1)d_2[$.\\
\[
\tau^a=\tau^b=0.
\]
\item For $w=(d_1+1)d_2=d_1(d_2+2)$.\\
\begin{align}
\tau^a &={n_1 \choose d_1+1} {n_2 \choose d_2} + {n_1 \choose d_1} {n_2 \choose d_2+2}, \nonumber \\
\tau^b &= (d_1+1)! {d_2+1 \choose d_1+1} {n_1 \choose d_1+1} {n_2 \choose d_2+1} 
+ \frac{(d_2+2)!}{2^{d_1+1}} {n_1 \choose d_1+1} {n_2 \choose d_2+2}. \nonumber
\end{align}
\item For $w\in ](d_1+1)d_2,d_1(d_2+3)[$, write $w=(d_1+1)d_2+\lambda$.\\
\begin{align}
\tau^b &= (d_1+1-\lambda)! {d_1+1 \choose \lambda} {d_2+1 \choose d_1+\lambda} {n_1 \choose d_1+1} {n_2 \choose d_2+1} \nonumber \\
&+ \sum_{2r_0+r_1=\lambda} {d_1+1 \choose r_0} {d_1+1-r_0 \choose r_1} 
\frac{(d_2+2)!}{2^{r_2} \lambda!} {n_1 \choose d_1+1} {n_2 \choose d_2+2}, \nonumber
\end{align}
where $r_2=d_1+1-r_0-r_1=d_1+1+r_0-\lambda$.
\item For $w=d_1(d_2+3)$.\\
\begin{align}
\tau^a &= {n_1 \choose d_1} {n_2 \choose d_2+3},\nonumber \\
\tau^b &= (d_1+1) (d_2+1) {n_1 \choose d_1+1} {n_2 \choose d_2+1} \nonumber \\
& + \sum_{2r_0+r_1=d_1} {d_1+1 \choose r_0} {d_1+1-r_0 \choose r_1} 
\frac{(d_2+2)!}{2^{r_2} d_1!} {n_1 \choose d_1+1} {n_2 \choose d_2+2} \nonumber \\
& + \sum_{3r_0+2r_1+r_2=d_1} {d_1+1 \choose r_0} {d_1+1-r_0 \choose r_1} 
{d_1+1-r_0-r_1 \choose r_2} \frac{(d_2+3)!}{2^{r_2}6^{r_3}} {n_1 \choose d_1+1} {n_2 \choose d_2+3}, \nonumber
\end{align}
where $r_3=d_1+1-r_0-r_1-r_2=2r_0+r_1+1$.
\item For $w=(d_1+1)(d_2+1)$.\\
\begin{align}
\tau^a &={n_1 \choose d_1+1} {n_2 \choose d_2+1}, \nonumber \\
\tau^b &=\sum_{2r_0+r_1=d_1+1} {d_1+1 \choose r_0} {d_1+1-r_0 \choose r_1} 
\frac{(d_2+2)!}{2^{r_0}(d_1+1)!} {n_1 \choose d_1+1} {n_2 \choose d_2+2} \nonumber \\
& + \sum_{3r_0+2r_1+r_2=d_1+1} {d_1+1 \choose r_0} {d_1+1-r_0 \choose r_1}  {d_1+1-r_0-r_1 \choose r_2}\frac{(d_2+3)!}{2^{r_2}6^{2r_0+r_1}} {n_1 \choose d_1+1} {n_2 \choose d_2+3}. \nonumber 
\end{align}
\end{itemize}

\noindent
{\bf Case C: $4 < 2d_1 <  d_2 < 3d_1-1$\\
or $4=2d_1 <  d_2 < 4d_1-1$.}
\begin{itemize}
\itemsep=2mm
\item For $w\in ]d_1(d_2 + 1),d_1(d_2+2)[$.\\
\[
\tau^a=\tau^b=0.
\]

\item For $w=d_1(d_2+2)$.\\
\[
\tau^a={n_1 \choose d_1} {n_2 \choose d_2+2}.
\]

\item For $w\in ]d_1(d_2+2),(d_1+1)d_2[$.\\
\[
\tau^a=\tau^b=0.
\]

\item For $w=(d_1+1)d_2$.\\
\begin{align}
\tau^a &={n_1 \choose d_1+1}  {n_2 \choose d_2},  \nonumber \\
\tau^b &=(d_1+1)!  {d_2+1 \choose d_1+1}  {n_1 \choose d_1+1}  {n_2 \choose d_2+1} 
 + \frac{(d_2+2)!}{2^{d_1+1}(d_2-2d_1)!}  {n_1 \choose d_1+1}  {n_2 \choose d_2+2} \nonumber \\
& + \mathds{1}_{\{d_1=2,d_2=6\}} \times 1680 {n_1 \choose d_1+1}  {n_2 \choose d_2+3}. \nonumber 
\end{align}
\newpage
\item For $w\in ](d_1+1)d_2 , d_1(d_2+3)[$, write $w=(d_1+1)d_2 +\lambda$.\\
\begin{align}
\tau^b & = \mathds{1}_{\{d_1>2\}}  \times \Bigg[ (d_1+1-\lambda)!   {d_2+1 \choose d_1+1-\lambda}  {d_1+1 \choose \lambda} {n_1 \choose d_1+1}  {n_2 \choose d_2+1} \nonumber\\
& + \sum_{2r_0+r_1=\lambda}  {d_1+1 \choose r_0} {d_1+1-r_0 \choose r_1} \frac{(d_2+2)!}{2^{r_2}(d_2-2d_1+\lambda)!}  {n_1 \choose d_1+1}  {n_2 \choose d_2+2} \Bigg]. \nonumber
\end{align}
where $r_2=d_1+1+r_0-\lambda$.  

\item For $w=d_1(d_2+3)$.
\begin{align}
\tau^a & ={n_1 \choose d_1} {n_2 \choose d_2+3},\nonumber \\
\tau^b & =\mathds{1}_{\{d_1=2,d_2=6\} \bigcup \{ d_1>2 \} } \times
\Bigg[ (d_2-2d_1+1)! {d_1+1 \choose 3d_1-d_2} {d_2+1 \choose 2d_1} {n_1 \choose d_1+1} {n_2 \choose d_2+1} \nonumber \\
&+\sum_{2r_0+r_1=3d_1-d_2}  {d_1+1 \choose r_0} {d_1+1-r_0 \choose r_1} \frac{(d_2+2)!}{2^{r_2}d_1!}  {n_1 \choose d_1+1} {n_2 \choose d_2+2} \nonumber\\
&+\sum_{3r_0+2r_1+r_2=3d_1-d_2} {d_1+1 \choose r_0} {d_1+1-r_0 \choose r_1} 
  {d_1+1-r_0-r_1 \choose r_2}\frac{(d_2+3)!}{2^{r_2}6^{r_3}}
  {n_1 \choose d_1+1} {n_2 \choose d_2+3}\Bigg], \nonumber
\end{align}
where $r_2=d_2-2d_1+1+r_0$ and $r_3=d_1+1-r_0-r_1-r_2$.

\item For $w=d_1(d_2+4)$.\\
\[
\tau^a=\mathds{1}_{\{d_1=2\}} \times { n_1 \choose d_1} {n_2 \choose d_2+4}.
\]

\item For $w\in ]d_1(d_2+3) , (d_1+1)(d_2+1)[$, write $w=d_1(d_2+3) +\lambda$.\\
\begin{align}
\tau^b &=(d_2-2d_1+1-\lambda)!  {d_1+1 \choose 3d_1-d_2+\lambda}  {d_2+1 \choose 2d_1+\lambda} 
{n_1 \choose d_1+1} {n_2 \choose d_2+1} \nonumber \\
&+\sum_{2r_0+r_1=3d_1-d_2+\lambda}  {d_1+1 \choose r_0} {d_1+1-r_0 \choose r_1} 
\frac{(d_2+2)!}{2^{r_2}(d_1+\lambda)!} {n_1 \choose d_1+1}  {n_2 \choose d_2+2} \nonumber \\
&+\sum_{3r_0+2r_1+r_2=3d_1-d_2+\lambda} {d_1+1 \choose r_0} {d_1+1-r_0 \choose r_1} {d_1+1-r_0-r_1 \choose r_2}\frac{(d_2+3)!}{2^{r_2}6^{r_3}\lambda!}{n_1 \choose d_1+1}{n_2 \choose d_2+3} \nonumber \\
&+\mathds{1}_{\{d_1=2,d_2=6\}} \times 22050{n_1 \choose d_1+1}{n_2 \choose d_2+4}, \nonumber
\end{align}
where $r_2=d_1+1-r_0-r_1$ and $r_3=d_1+1-r_0-r_1-r_2$.
\item For $w=(d_1+1)(d_2+1)$.\\
\begin{align}
\tau^a &={n_1 \choose d_1+1}  {n_2 \choose d_2+1}, \nonumber \\
\tau^b &=(d_1+1)!  {d_2+2 \choose d_1+1}   {n_1 \choose d_1+1} {n_2 \choose d_2+2}\nonumber \\
&+\sum_{3r_0+2r_1+r_2=d_1+1}  {d_1+1 \choose r_0} {d_1+1-r_0 \choose r_1}{d_1+1-r_0-r_1 \choose r_2}\frac{(d_2+3)!}{2^{r_2}6^{r_3}(d_2-2d_1+1)!}\times\nonumber \\
& ~~~~~~~~~~~~~~~~~~~~~~~~{n_1 \choose d_1+1}{n_2 \choose d_2+3}\nonumber \\
&+\mathds{1}_{\{d_1=2,d_2=5\}} \times 11130{n_1 \choose d_1+1}{n_2 \choose d_2+4}\nonumber \\
&+\mathds{1}_{\{d_1=2,d_2=6\}} \times 111300{n_1 \choose d_1+1}{n_2 \choose d_2+4},\nonumber
\end{align}
where $r_3=d_1+1-r_0-r_1-r_2$.
\end{itemize}
\end{theorem}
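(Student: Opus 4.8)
The plan is to run the same shape-by-shape enumeration used in the proof of Theorem~\ref{th_stopping_sets_d}, now in the asymmetric regime $d_1<d_2$. By Corollary~\ref{cor_stop_in_G} a type-II stopping set is exactly a $0/1$ matrix supported on a rectangle $\cR(\cS)$ of some size $\ell_1\times\ell_2$ in which every column carries at least $d_1$ erasures, every row carries at least $d_2$ erasures, and no row or column of $\cR(\cS)$ is empty. So the first step, for each fixed weight $w$ in the window $[d_1d_2,(d_1+1)(d_2+1)]$, is to list the admissible shapes: Lemma~\ref{lem_max_support} gives $\ell_1\le d_1+1+\lfloor (d_1+1)/d_2\rfloor$ and $\ell_2\le d_2+1+\lfloor (d_2+1)/d_1\rfloor$, while Lemma~\ref{lem_max_zeros} bounds the number of zeros $\beta=\ell_1\ell_2-w$ by $\min((\ell_1-d_1)\ell_2,\ell_1(\ell_2-d_2))$. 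Since $d_1<d_2$ one has $\lfloor (d_1+1)/d_2\rfloor\in\{0,1\}$, equal to $1$ only when $d_2=d_1+1$, which is the origin of the $\mathds{1}_{\{d_2=d_1+1\}}$ terms carrying a factor $\binom{n_1}{d_1+2}$; and $\lfloor (d_2+1)/d_1\rfloor$ controls how many extra columns $(d_2+1,d_2+2,d_2+3,d_2+4)$ a support may have, jumping exactly at $d_2=2d_1-1$, $d_2=2d_1$, $d_2=3d_1-1$ and, for $d_1=2$, at $d_2=5,6$. This is precisely why the enumeration splits into branches A ($d_2<2d_1$), B ($d_2=2d_1$), C ($2d_1<d_2$), together with the sporadic small cases that contribute the explicit integer constants.

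The second step, for each admissible shape, is to pass to the complement pattern of $\beta$ zeros, in which every column holds at most $\ell_1-d_1$ zeros and every row at most $\ell_2-d_2$ zeros. When $\ell_1=d_1$ this forces $\beta=0$ and an obvious stopping set, contributing to $\tau^a$. When $\ell_1=d_1+1$ each column holds at most one zero, so the zeros form a simple bipartite graph with right-degree $\le 1$ and left-degree $\le\ell_2-d_2\in\{1,2,3\}$; these are counted by first deciding how many rows carry $0,1,2,\dots$ zeros — the variables $r_0,r_1,r_2,\dots$ constrained by $2r_0+r_1=\cdots$ or $3r_0+2r_1+r_2=\cdots$ that appear in the statement — then placing the affected columns by a multinomial count, and finally dividing by $2^{r_2}$ (resp.\ $6^{r_3}$) to discard the overcount caused by a row with two (resp.\ three) zeros producing two (resp.\ six) indistinguishable column selections. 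The square-support subcases with $(\ell_1,\ell_2)=(d_1+2,d_1+2)$ occur only when $d_2=d_1+1$ and reduce, exactly as in Theorem~\ref{th_stopping_sets_d}, to the degree-$2$ bipartite counts $x_{d_1+2},y_{d_1+2}$ of Lemmas~\ref{lem_graph_bipartite_deg2}--\ref{lem_graph_bipartite_deg2_1}. The only genuinely new enumerations are the handful of small bipartite graphs on $(d_1+2)$ or $(d_2+3),(d_2+4)$ vertices underlying the constants $1680,1860,11130,22050,111300$, which are obtained by direct inspection or by specializing the recursion behind those two lemmas to small parameters.

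The third step is to collect, for each $w$, the contributions of all surviving shapes and multiply by $\binom{n_1}{\ell_1}\binom{n_2}{\ell_2}$, or by the transpose $\binom{n_1}{\ell_2}\binom{n_2}{\ell_1}$ when $\ell_1\ne\ell_2$, since the row and column codes have distinct distances $d_2$ and $d_1$ and a given shape may sit with its long side along either axis. Determining which transposed shapes survive — a $(d_1+2)\times(d_2+1)$ support only if $d_2=d_1+1$, a $(d_1+1)\times(d_2+3)$ support only if $d_2\le 2d_1-1$, and so on, all read off from the second bound in Lemma~\ref{lem_max_zeros} — is exactly what forces the many indicator functions.

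The main obstacle is the bookkeeping itself, which is far heavier than in the $d_1=d_2$ case. The three delicate points are: (i) matching the overcount-correction factors $2^{r_2}$ and $6^{r_3}$ exactly to the degree profile encoded by $(r_0,r_1,r_2,\dots)$; (ii) pinning down, through Lemma~\ref{lem_max_zeros}, the precise thresholds in $d_2/d_1$ at which each wide rectangle ceases to be admissible, which is what separates Cases A, B and C and fixes all the $\mathds{1}_{\{\cdot\}}$ conditions; and (iii) the few sporadic graph counts for $d_1=2$, where the relevant bipartite graphs have left-degree up to $3$ and lie outside the scope of Lemmas~\ref{lem_graph_bipartite_deg2}--\ref{lem_graph_bipartite_deg2_1}, so must be handled ad hoc. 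Once the correct list of (shape, zero-pattern count) pairs has been assembled for every $w$, each individual count is an elementary multinomial computation, so no idea beyond those already used for the symmetric case is required — only a long and careful enumeration.
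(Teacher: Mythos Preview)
Your proposal is correct and follows essentially the same approach as the paper's proof in Appendix~A: list the admissible rectangular supports via Lemmas~\ref{lem_max_support} and~\ref{lem_max_zeros}, pass to the complement pattern of $\beta$ zeros, enumerate via the row zero-profile $(r_0,r_1,\dots)$ with multinomial counts corrected by $2^{r_2}$ and $6^{r_3}$, invoke Lemmas~\ref{lem_graph_bipartite_deg2}--\ref{lem_graph_bipartite_deg2_1} for the $(d_1+2)$-row supports arising when $d_2=d_1+1$, and handle the $d_1=2$ sporadic constants by direct inspection. One small imprecision: your ``transpose'' language is slightly misleading, since in the asymmetric regime an $\ell_1\times\ell_2$ support and an $\ell_2\times\ell_1$ support satisfy different column/row constraints and must be treated as distinct shapes from the outset (as the paper does), not as a single shape counted twice.
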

\vspace{5mm}
The detailed proof of Theorem~\ref{th_stopping_sets_d1_d2} is found in Appendix~A.
From a stopping set perspective, 
both theorems \ref{th_stopping_sets_d}\&\ref{th_stopping_sets_d1_d2} match Tolhuizen's
results on weight distribution for a weight less than $d_1d_2+d_2$ \cite{Tolhuizen2002}. 
Our theorems found stopping sets that are only obvious for $w$ in the range $[d_1d_2, d_1d_2+d_2[$. 
For any weight $w$, there exists an equivalence in support between codewords
and obvious stopping sets (thanks to Proposition~\ref{prop_stop_codewords}). 
Trivial lower and upper bounds of the number of obvious weight-$w$ product code codewords are
\[
(q-1)\tau_w \le A_w \le \mathds{1}_{\{\tau_w \ne 0\}} A_w.
\]
For non-obvious stopping sets and non-obvious codewords, establishing a clear relationship
is still an open problem. This is directly related to solving the weight enumeration
beyond $d_1d_2+\max(d_1,d_2)$. In the special case $d_1=d_2=d$, Sendrier gave 
upper bounds of the number of erasure patterns for a weight up to $d^2+2d-1$ \cite{Sendrier1991}.

\section{Edge coloring algorithm under constraints \label{sec_edge_coloring}}
In section~\ref{sec_graph_representations}, we described graph representations of product codes
and we introduced the root order $\rho(e)$ of an edge with respect to its color $\phi(e)$. 
Our objective is to find a coloring $\phi$ such that the maximum diversity order is reached
under block erasures. The notion of root order in Definition~(\ref{def_root_order}) is for
double diversity ($L=2$) because it indirectly assumes that all symbols of one color out of $M$ 
can be erased by the channel.
Given the Singleton bound tradeoff stated in (\ref{equ_singleton_bound}),
double diversity is sufficient in distributed storage applications 
where the required coding rate should be sufficiently high.
Definition~(\ref{def_root_order}) may be generalized to take into account two or more erased colors,
e.g. see Figure~11 in \cite{Boutros2010} for $L=3$ with $M=3$ colors where an information symbol 
is protected by multiple root checknodes. In this paper,
we restrict both Definition~(\ref{def_root_order}) and the design in this section to a double-diversity product code.
This double diversity on a block-erasure channel is achieved if all stopping sets, 
as defined and counted in the previous section,
can be colored in a way such that at least two distinct colors are found within the symbols of a stopping set 
(valid for both $\G$ and $\G^c$).
This task is intractable. Imagine an edge coloring $\phi$ designed in a way to guarantee that all
weight-$w$ stopping sets include at least two colors. This task is already very hard (or almost impossible) 
for a fixed $w$. There is no coloring design tool for non-trivial product codes 
to ensure that all stopping sets of all weights incorporate at least two distinct colors.

\subsection{Hand-made edge coloring and its limitations\label{sec_edge_coloring_sub1}}
The aim of this section is to give more insight on designing edge coloring,
before introducing the differential evolution algorithm.\\ 

The compact graph $\G^c$ makes the design much simpler, as we saw in Section~\ref{sec_stopping_sets_sub3}.
The number of super-edges with the same color is $N^c/M$. 
We also know from (\ref{equ_Ri_and_Vi})-(\ref{equ_Nc_and_R}) that the size, height, and width
of $G^c$ are directly related to the component and total coding rates.

\begin{lemma}
\label{lem_rho=1}
Let $C_P=C_1 \otimes C_2$ be a product code with a column component $C_1[n_1,k_1]_q$
and a row component $C_2[n_2,k_2]_q$ whose coding rates are $R_1=k_1/n_1$ and $R_2=k_2/n_2$ respectively. 
Assume that $n_i-k_i$ divides $n_i$, for $i=1,2$, and assume that $M$ divides $N^c$.
$G^c$ admits an edge coloring $\phi$ such
that $\rho_{max}(\phi)=1$ if the coding rates satisfy
\begin{equation}
\min(R_1,R_2) \le 1-\frac{1}{M}.
\end{equation}
\end{lemma}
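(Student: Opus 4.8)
The plan is to translate the rate condition into a purely combinatorial statement about the compact bipartite graph $\G^c$ and then write down one explicit colouring. Recall from (\ref{equ_Ri_and_Vi}) that, under the divisibility hypothesis, the two vertex classes of $\G^c$ have sizes $\tfrac{1}{1-R_1}$ and $\tfrac{1}{1-R_2}$; since $\G^c$ is complete bipartite, the degree of a vertex equals the size of the opposite class, so the smallest vertex degree occurring in $\G^c$ is $\tfrac{1}{1-\min(R_1,R_2)}$. Hence the inequality $\min(R_1,R_2)\le 1-\tfrac1M$ is exactly equivalent to the statement that at least one of the two vertex classes of $\G^c$ consists entirely of vertices of degree at most $M$. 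I will call that class the \emph{pivot class}, write $\Delta\le M$ for the common degree of its vertices, and note that $N^c=\Delta\cdot(\text{size of the pivot class})$.

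Next I would reduce the lemma to a rainbow-colouring condition. By Definition~\ref{def_root_supersymbol} together with Definition~\ref{def_root_order}, a super-edge $e$ has $\rho(e)=1$ as soon as it has an endpoint $\upsilon$ at which no \emph{other} incident super-edge carries the colour $\phi(e)$ (this uses only that $\rho(\cdot)=\min(\rho_1,\rho_2)$ and that a vertex with no repeated colour forces the corresponding $\rho_i=1$, independently of whether $\upsilon\in V^c_1$ or $\upsilon\in V^c_2$). Therefore it suffices to produce a colouring $\phi\in\Phi(E^c)$ that is \emph{rainbow at the pivot class}, i.e. the $\Delta$ super-edges incident to each pivot vertex receive pairwise distinct colours: then every super-edge is a root super-symbol through its pivot endpoint, so $\rho(e)=1$ for all $e$ and $\rho_{max}(\phi)=1$ (and, by Corollary~\ref{cor_rho_max}, $\phi$ is moreover a double-diversity colouring).

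To build such a $\phi$ I would use a cyclic filling of the palette. Enumerate the $N^c$ super-edges as $e_1,\dots,e_{N^c}$ so that the $\Delta$ super-edges incident to the first pivot vertex come first, then those incident to the second pivot vertex, and so on, and set $\phi(e_t)=\big((t-1)\bmod M\big)+1$. Two verifications remain. First, the $\Delta$ colours received by the super-edges at a fixed pivot vertex are $\Delta$ cyclically consecutive residues modulo $M$, hence pairwise distinct because $\Delta\le M$; this is the rainbow property. Second, since $M$ divides $N^c$, each colour is used exactly $N^c/M$ times, so $\phi$ is balanced and indeed belongs to $\Phi(E^c)$. Combining the two verifications with the reduction above proves the lemma.

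I do not expect a genuine obstacle here; the one point that needs care is that the rainbow-at-the-pivot-class requirement must be satisfied \emph{simultaneously} with the balance constraint built into $\Phi(E^c)$, and the cyclic filling achieves precisely this once one knows $\Delta\le M$ and $M\mid N^c$. It is worth remarking that when the two classes of $\G^c$ have equal size the colouring can be displayed as an $M$-coloured circulant array, which recovers the $3\times 3$ circulant colouring exhibited earlier for the $[9,6,4]_q^{\otimes 2}$ code.
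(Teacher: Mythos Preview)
Your argument is correct and rests on the same underlying observation as the paper's: once every vertex on one side of $\G^c$ sees pairwise distinct colours on its incident super-edges, every super-edge is a root super-symbol there and $\rho_{max}=1$. The paper reaches this by asking that each colour class lie entirely within a single row (or column) of the compact matrix and then only checks the necessary size condition $N^c/M\le\max(|V_1^c|,|V_2^c|)$; it does not verify that the colour classes can actually be packed into the rows while keeping the colouring balanced. Your cyclic filling is more explicit and strictly more robust: it needs only $\Delta\le M$ for the rainbow property and $M\mid N^c$ for balance, whereas the paper's one-row-per-colour scheme tacitly requires an extra divisibility such as $\min(|V_1^c|,|V_2^c|)\mid M$. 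For instance with $|V_2^c|=3$, $|V_1^c|=10$, $M=5$ (realised by $[9,6]_q\otimes[20,18]_q$) each colour has $6$ super-edges, too many for a column and impossible to pack one-per-row into three rows of length $10$, yet your cyclic colouring still yields $\rho_{max}=1$. So your route buys a cleaner construction that closes a small gap in the paper's proof, at the modest cost of the paper's more visual ``stripe'' picture used in the examples that follow the lemma.
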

\begin{proof}
Consider the $|V_1^c| \times |V_2^c|$ matrix representation of $\G^c$. 
A sufficient condition to get $\rho_{max}(\phi)=1$ is to assign the $N^c/M$ edges having the same color
to a single row or a single column. The sufficient condition for $\rho_{max}(\phi)=1$ is expressed as
$N^c/M \le \max(n_1/(n_1-k_1), n_2/(n_2-k_2))$, the $\max$ let us select the longest item among a row or a column.
Recall also that $|V_i^c|=n_i/(n_i-k_i)$. 
Using (\ref{equ_Nc}), the sufficient condition becomes $n_1n_2 \le M\cdot\max(n_1(n_2-k_2), n_2(n_1-k_1))$.
Divide by $n_1n_2$ to get the inequality announced in the Lemma statement.
\end{proof}

When the palette has $M=4$ colors, the sufficient condition in Lemma~\ref{lem_rho=1} is written
as $\min(R_1,R_2) \le 3/4$. In order to achieve the block-fading Singleton bound for $M=4$,
we should take $R_1=3/4$ and $R_2=1$, i.e. the product code degenerates to a single component code.
It is possible to approach $R=3/4$ by keeping $R_1=3/4$ and letting $R_2=\frac{n_2-1}{n_2}$ be very close to $1$.
In this case, the row code $C_2$ is a single-parity check code over $F_q$. The product code is very unbalanced.
An example of such an unbalanced product code is
\[
C_P=[12,9,4]_q \otimes [14,13,2]_q.
\]
From the proof of Lemma~\ref{lem_rho=1}, the edge coloring of $\G^c$ satisfying $\rho_{max}=1$
is given by the following $4 \times 14$ matrix:
\begin{equation}
\left[
\begin{array}{cccccc}
R & R & R & \ldots  & R & R \\
G & G & G & \ldots  & G & G \\
B & B & B & \ldots  & B & B \\
Y & Y & Y & \ldots  & Y & Y 
\end{array}
\right],
\end{equation}
where the colors $\phi(e)=1,2,3,4$ are replaced by the four letters 'R', 'G', 'B', and 'Y'.
The rate of $[12,9,4]_q \otimes [14,13,2]_q$ is comparable to the rate of $[12,10,3]_q^{\otimes 2}$, $R \approx 0.69$
but it is sill far from reaching three quarters as the product code $[14,12,3]_q \otimes [16,14,3]_q$.
Of course, if the practical constraints allow for it, 
it is possible to consider an extremely unbalanced code such as $[12,9,4]_q \otimes [100,99,2]_q$!\\

Let us build balanced product codes by relaxing the constraint $\rho_{max}=1$. We may authorize
a $\rho_{max}$ greater than $1$ but not too large in order to limit the number of decoding iterations.
On the other hand, the double diversity condition on the edge coloring is maintained.
Firstly, let us find a hand-made edge coloring for the $[12,10,3]_q^{\otimes 2}$ product code with $M=4$ colors.
$\G^c$ has $6$ left supernodes, $6$ right supernodes, and a total of $36$ edges.
Each color is used $N^c/M=9$ times. The hint is to place a color on the rows of the matrix representation
of $\G^c$, row by row from the top to the bottom in a way that avoids stopping sets. 
The smallest stopping set is the $2 \times 2$ square.
Other non-obvious stopping sets may not be visible without a tedious row-column decoding
which is equivalent to determining the root order of all edges. 
We start with the first color 'R' and use the following number of letters per row:
\begin{equation}
\left[
\begin{array}{cccccc}
R & R & R & G & B & Y \\
R &   &   & R &   &  \\
R &   &   &   &   &  \\
R &   &   &   &   &  \\
R &   &   &   &   &  \\
R &   &   &   &   &  
\end{array}
\right].
\end{equation}
As seen above, we completed the first row with the three other colors.
On the second row, we moved the second 'R' to the right to avoid a $2 \times 2$ stopping set.
Next, we can start filling the second color 'G' from the third row, then the third color 'B' from the fifth row.
There will be no choice for the $9$ positions of 'Y'.
We allow some extra permutations to avoid small stopping sets. After filling the $36$ positions,
we found the following hand-made edge coloring for the $[12,10,3]_q^{\otimes 2}$ product code:
\begin{equation}
\left[
\begin{array}{cccccc}
R & R & R & G & B & Y \\
R & B & Y & R & Y & G \\
B & G & G & G & R & Y \\
Y & G & B & Y & G & R \\
R & G & B & B & B & Y \\
R & G & B & Y & Y & B 
\end{array}
\right].
\end{equation}
This coloring $\phi$ gives $24$ super-edges of order $1$ ($96$ edges in the non-compact graph $\G$) and $\rho_{max}(\phi)=3$.
Can we find a better $\phi$? Yes, in Section~\ref{sec_edge_coloring_sub3}, 
the DECA algorithm outputs an edge coloring with a population of $32$ super-edges of order $1$ 
($128$ edges in the non-compact graph $\G$) and reaching $\rho_{max}(\phi)=2$ only.\\

In a similar way, we attempt to build a double-diversity coloring for a well-balanced rate-$3/4$ product code,
e.g. the $[14,12,3]_q \otimes [16,14,3]_q$ product code where $R_1=6/7$, $R_2=7/8$, and $R=3/4$.
The compact graph $\G^c$ has $7$ left vertices and $8$ right vertices. For $M=4$ colors, each color
is used $N^c/M=56/4=14$ times. Again, we try to avoid small obvious stopping sets like $2 \times 2$, $2\times 3$, $3\times 3$, etc.
We start by putting five 'R' on the first row, three 'R' on the second row, two 'R' on the third row,
and one 'R' on the remaining rows as follows:
\begin{equation}
\left[
\begin{array}{cccccccc}
R & R & R & R & R & G & B & Y \\
R &   &   &   &   & R & R &   \\
R &   &   &   &   &   &   & R \\
R &   &   &   &   &   &   &   \\
R &   &   &   &   &   &   &   \\
R &   &   &   &   &   &   &   \\
R &   &   &   &   &   &   &   
\end{array}
\right].
\end{equation}
We repeat the same number of color entries 'G' starting on the fourth row.
The color 'B' starts with five entries on the seventh row. 
We allow some extra permutations to avoid small stopping sets.
Colors were exchanged within a row or within a column.
The coloring process was tedious. Many permutations had to be applied.
Some non-obvious stopping sets appeared, a computer software was used to reveal those sets (only for this task).
We reached the following hand-made double-diversity edge coloring 
for the $[14,12,3]_q \otimes [16,14,3]_q$ product code:
\begin{equation}
\left[
\begin{array}{cccccccc}
Y & R & R & Y & R & G & B & R \\
R & Y & B & G & Y & R & R & B \\
B & B & B & Y & Y & R & G & R \\
R & G & G & G & G & G & B & Y \\
R & G & Y & Y & B & Y & G & G \\
G & G & R & R & B & Y & Y & Y \\
R & G & B & B & B & B & B & Y
\end{array}
\right].
\end{equation}
This coloring gives $30$ super-edges of order $1$ in $\G^c$ ($120$ edges in the non-compact graph $\G$) and $\rho_{max}(\phi)=5$.
In Section~\ref{sec_edge_coloring_sub3}, for the same rate-$3/4$ product code,
the DECA algorithm outputs an edge coloring with a population of $40$ super-edges of order $1$ 
($160$ edges in the non-compact graph $\G$) and reaching $\rho_{max}(\phi)=3$ only.

\subsection{The algorithm\label{sec_edge_coloring_sub2}}
We propose in this section an algorithm for product codes that searches for an edge
coloring with a large number of root-order-$1$ edges ({\em good edges}) 
and achieving double diversity. The search is made in the ensemble of edge colorings $\Phi(E^c)$
of the compact graph $\G^c$. 
A necessary condition on the coding rate $R$ to get double diversity is
\begin{equation}
R \le 1-\frac{1}{M}, 
\end{equation}
i.e. those satisfying inequality (\ref{equ_singleton_bound}), 
where $M$ is the color palette size.
Codes attaining equality in~(\ref{equ_singleton_bound}) 
are referred to as MDS in the block-fading/block-erasure
sense \cite{Guillen2006a}\cite{Boutros2010}.
The main loop of our algorithm is a differential evolution loop
that mutates a fraction of the population of bad edges. The algorithm will be referred
to as the Differential Edge Coloring Algorithm (DECA).\\

\noindent
The population of bad edges is defined by the following set
\begin{equation}
B=\{ e \in E^c : \rho(e) > 1 \}.
\end{equation}
It should be remembered that $B=B(\phi)$ because of Definition~(\ref{def_root_order}), 
but $\phi$ is dropped here for the sake of simplifying the notations. 
The number of good edges is given by
\begin{equation}
\label{equ_eta_phi}
\eta(\phi)=|E^c \setminus B|=\left| \{ e \in E^c : \rho(e)= 1 \} \right|.
\end{equation}

Among the $|B|$ bad edges, colors of a fraction of $\aleph$ edges are modified in order
to maximize $\eta(\phi)$, $\aleph \in \N$. The fraction $\aleph/|B|$ should be large enough to allow 
for a population evolution but it should stay small enough in order to limit the algorithm complexity.
The DECA algorithm proceeds as follows.\\

\noindent
{\bf Initialization.} The compact graph $(V_1^c, V_2^c, E^c)$, the number of colors $M$, 
the differential evolution parameter $\aleph$, a maximum number of rounds $MaxIter$,
and an initial edge coloring $\phi_0$ are made ready as an input to DECA.\\

\noindent
{\bf Pre-processing.} Build all weak compositions of $\aleph$ with $M$ parts, 
i.e. write $\aleph$ as the some of $M$ non-negative integers,
\begin{equation}
\aleph=\gamma_1+\gamma_2+\ldots + \gamma_M,
\end{equation}
the number of weak compositions being
\begin{equation}
\Gamma={\aleph-M+1 \choose M-1}.
\end{equation}
For each weak composition, prepare the $\Lambda$ permutations
that permute colors among the $\aleph$ edges, the total number
of these permutations is
\begin{equation}
\Lambda(\gamma_1, \ldots, \gamma_M)=\frac{(\gamma_1+\ldots + \gamma_M)!}{\prod_{i=1}^{M} \gamma_i!}.
\end{equation}
This pre-processing step is completed by setting a loop counter to zero.\\

\noindent
{\bf Differential evolution loop.} This looping phase of DECA includes three
main steps.
\begin{itemize}
\item {\bf Edge sets initialization.} Set $\phi=\phi_0$ and $\eta_{max}=0$. 
Build $B=B(\phi)$ and randomly select a subset $B_{\aleph}$. There is a unique
weak composition $(\gamma_1, \ldots, \gamma_M)$ of $\aleph$ associated to $B_{\aleph}$ determined
by
\begin{equation}
\gamma_i=\left| \{ e \in B_{\aleph} : \phi(e)=i \} \right|.
\end{equation}

\item {\bf Color permutations.} For $\lambda=1\ldots \Lambda(\gamma_1, \ldots, \gamma_M)$, replace
the image of $B_{\aleph}$ in the mapping $\phi$ by a permutation of $\phi_0(B_{\aleph})$.
The color permutation is denoted by $\pi^{\lambda}$. This step is a modification
of the mapping $\phi_0$ at the $\aleph$ bad edges, i.e. 
$\phi(B_{\aleph}) \leftarrow \pi^{\lambda}(\phi_0(B_{\aleph}))$.
Record the mapping with the largest number of good edges, i.e. the edge coloring
with the best $\eta(\phi)$, 
in $\phi_1$ and update $\eta_{max}$. 
\item {\bf Termination.} Increment the counter of evolution loops.
Stop and output $\phi_1$ if this counter reaches $MaxIter$,
otherwise set $\phi_0=\phi_1$ and go back to the edge sets initialization.\\
\end{itemize}

\begin{figure}[!h]
\begin{center}
\includegraphics[width=0.57\columnwidth]{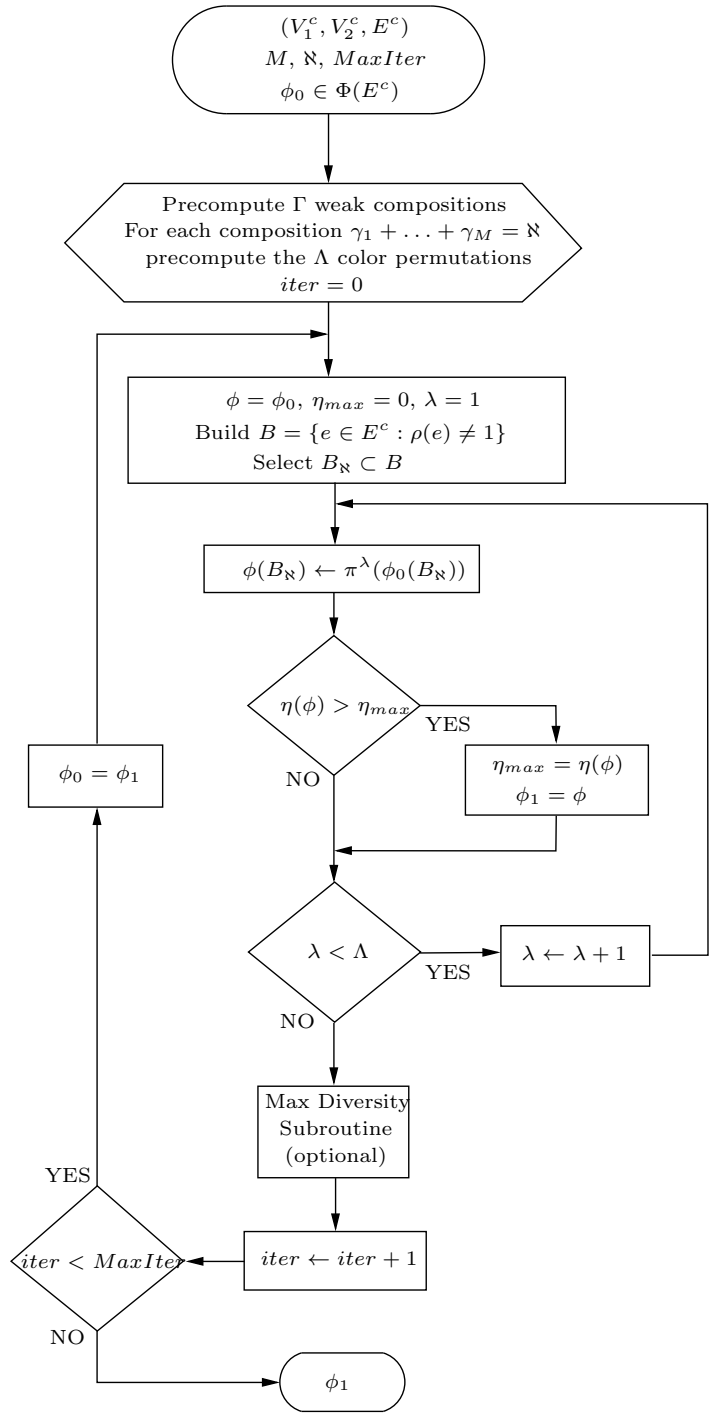}
\caption{Flowchart of the edge coloring algorithm (DECA) for designing double-diversity product codes.\label{fig_flowchart}}
\end{center}
\end{figure}

A detailed functional flowchart of DECA is drawn in Figure~\ref{fig_flowchart}. 
The complexity of DECA is mainly due to the differential evolution loop. 
The complexity is proportional to $\Lambda(\gamma_1, \ldots, \gamma_M)$ per round.
Hence, the number of operations in DECA behaves as
\begin{equation}
\label{equ_complexity}
\Lambda \le \Lambda_{max}(\aleph, M)=\frac{\aleph!}{((\aleph/M)!)^M}.
\end{equation}
When $\aleph$ is not multiple of $M$, the denominator in the
right term should be rewritten as 
$\prod_{i=1}^{i_0} \lfloor \aleph/M \rfloor \times \prod_{i=i_0+1}^{M} \lceil \aleph/M \rceil$,
where $i_0$ is chosen such that the sum of all elements involved in both products is equal
to $\aleph$.
All $\Gamma$ compositions of $\aleph$ are not considered by the algorithm.
In fact, the total number of permutations for all weak compositions is
\begin{equation}
\sum_{j=1}^{\Gamma} \frac{(\gamma_1(j)+\ldots + \gamma_M(j))!}{\prod_{i=1}^{M} \gamma_i(j)!} = M^{\aleph}.
\end{equation}
Fortunately, the per-round complexity of DECA given in (\ref{equ_complexity}) 
is much smaller that $M^{\aleph}$, i.e. $\Lambda_{max}=o(M^{\aleph})$. In practical product code design,
we will also have $\Lambda_{max} \ll M^{\aleph} \ll M^{N^c}$.\\

The proposed edge coloring algorithm aims at maximizing $\eta(\phi)$ but does not guarantee
that $\forall e \in E^c, \rho(e) < \infty$. In some cases, the algorithm may terminate all its
rounds with some edges having an infinite order,
i.e. the coloring is not double-diversity. This occurs when trying to design a product code
with a coding rate very close or equal to $1-1/M$, the block-fading/block-erasure Singleton bound rate.
To remedy for this weakness, DECA is endowed with an extra subroutine called {\em Max Diversity},
as shown in Figure~\ref{fig_flowchart}. Likewise the second step in the differential 
evolution loop, this subroutine applies color permutations to a subset $B_{\aleph_1}$ of edges,
$|B_{\aleph_1}|=\aleph_1$, $B_{\aleph_1} \subset B^{\infty}$, and
\begin{equation}
B^{\infty}=\{ e \in E^c : \rho(e)= \infty \}.
\end{equation}

\subsection{Applications \label{sec_edge_coloring_sub3}}
Now, let us apply DECA to design two double-diversity product codes with MDS components.
Numerical values are selected to make these codes suitable to distributed storage applications
and to diversity systems in wireless networks. The parameter $MaxIter$ is 100. DECA with its 
hundred iterations runs in a small fraction of a second on a standard computer machine.\\
 
\begin{example}\label{ex_DECA_12x12} 
The first application of DECA is to color edges in the compact graph of 
$C_{P1}=[n,k,d]_q^{\otimes 2}$, where $n=12$, $k=10$, $d=3$, and the finite-field alphabet size is $q > 12$.
The coding rate of $C_{P1}$ is $R(C_{P1})=25/36 < 1-1/M=3/4$, 
i.e. the gap to (\ref{equ_singleton_bound}) is $1/18$. 
This small gap is enough to render an uncomplicated double-diversity design.
The coloring in $\Phi(E^c)$ can be easily converted into its counterpart in $\Phi(E)$
by replacing each supersymbol with $4$ symbols.
From (\ref{equ_PhiE}) and (\ref{equ_PhiEc}), the total number of edge colorings is 
$|\Phi(E)| \approx 10^{83}$ in the non-compact graph 
and $|\Phi(E^c)|\approx 10^{19}$ in the compact graph. 
The differential evolution parameter $\aleph$ is set to $8$.
The diversity subroutine is deactivated.
We have
\begin{equation*}
\Lambda_{max}(8, 4)=2520 \ll  |\Phi(E^c)| \ll |\Phi(E)|.
\end{equation*}
For almost any choice of the initial coloring $\phi_0$ uniformly distributed in $\Phi(E^c)$,
DECA yields a double-diversity coloring $\phi_1$. For roughly one choice out of three for $\phi_0$, 
the algorithm outputs a coloring $\phi_1$ such that $\eta(\phi_1) \ge 28$. 
Figure~\ref{fig_matrix_color_6_6_a} shows the matrix representation of a special $\phi_1$ found by DECA.
It has $\eta(\phi_1)=32$ which corresponds to $\eta=128$ in $(V_1, V_2, E)$. 
The corresponding rootcheck order matrix is shown in Figure~\ref{fig_matrix_color_6_6_b}.
The highest attained order for this coloring is $\rho_{max}(\phi) =2$. The maximal
order for all colorings in $\Phi(E^c)$ from Theorem~\ref{th_root_order_max} is $\rho_u=5$.
This coloring satisfies equality in (\ref{equ_rho_etamin}) since $2\rho_{max}(\phi)+\eta_{min}(\phi)=12$.\\ 
\end{example}

\begin{figure}[!h]
\begin{subfigure}{0.48\columnwidth}
\begin{center}
\includegraphics[width=0.7\columnwidth]{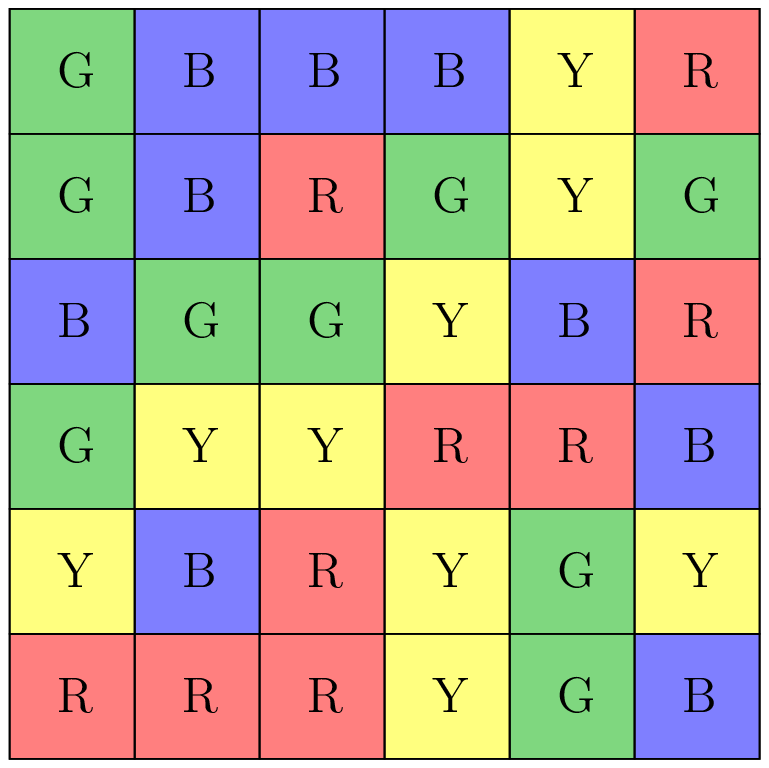}
\end{center}
\caption{\label{fig_matrix_color_6_6_a}}
\end{subfigure}
\hspace*{\fill}
\begin{subfigure}{0.48\columnwidth}
\begin{center}
\includegraphics[width=0.7\columnwidth]{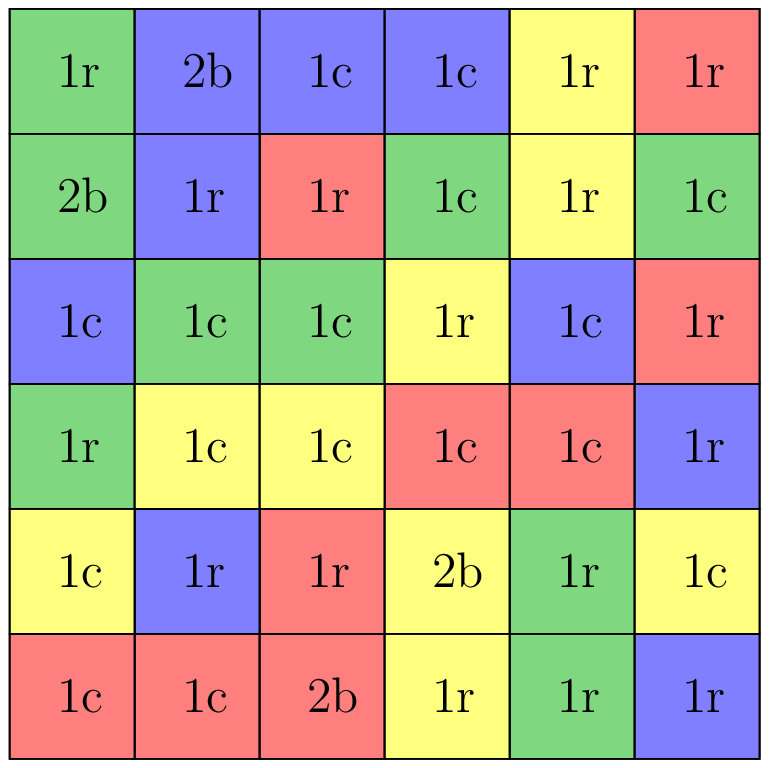}
\end{center}
\caption{\label{fig_matrix_color_6_6_b}}
\end{subfigure}
\caption{Compact coloring matrix (figure a) and the corresponding rootcheck-order matrix (figure b)
for the $[12,10]^{\otimes 2}$ product code $C_{P1}$ found by DECA,
$\eta(\phi)=32$ and $\rho_{max}=2$.\label{fig_matrix_color_6_6}}
\end{figure}

\begin{example}\label{ex_DECA_14x16}
The second more challenging application of DECA is the design of a double-diversity product
code attaining the block-fading/block-erasure Singleton bound. Let us consider
$C_{P2}=[n_1,k_1,d_1]_q \otimes [n_2,k_2,d_2]_q$, where $n_1=14$, $k_1=12$, $n_2=16$, $k_2=14$, $d_1=d_2=3$, 
and the finite-field alphabet size is $q > 16$. The coding rate is $R(C_{P2})=1-1/M=3/4$.
From (\ref{equ_PhiE}) and (\ref{equ_PhiEc}), the total number of edge colorings is 
$|\Phi(E)| \approx 10^{131}$ in the non-compact graph 
and $|\Phi(E^c)|\approx 10^{31}$ in the compact graph.
The differential evolution parameter $\aleph$ is set to $7$.
The diversity subroutine is activated with $\aleph_1=8$.
We have
\begin{equation*}
\Lambda_{max}(7,4)+\Lambda_{max}(8,4)=3150 \ll  |\Phi(E^c)| \ll |\Phi(E)|.
\end{equation*}
The initial coloring $\phi_0$ is taken to be uniformly distributed in $\Phi(E^c)$.
For almost three $\phi_0$ choices out of four, DECA yields a double-diversity coloring $\phi_1$.
Roughly one $\phi_0$ choice out of two guarantees $\eta(\phi_1) \ge 34$.
Figure~\ref{fig_matrix_color_7_8_a} shows the matrix representation of a special $\phi_1$ found by DECA.
It has $\eta(\phi_1)=40$ which corresponds to $\eta=160$ in $(V_1, V_2, E)$. 
The rootcheck order matrix is shown in Figure~\ref{fig_matrix_color_7_8_b}.
The highest attained order for this coloring is $\rho_{max}(\phi) =3$. The maximal
order for all colorings in $\Phi(E^c)$ from Theorem~\ref{th_root_order_max} is $\rho_u=7$. 
This coloring satisfies $2\rho_{max}(\phi)+\eta_{min}(\phi)=16$ 
while the right term in (\ref{equ_rho_etamin}) is $17$.\\ 
\end{example}

\begin{figure}[!h]
\begin{subfigure}{0.48\columnwidth}
\begin{center}
\includegraphics[width=0.9\columnwidth]{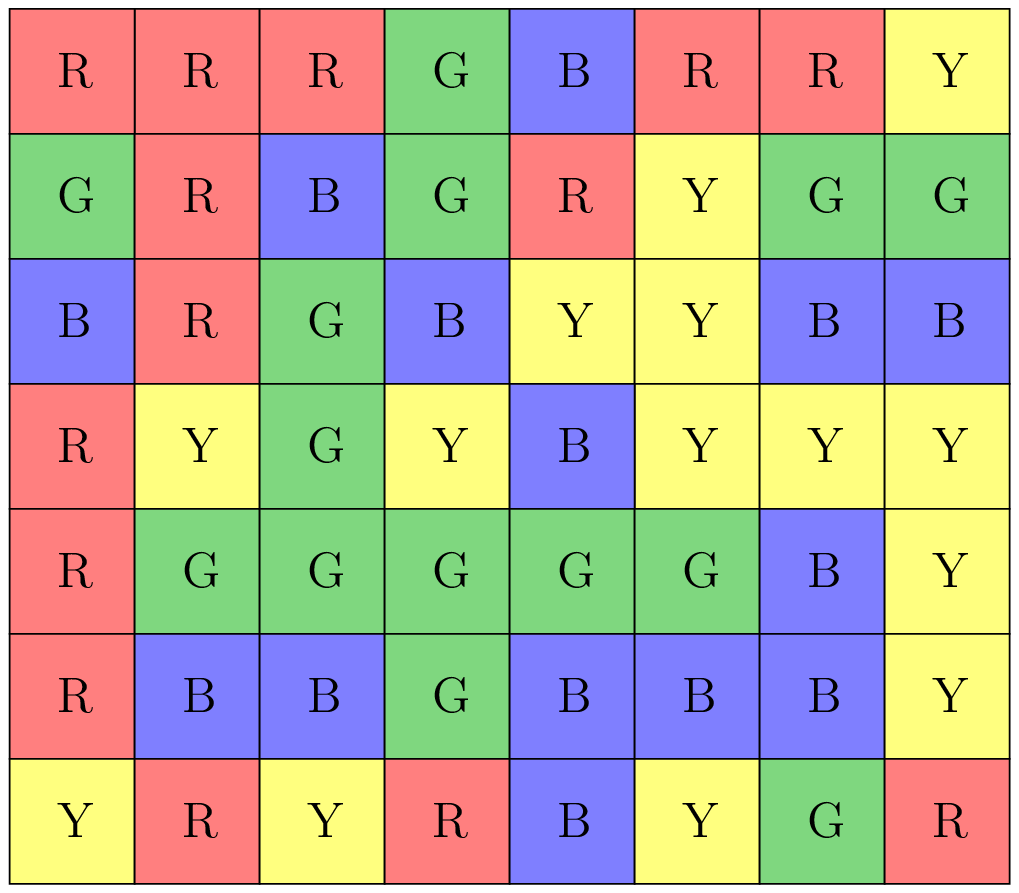}
\end{center}
\caption{\label{fig_matrix_color_7_8_a}}
\end{subfigure}
\hspace*{\fill}
\begin{subfigure}{0.48\columnwidth}
\begin{center}
\includegraphics[width=0.9\columnwidth]{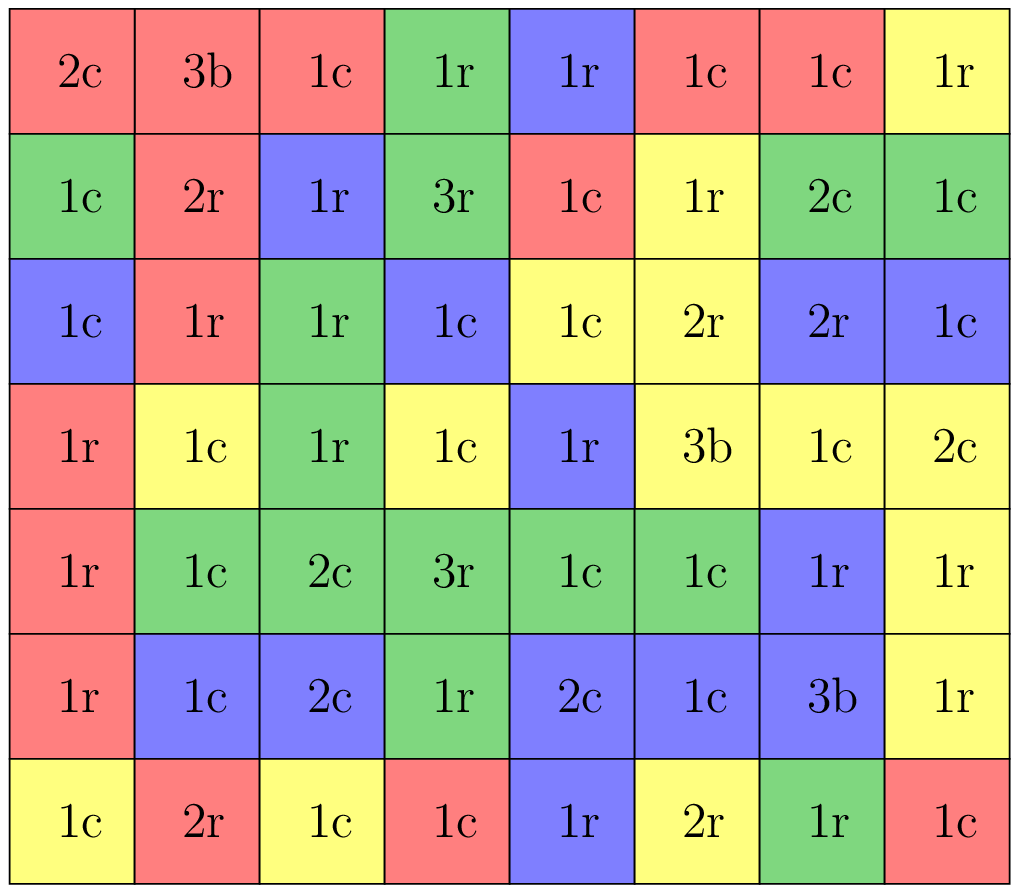}
\end{center}
\caption{\label{fig_matrix_color_7_8_b}}
\end{subfigure}
\caption{Compact coloring matrix (figure a) and the corresponding rootcheck-order matrix (figure b)
for the $[14,12] \otimes [16,14]$ product code $C_{P2}$ found by DECA,
$\eta(\phi)=40$ and $\rho_{max}=3$.\label{fig_matrix_color_7_8}}
\end{figure}

\begin{example}\label{ex_DECA_10x10}
A third example suitable for nowadays distributed storage warehouses is 
$C_{P3}=[10,8,3]_q \otimes [10,9,2]_q$. The coding rate is $R=18/25$ with a minimum
distance $d_1d_2=6$ and the locality is $n_1=n_2=10$, i.e. this code is an improvement 
to the standard $RS[14,10]$ used by Facebook~\cite{Rashmi2013}. 
The coloring ensembles have sizes $|\Phi(E)| \approx 10^{57}$ and $|\Phi(E^c)|\approx 10^{27}$
respectively. 
The DECA algorithm produced double-diversity edge colorings where we distinguish two classes:
a first class of colorings with $\rho_{max}=3$ and $\eta(\phi)~=~41$, and a second
class with $\rho_{max}=2$ and $\eta(\phi)~=~40$. An edge coloring of the second class
is shown in Figure~\ref{fig_matrix_color_5_10}. The reader is invited to determine
the rootcheck order matrix and verify that $40$ super-edges
have root order $1$ and $10$ super-edges have a root order equal to $2$.
\end{example}

\begin{figure}[!h]
\begin{center}
\includegraphics[width=0.5\columnwidth]{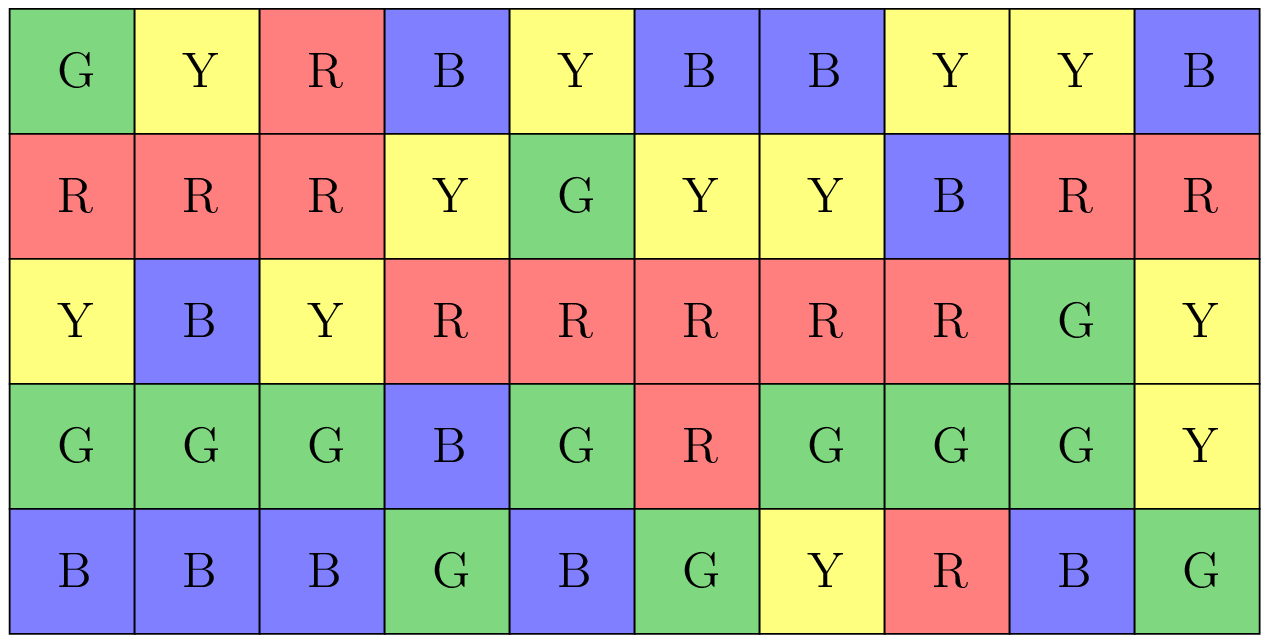}
\end{center}
\caption{Compact coloring matrix for the $[10,8] \otimes [10,9]$ product code found by DECA,
$\eta(\phi)=40$ and $\rho_{max}=2$.\label{fig_matrix_color_5_10}}
\end{figure}

In figures of the previous examples, the four colors were also indicated by the first letter of the color name, Red, Green, Blue, and Yellow. The rootcheck order $\rho(e)$ for an edge $e$ in $E^c$
(which is also the order of the four code symbols associated to that edge) is indicated by an integer
in the right part of each figure for the first two examples. 
In the rootcheck order matrix, $2r$ means that this supersymbol has order $2$ 
and its root checknode is a row. Similarly, $2c$ designates a supersymbol with order $2$ 
and a column rootcheck. The letter '$b$' is written when a supersymbol has both rootchecks, 
a row and a column rootcheck.\\

Product codes in Examples \ref{ex_DECA_12x12}-\ref{ex_DECA_10x10} do not satisfy 
the $\rho_u$ condition given in (\ref{equ_R_0.41}) and the sufficient condition of Lemma~\ref{lem_rho=1} either.
An interesting question arises. Does an edge coloring with $\rho_{max}=1$ exist for a $6 \times 6$ compact graph?
We provide a partial answer in the sequel. A similar answer is valid for the $7 \times 8$ compact graph.

The $6 \times 6$ compact graph is perfectly balanced. Let us start with the first color 'R'. The unique solution
to get $\rho(e)=1$ for all edges $e$ with $\phi(e)=R$ is to place 'R' entries separately on the first row and the first column.
Hence, no row or a column contain the same color twice. The first $9$ edges are located as follows:\\
\begin{equation}
\left[
\begin{array}{cccccc}
  & R & R & R & R & R \\
R &   &   &   &   &  \\
R &   &   &   &   &  \\
R &   &   &   &   &  \\
R &   &   &   &   &  \\
  &   &   &   &   &  
\end{array}
\right].
\end{equation}
We start over with the second color 'G' using the same rule. Given the lack of space on the second row
and the second column, the ninth green edge is placed on the top left corner. We get\\
\begin{equation}
\left[
\begin{array}{cccccc}
G & R & R & R & R & R \\
R &   & G & G & G & G \\
R & G &   &   &   &   \\
R & G &   &   &   &   \\
R & G &   &   &   &   \\
  & G &   &   &   &  
\end{array}
\right].
\end{equation}
At this point, $18$ super-edges have a rootcheck order $\rho=1$. 
Seven edges only can be colored in blue, three edges on the third row, three edges on the third column,
and one edge at the intersection of the second row and the second column. 
One color 'R' can be moved down to the last row leading to the following coloring:
\begin{equation}
\left[
\begin{array}{cccccc}
G & R & R & R & R & B \\
R & B & G & G & G & G \\
R & G &   & B & B & B \\
R & G & B &   &   &   \\
R & G & B &   &   &   \\
B & G & B &   &   & R
\end{array}
\right].
\end{equation}
Finally, we reached an edge coloring where all edges of three colors satisfy $\rho(e)=1$.
Unfortunately, there is no space left for edges of 'Y' to achieve $\rho(e)=1$.
The situation is even worse, the remaining edges for 'Y' make five primitive stopping sets
(three $2 \times 2$, one $2 \times 3$, and one $3 \times 2$). This edge coloring has no diversity.

\subsection{Random edge coloring \label{sec_edge_coloring_sub4}}
The efficiency of the DECA algorithm was validated in the previous section in terms
of number of edges of first order and the maximal order over all edges. 
Clearly, while evolving from one coloring to another in order to get a large $\eta(\phi)$,
DECA also produced a very small maximal order $\rho_{max}(\phi)$. 
Any deterministic construction seems to be destined to fail given the huge size
of the ensembles $\Phi(E)$ and $\Phi(E^c)$.\\

In this sub-section, another way to show the efficiency of our coloring algorithm 
is to make random selections from $\Phi(E)$ and $\Phi(E^c)$ and get an estimate 
of the probability distributions of $\eta(\phi)$ and $\rho_{max}(\phi)$.
Indeed, a uniformly distributed permutation in the symmetric group of order $N$
yields a uniformly distributed edge coloring $\phi$ in $\Phi(E)$. This is also true
for $\Phi(E^c)$ when the symmetric group has order $N^c$. 
Thus, in a uniform manner, we selected 2 billion edge colorings through our computer application
from $\Phi(E)$ and $\Phi(E^c)$ respectively.
For each coloring, rootcheck orders of all edges were computed, i.e. for the $N$ edges in the non-compact
graph and the $N^c$ edges in the compact graph. Only double-diversity colorings are counted
in this comparison, i.e. colorings with at least one edge of infinite rootcheck order are excluded.
As an illustration, the characteristics of double-diversity random coloring for $C_{P1}$ 
are plotted in Figure \ref{fig_random_12x12} where numerical estimations
of all probability distributions are compared to colorings designed via DECA.\\
\begin{figure*}[!h]
\begin{subfigure}{0.49\textwidth}
\begin{center}
\includegraphics[height=0.98\columnwidth,angle=270]{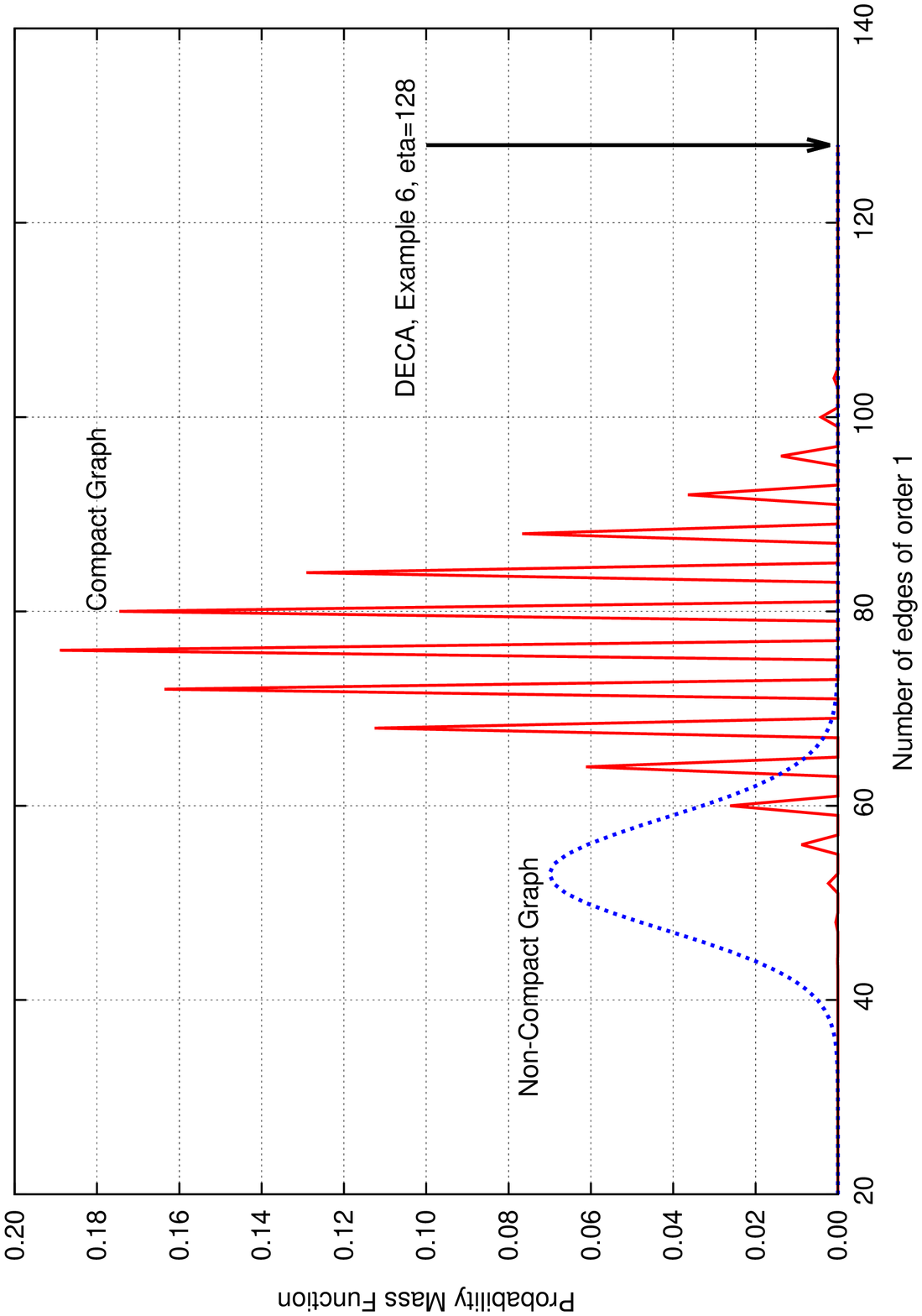}
\end{center}
\caption{\label{fig_random_eta_12x12}}
\end{subfigure}
\hspace*{\fill}
\begin{subfigure}{0.49\textwidth}
\begin{center}
\includegraphics[height=0.98\columnwidth,angle=270]{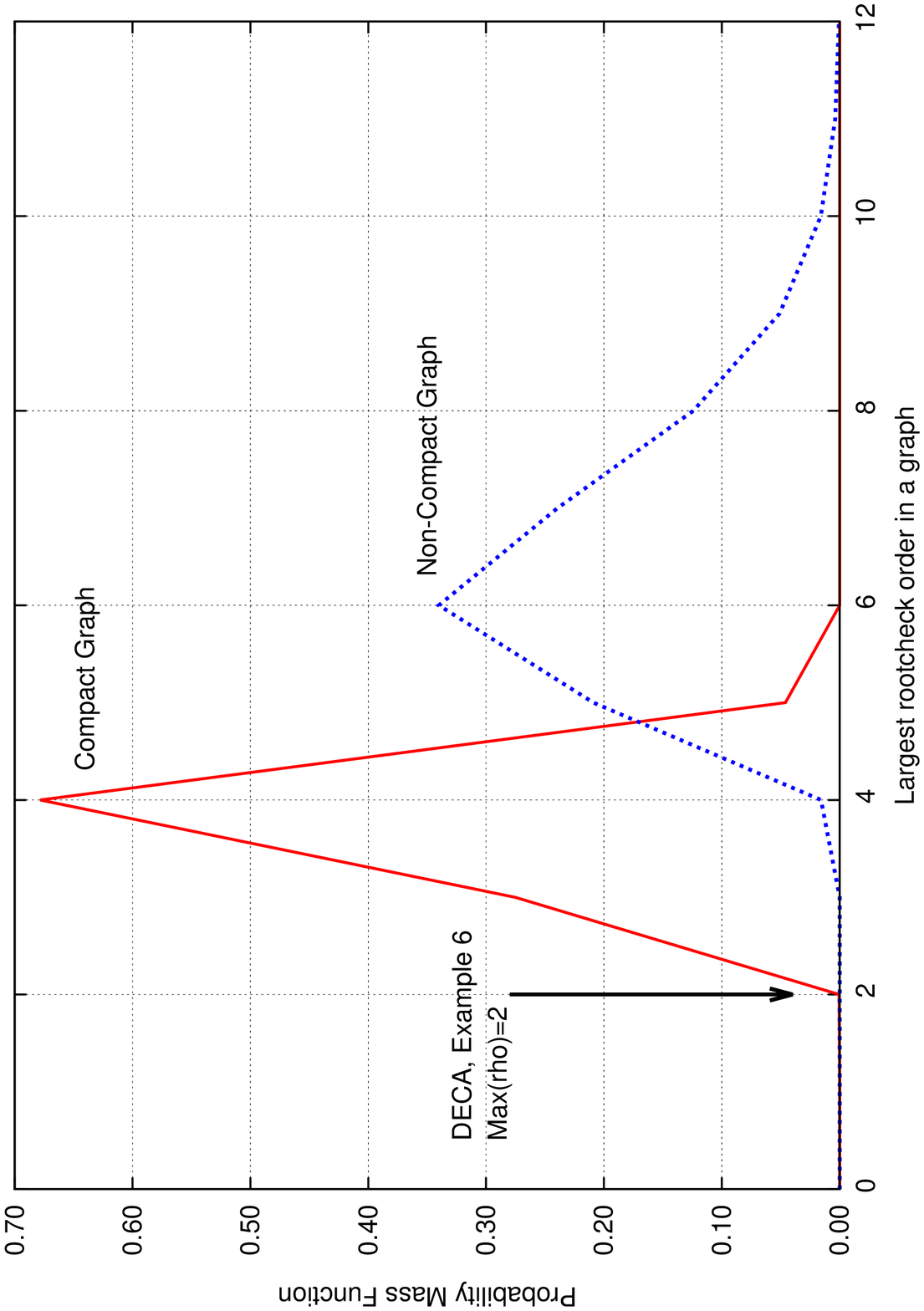}
\end{center}
\caption{\label{fig_random_rho_12x12}}
\end{subfigure}
\caption{Distribution of $\eta(\phi)$ (figure a) and $\rho_{max}(\phi)$ (figure b) 
for double-diversity random edge colorings uniformly distributed in $\Phi(E)$ and $\Phi(E^c)$. 
Product code $[12,10]^{\otimes 2}$.\label{fig_random_12x12}}
\end{figure*}
Double diversity design is more arduous for the rate-$3/4$ $C_{P2}$ product code 
than for the rate-$25/36$ $C_{P1}$ product code because of the rate-diversity tradeoff
given by the Singleton bound.
For $C_{P1}$, the $[12,10]^{\otimes 2}$ code, 
$8.97\%$ of uniformly sampled colorings have double diversity in $\Phi(E^c)$,
whereas this fraction is  $43.6\%$ in $\Phi(E)$. 
For $C_{P2}$, the $[14,12] \otimes [16,14]$ code, 
only $0.00039\%$ of uniformly sampled colorings have double diversity in $\Phi(E^c)$,
and we found no double-diversity colorings in $\Phi(E)$ despite the 2 billion samples. 
As expected, compact graphs exhibit better characteristics than non-compact
graphs thanks to their simpler structure, i.e. $n_i-k_i$ parity symbols are grouped 
inside a unique supersymbol: for $C_{P1}$, one double-diversity random coloring has $\eta(\phi)=88$,
$\rho_{max}(\phi)=4$ for non-compact graphs, 
seven double-diversity colorings have $\eta(\phi)=120$, 
and $\rho_{max}(\phi)=2$ for compact graphs. 
There exists a double-diversity coloring in $\Phi(E)$ with $\rho_{max}(\phi)=3$ but
its $\eta$ is $85$.
The estimated probability mass functions for $C_{P1}$ are plotted in Figures~\ref{fig_random_eta_12x12}
and \ref{fig_random_rho_12x12}.
For $C_{P2}$, one double-diversity random coloring reached $\eta(\phi)=128$ and $\rho_{max}(\phi)=4$
out of the 2 billion samples from $\Phi(E^c)$. 
In all cases, for both $\eta$ and $\rho$,
double-diversity random colorings are not as efficient as colorings designed via the DECA algorithm. 
The situation is worse for random colorings if a double-diversity code with maximal rate $1-1/M$ is to be designed.
The DECA algorithm exhibits excellent values, $\eta=160$ and $\rho=3$, for the rate-$3/4$ 
$[14,12] \otimes [16,14]$ product code.

\section{Code performance in presence of erasures \label{perf_erasure}}
Iterative decoding performance of $C_P=C_1 \otimes C_2$ is studied in presence of
channel erasures, with and without edge coloring. The iterative decoder makes
row and column iterations where the component decoder of $C_i$ can be an algebraic erasure-filling decoder
(limited by $d_i-1$) or a maximum-likelihood decoder of $C_i$. 
As stated in Section~\ref{sec_stopping_sets_sub1},
type II and type III stopping sets are identical because the non-binary codes $C_1$ and $C_2$ are MDS.
The word error probability of the iterative decoder is denoted by $P_{ew}^{\G}$.
The product code can also be decoded via an ML decoder, i.e. maximum likelihood decoding of $C_P$
based on a Gaussian reduction of its parity-check matrix. The word error probability
under ML decoding of $C_P$ is denoted by $P_{ew}^{ML}$.

\subsection{Block erasures \label{perf_erasure_sub1}}
Consider the block-erasure channel $CEC(q,\epsilon)$. The $N$ symbols of a codeword are partitioned
into $M$ blocks, each block contains symbols associated to edges in $\G$ with the same color.
The $CEC(q,\epsilon)$ channel erases a block with a probability~$\epsilon$. The block
is correctly received with a probability~$1-\epsilon$. Erasure events
are independent from one block to another. We say that a {\em color is erased} 
if the associated block of $N/M$ symbols is erased. Assume that $\G$ is endowed with a double-diversity
edge coloring $\phi$ (i.e. $L(\phi)=2$) as defined in Corollary~\ref{cor_rho_max}. 
Then, on the block-erasure channel $CEC(q,\epsilon)$, for a rate satisfying
\begin{equation}
1-\frac{2}{M} < R \le 1-\frac{1}{M},
\end{equation}
we have
\begin{equation}
\label{equ_perf_CEC}
\epsilon^2 ~\le~ P_{ew}^{ML} ~\le~ P_{ew}^{\G} ~\le~ \sum_{i=2}^M {M \choose i} \epsilon^i (1-\epsilon)^{M-i}.
\end{equation}
Since $\phi$ has a double diversity, there exist two colors among the $M$ colors
such that the iterative decoder must fail if both colors are erased. This explains the upper bound
of $P_{ew}^{\G}$ in~(\ref{equ_perf_CEC}). The upper bound is valid for any rate less than 
the maximal achievable rate for double diversity, i.e. $1-\frac{1}{M}$. 
Now, since $R > 1-\frac{2}{M}$, the ML decoder for $C_P$ cannot attain a diversity $L=3$ otherwise
the block-fading/block-erasure Singleton bound would be violated. Consequently, the ML
decoder of $C_P$ can only reach $L=2$ and so there exists a pair of erased colors that cannot be
solved by the ML decoder. This explains the lower bound in (\ref{equ_perf_CEC}).
The reader can easily verify that
\begin{equation}
\lim_{\epsilon \rightarrow 0} \frac{\log P_{ew}^{ML}}{\log \epsilon}
= \lim_{\epsilon \rightarrow 0} \frac{\log P_{ew}^{\G}}{\log \epsilon} = L =2.
\end{equation}
The slope of $P_{ew}$ versus the erasure probability $\epsilon$ in a double-logarithmic scale 
is equal to $2$. Under the stated constraint on $R$, the upper bound in (\ref{equ_perf_CEC}) is the exact expression
of the outage probability on a block-erasure channel valid for
$q$-ary codes with asymptotic length \cite{Guillen2006b}.
For double-diversity edge colorings found by DECA 
in Examples \ref{ex_DECA_12x12} and \ref{ex_DECA_14x16},
$P_{ew}^{\G}$ equals its upper bound in (\ref{equ_perf_CEC}). These examples achieve the outage probability
although a code may perform better than the outage probability at finite length.
For these colorings where $M=4$, the error probability on $CEC(q,\epsilon)$ behaves
like $P_{ew}^{\G}=6\epsilon^2+O(\epsilon^3)$. One possible interpretation of this behavior is:
the optimization of $\eta(\phi)$ (equivalent in some sense to minimizing $\rho(\phi)$) 
pushed the performance of edge colorings found by the DECA algorithm as far as possible 
from the lower bound $\epsilon^2$.
As can be observed in Figures \ref{fig_matrix_color_6_6} and \ref{fig_matrix_color_7_8}, 
all rows and all columns include the four colors. When any two colors out of four
are erased, the iterative decoder will completely fail without correcting a single supersymbol.
A double-diversity edge coloring guarantees that all stopping sets are covered by at least two colors
but it cannot cover all stopping sets with three colors or more otherwise we get $L=3$
which contradicts $R> 1-\frac{2}{M}$.
Fortunately, these product codes are diversity-wise MDS and the second code in Example \ref{ex_DECA_14x16}
has the maximal coding rate for double diversity. In the sequel, we will see that these codes
also perform well in presence of independent erasures.

\subsection{Independent erasures \label{perf_erasure_sub2}}
Consider the i.i.d. erasure channel $SEC(q,\epsilon)$. 
The $N$ symbols of a codeword are independently erased by the channel.
A symbol is erased with a probability $\epsilon$ and is correctly received with 
a probability $1-\epsilon$. Edge coloring has no effect on the performance of $C_P$
on the $SEC(q,\epsilon)$ channel. Before studying the performance on the $SEC(q,\epsilon)$,
following Examples \ref{ex_w=9} \& \ref{ex_w=12} 
and Theorems \ref{th_stopping_sets_d} \& \ref{th_stopping_sets_d1_d2},
we state an obvious result about obvious stopping sets in the following proposition.
\begin{proposition}
\label{prop_stop_codewords}
Let $C_P=C_1 \otimes C_2$ be a product code with non-binary MDS components.
All obvious stopping sets are supports of product code codewords.
\end{proposition}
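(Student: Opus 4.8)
The plan is to combine two facts already established in the excerpt: the graph characterization of type~II stopping sets from Corollary~\ref{cor_stop_in_G}, which pins down the shape of an obvious stopping set, and Proposition~\ref{prop_MDS_w}, which produces, for a non-binary MDS code, a codeword supported on any prescribed set of positions whose size is at least the minimum distance.

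First I would fix an obvious stopping set $\cS$, so that $\cS=\cR(\cS)=\cR_1(\cS)\times\cR_2(\cS)$ is a full $\ell_1\times\ell_2$ rectangle, with $|\cR_1(\cS)|=\ell_1$ and $|\cR_2(\cS)|=\ell_2$. Because $\cS$ is a (type~II) stopping set, iterative algebraic decoding must fail on every row and every column meeting $\cS$; since the rectangle is full, a column of $\cR(\cS)$ carries exactly $\ell_1$ erasures and a row exactly $\ell_2$ erasures, so failure of the bounded-distance component decoders forces $\ell_1\ge d_1$ and $\ell_2\ge d_2$ (this is exactly the degree condition of Corollary~\ref{cor_stop_in_G}). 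Trivially $\ell_1\le n_1$ and $\ell_2\le n_2$ as well.

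Next I would invoke Proposition~\ref{prop_MDS_w} twice. Applied to $C_1$ with $w=\ell_1$ and $\mathcal{X}=\cR_1(\cS)$ --- legitimate since $d_1\le\ell_1\le n_1$ and $q>n_1>2$ --- it yields a codeword $a\in C_1$ with $\mathcal{X}(a)=\cR_1(\cS)$; applied to $C_2$ with $w=\ell_2$ and $\mathcal{X}=\cR_2(\cS)$ it yields $b\in C_2$ with $\mathcal{X}(b)=\cR_2(\cS)$. Then I would form the rank-one array $c$ with entries $c_{ij}=a_i b_j$. Since $G_P=G_1\otimes G_2$, outer products of component codewords lie in $C_P$, so $c\in C_P$; and its set of nonzero positions is $\{(i,j):a_i\ne 0,\ b_j\ne 0\}=\cR_1(\cS)\times\cR_2(\cS)=\cS$. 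Hence $\mathcal{X}(c)=\cS$, which is the claim.

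The only delicate point --- and it is mild --- is the justification that the outer product of two component codewords is genuinely a codeword of $C_P$ and that the hypotheses of Proposition~\ref{prop_MDS_w} are met, the crucial one being $\ell_i\ge d_i$, which is precisely what the stopping-set property delivers; the size bound $q>\max(n_1,n_2)$ is part of the standing non-binary MDS assumption. Everything else is bookkeeping on supports.
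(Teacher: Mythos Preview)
Your proof is correct and follows essentially the same approach as the paper: both arguments observe that an obvious stopping set is a full $\ell_1\times\ell_2$ rectangle with $\ell_i\ge d_i$, invoke Proposition~\ref{prop_MDS_w} to obtain component codewords $a\in C_1$ and $b\in C_2$ with the prescribed supports, and then take the outer (Kronecker) product $c_{ij}=a_ib_j$ to produce a codeword of $C_P$ whose support is exactly $\cS$. Your write-up is in fact slightly more careful in checking the hypotheses of Proposition~\ref{prop_MDS_w} and in justifying that the outer product lies in $C_P$.
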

\begin{proof}
Consider an $\ell_1 \times \ell_2$ obvious stopping set. Its rectangular support
is $\mathcal{R}(\mathcal{S})=\mathcal{R}_1(\mathcal{S}) \times  \mathcal{R}_2(\mathcal{S})$.
We have $\ell_1 \ge d_1$ and $\ell_2 \ge d_2$. 
From Proposition~\ref{prop_MDS_w}, there exists a column codeword 
$x=(x_1, x_2, \ldots, x_{n_1}) \in C_1$ of weight $\ell_1$ 
with support $\mathcal{R}_1(\mathcal{S}) \times \{j_1\}$, where $j_1 \in \mathcal{R}_2(\mathcal{S})$.
Similarly, there exists a row codeword $y=(y_1, y_2, \ldots, y_{n_2}) \in C_2$ of weight $\ell_2$ 
with support $\{i_1\} \times \mathcal{R}_2(\mathcal{S})$, where $i_1 \in \mathcal{R}_1(\mathcal{S})$.
Now, the Kronecker product of $x$ and $y$ satisfies $\mathcal{X}(x\otimes y)=\mathcal{S}$.
\end{proof}

\begin{corollary}
\label{cor_PewG_PewML}
Consider a product code $C_P=C_1 \otimes C_2$ with non-binary MDS component codes.
Assume the symbols of $C_P$ are transmitted over a $SEC(q,\epsilon)$ channel.
Then, for $\epsilon~\ll~1$, the error probabilities satisfy $P_{ew}^{\G} \sim P_{ew}^{ML}$.
\end{corollary}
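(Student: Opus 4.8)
The plan is to compare the \emph{failure events} of the two decoders directly on the $SEC(q,\epsilon)$ channel. Iterative row–column decoding fails exactly when the set $E$ of erased positions is a type-II stopping set, which by Corollary~\ref{cor_stop_in_G} happens iff every column of $\cR(E)$ carries at least $d_1$ erasures and every row at least $d_2$; the ML decoder of $C_P$ fails exactly when $E$ contains the support of some non-zero codeword of $C_P$. Since the non-zero rows (resp.\ columns) of a codeword of $C_P$ lie in $C_2$ (resp.\ $C_1$) and therefore have weight at least $d_2$ (resp.\ $d_1$), every codeword support is itself a type-II stopping set; combined with Proposition~\ref{prop_stop_codewords}, which states that every \emph{obvious} stopping set is a codeword support, this gives $\{\text{ML fails}\}\subseteq\{\text{iterative decoder fails}\}$, hence $P_{ew}^{ML}\le P_{ew}^{\G}$. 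It also yields the key observation: if $E$ makes the iterative decoder fail but not the ML decoder, then $E$ is a stopping set containing no codeword support, so $E$ cannot be obvious, i.e.\ $E$ is a \emph{non-obvious} stopping set.

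Next I would bound from below the weight of a non-obvious stopping set. Reusing the elementary counting argument already used in the proof of Theorem~\ref{th_stopping_sets_d} (and Lemma~\ref{lem_max_zeros}): if $\cS$ is a stopping set of weight $w$ with $d_1d_2\le w<d_1d_2+\min(d_1,d_2)$, then, writing $\cR(\cS)$ for its $\ell_1\times\ell_2$ rectangular support, $w\ge d_2\ell_1$ and $w\ge d_1\ell_2$ force $\ell_1\le d_1$ and $\ell_2\le d_2$, whence $w\le\ell_1\ell_2\le d_1d_2$; thus $w=d_1d_2$, $\cR(\cS)=\cS$, and $\cS$ is obvious. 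Since the MDS components are non-trivial ($d_1,d_2\ge 2$), it follows that every non-obvious stopping set has weight at least $d_1d_2+\min(d_1,d_2)\ge d_1d_2+2$.

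Combining the two steps, every erasure pattern in $\{\text{iterative decoder fails}\}\setminus\{\text{ML fails}\}$ has weight at least $d_1d_2+2$, so
\begin{equation}
0\ \le\ P_{ew}^{\G}-P_{ew}^{ML}\ \le\ \sum_{w\ \ge\ d_1d_2+2}\binom{N}{w}\,\epsilon^{w}(1-\epsilon)^{N-w}\ =\ O\!\left(\epsilon^{\,d_1d_2+2}\right)\qquad(\epsilon\to 0).
\end{equation}
On the other hand, by Proposition~\ref{prop_MDS_w} the components admit codewords of weights $d_1$ and $d_2$, and the support of their Kronecker product is a size-$d_1d_2$ codeword support of $C_P$, erased in full with probability $\epsilon^{\,d_1d_2}$, so $P_{ew}^{ML}\ge\epsilon^{\,d_1d_2}>0$. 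Therefore
\begin{equation}
1\ \le\ \frac{P_{ew}^{\G}}{P_{ew}^{ML}}\ =\ 1+\frac{P_{ew}^{\G}-P_{ew}^{ML}}{P_{ew}^{ML}}\ \le\ 1+\frac{O(\epsilon^{\,d_1d_2+2})}{\epsilon^{\,d_1d_2}}\ =\ 1+O(\epsilon^{2}),
\end{equation}
which tends to $1$ as $\epsilon\to 0$; that is, $P_{ew}^{\G}\sim P_{ew}^{ML}$ for $\epsilon\ll 1$.

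The main obstacle is the weight-gap estimate of the second paragraph — certifying that no stopping set of weight in $[\,d_1d_2,\ d_1d_2+2\,)$ is non-obvious — since everything else is a routine small-$\epsilon$ expansion; fortunately the early cases of Theorems~\ref{th_stopping_sets_d} and \ref{th_stopping_sets_d1_d2} together with Lemma~\ref{lem_max_zeros} already establish (and in fact extend) exactly this fact, so I would lean on them rather than redo the combinatorics. Two monotonicity facts are used implicitly and should be noted in passing: an erasure pattern strictly containing a stopping set is again a stopping set, and one strictly containing a codeword support is again non-ML-correctable — both are immediate.
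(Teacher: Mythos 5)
Your proof is correct and rests on the same two pillars as the paper's own argument: Proposition~\ref{prop_stop_codewords} (every obvious stopping set is a codeword support) together with the fact that every stopping set of weight below $d_1d_2+\min(d_1,d_2)$ is an obvious $d_1\times d_2$ rectangle, so the two failure events can only differ on erasure patterns of weight at least $d_1d_2+2$. The paper packages this by matching the leading coefficients $\Psi_{d_1d_2}(\G)=\Psi_{d_1d_2}(ML)$ in the expansions of $P_{ew}^{\G}$ and $P_{ew}^{ML}$, whereas you compare the failure events directly; this is the same argument in slightly more self-contained form, with the added bonuses of making the inclusion $\{\text{ML fails}\}\subseteq\{\text{iterative decoder fails}\}$ explicit and of yielding the sharper rate $P_{ew}^{\G}/P_{ew}^{ML}=1+O(\epsilon^{2})$.
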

\begin{proof}
On the $SEC(q,\epsilon)$, the word error probabilities are given by \cite{Schwartz2006},
\begin{equation}
\label{equ_PewML_Psi}
P_{ew}^{ML}=\sum_{i=d_1d_2}^{N} \Psi_i(ML) \epsilon^i (1-\epsilon)^{N-i},
\end{equation}
where $\Psi_i(ML)$ is the number of weight-$i$ erasure patterns covering a product code codeword,
and
\begin{equation}
\label{equ_PewG_Psi}
P_{ew}^{\G}=\sum_{i=d_1d_2}^{N} \Psi_i(\G) \epsilon^i (1-\epsilon)^{N-i},
\end{equation}
where $\Psi_i(\G)$ is the number of weight-$i$ erasure patterns covering a stopping set.
Of course, here we refer to stopping sets in the non-compact graph $\G$, i.e. in the $n_1 \times n_2$
product code matrix. Next, since $N$ is fixed (asymptotic length analysis is not considered in this paper)
we write $P_{ew}^{ML}=\Psi_{d_1d_2}(ML) \epsilon^{d_1d_2} + o(\epsilon^{d_1d_2})$ 
and $P_{ew}^{\G}=\Psi_{d_1d_2}(\G) \epsilon^{d_1d_2} + o(\epsilon^{d_1d_2})$. 
From Proposition~\ref{prop_stop_codewords}, we get the equality $\Psi_{d_1d_2}(\G)=\Psi_{d_1d_2}(ML)$
and so we obtain $\lim_{\epsilon \rightarrow 0} P_{ew}^{\G}/P_{ew}^{ML}=1$.
\end{proof}

The erasure patterns can be decomposed according to the size of the covered stopping set.
The coefficient $\Psi_i(\G)$ becomes $\Psi_i(\G)=\sum_{w=d_1d_2}^i \Psi_{i,w}(\G)$, where
$\Psi_{i,w}(\G)$ is the number of weight-$i$ patterns covering a stopping set of size $w$.
It is clear that $\Psi_{w,w}(\G)=\tau_w$. For small $i-w$, 
$\Psi_{i,w}(\G)$ can be approximated by $\sum_{\mathcal{A}} { N-\mathcal{A} \choose i-w} \tau_{w,\mathcal{A}}$,
where $\tau_{w,\mathcal{A}}$ is the number of stopping sets of size $w$ having 
$|\mathcal{R}(\mathcal{S})|=\mathcal{A}$. For $w \le d_1d_2+d_1+d_2+1$, the area $\mathcal{A}$
is bounded from above by the product $\ell_1^0 \times \ell_2^0$ from Lemma~\ref{lem_max_support}. 
Numerical evaluations of $\Psi_i(\G)$ are tractable for very short codes ($N \le 25$)
and become very difficult for codes of moderate size and beyond, e.g. $N=144$ and $N=224$ for the
$[12,10]^{\otimes 2}$ and the $[14,12] \otimes [16,14]$ codes respectively.
For this reason, expressions (\ref{equ_PewML_Psi}) and (\ref{equ_PewG_Psi}) are not practical 
to predict the $SEC(q,\epsilon)$ performance of product codes with significant characteristics.\\

For $P_{ew}^{\G}$, thanks to Theorems \ref{th_stopping_sets_d} and \ref{th_stopping_sets_d1_d2}, 
a union bound can be easily established. Indeed, we have
\begin{align*}
P_{ew}^{\G} & =Prob(\exists \cS~covered) \\
          &\le \sum_w Prob(\exists \cS: |\cS|=w, \cS~covered),
\end{align*}
leading to
\begin{equation}
\label{equ_Union_bound}
P_{ew}^{\G} \le P^U(\epsilon)=\sum_{w=d_1d_2}^{N} \tau_w \epsilon^w.
\end{equation}

\begin{figure}[!h]
\begin{center}
\includegraphics[height=0.98\columnwidth,angle=270]{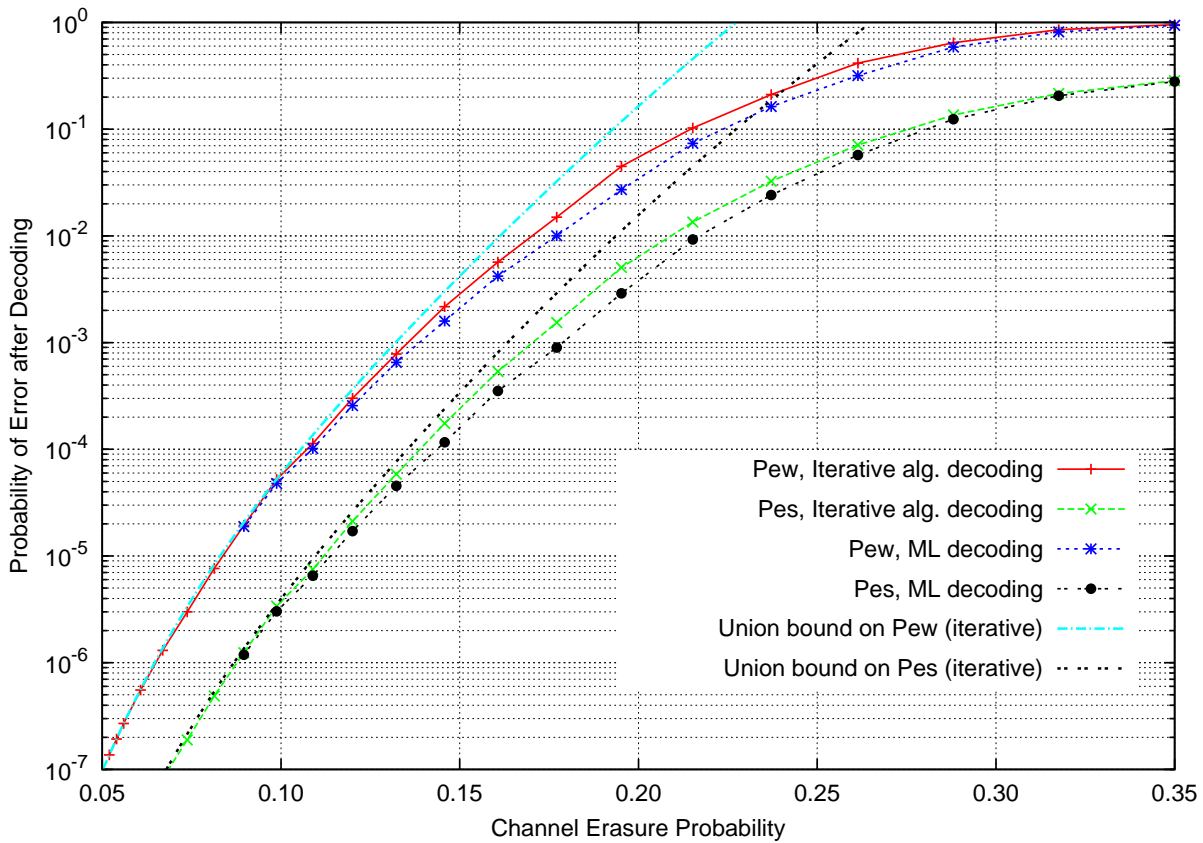}
\caption{Product code $[12,10]_q^{\otimes 2}$, no edge coloring. 
Word and symbol error rate performance for iterative decoding versus its union bound 
and ML decoding.\label{fig_ErasurePerf12}}
\end{center}
\end{figure}

\noindent
From Theorem~\ref{th_stopping_sets_d}, 
the union bound $P^U(\epsilon)$ for the $[12,10,3]_q^{\otimes 2}$ product code is
\begin{align*}
P^U(\epsilon) =&48400\epsilon^9+6098400\epsilon^{12}+23522400\epsilon^{13} +17641800\epsilon^{14}\\
               &+1754335440\epsilon^{15}+9126691200\epsilon^{16}+o(\epsilon^{16}).
\end{align*}
The performance of this code on the $SEC(q,\epsilon)$ channel is shown in Figure~\ref{fig_ErasurePerf12}. 
We used the standard finite field of size $q=256$. The union bound for the symbol error 
probability $P_{es}^{\G}$ is derived by weighting the summation term in (\ref{equ_Union_bound}) with $w/N$,
i.e. $P_{es}^{\G} \le \sum_{w=d_1d_2}^{N} \frac{w}{N}\tau_w \epsilon^w$.
As observed in the plot of Figure~\ref{fig_ErasurePerf12}, 
the union bound is sufficiently tight. Furthermore, the performance of the iterative algebraic
row-column decoder is very close to that of ML decoding in the whole range of $\epsilon$. 
For small $\epsilon$, the curves are superimposed as predicted by Corollary~\ref{cor_PewG_PewML}.\\ 

\begin{figure}[!t]
\begin{center}
\includegraphics[height=0.98\columnwidth,angle=270]{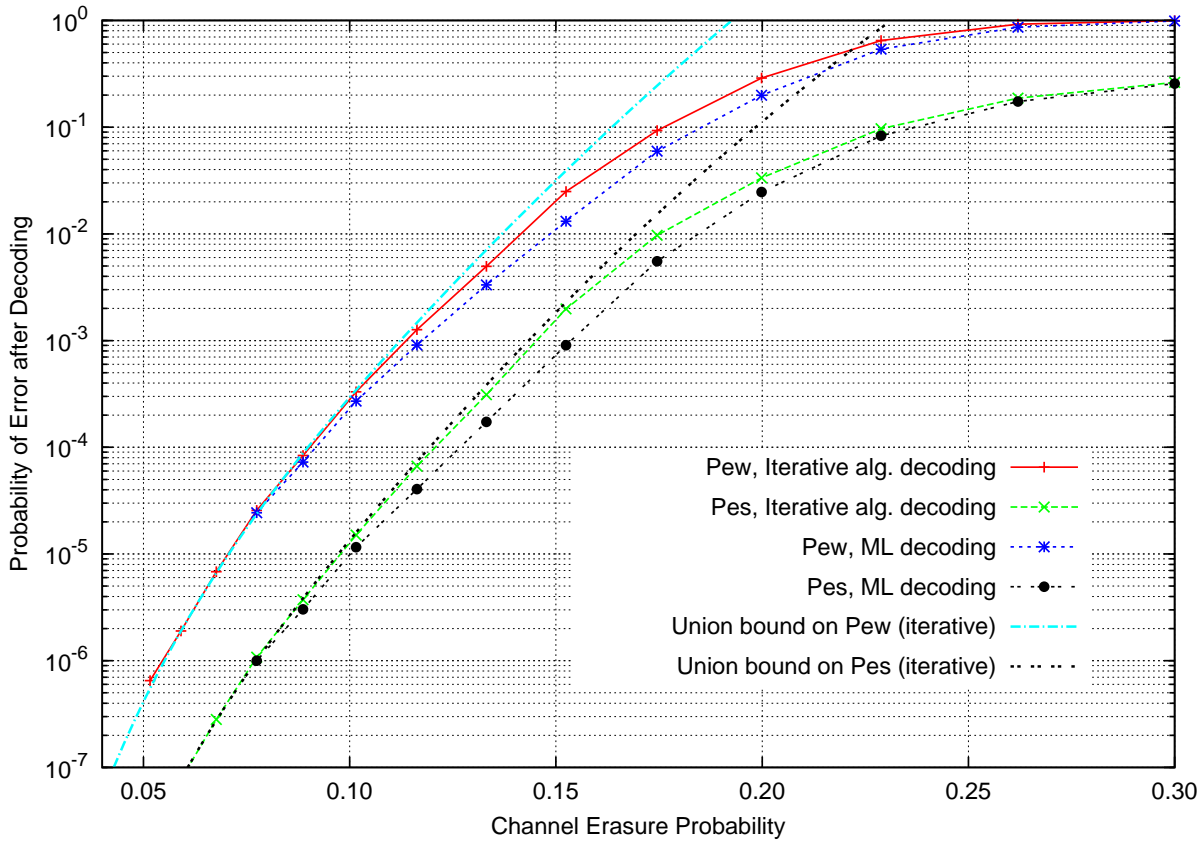}
\caption{Product code $[14,12]_q \otimes [16,14]_q$, no edge coloring. 
Word and symbol error rate performance for iterative decoding versus its union bound 
and ML decoding.\label{fig_ErasurePerf16}}
\end{center}
\end{figure}

\noindent
The union bound $P^U(\epsilon)$ for the $[14,12,3]_q \otimes [16,14,3]_q$ product code is 
\begin{align*}
P^U(\epsilon) =&203840\epsilon^9+44946720\epsilon^{12}+174894720\epsilon^{13}+
131171040\epsilon^{14}\\
               &+17839261440\epsilon^{15}+126887941180\epsilon^{16}+o(\epsilon^{16}).
\end{align*}
The performance of this code on the $SEC(q,\epsilon)$ channel is shown in Figure~\ref{fig_ErasurePerf16}. 
Similar to the previous code, the union bound is tight enough and iterative decoding performs
very close to ML decoding. Finally, let us interpret these results from a finite-length information
theoretical point of view \cite{Polyanskiy2010}. The $SEC(q,\epsilon)$ of Shannon capacity 
$\log_2(q)(1-\epsilon)$ behaves exactly like a $BEC(\epsilon)$ of capacity $(1-\epsilon)$
but erasures in the $SEC$ occur at the symbol level instead of the binary digit level. 
Finite-regime BEC bounds from \cite{Polyanskiy2010} are directly applicable 
to our product codes over the $SEC(q,\epsilon)$. The BEC channel dispersion 
is $V=\epsilon(1-\epsilon)$ and its maximal achievable rate is given by \cite{Polyanskiy2010}, Theorem~53,
\begin{equation}
R=(1-\epsilon)-\sqrt{\frac{V}{n}} Q^{-1}(P_{ew})+O(\frac{1}{n}),
\end{equation}
where $n$ is the code length, $Q(x)$ is the Gaussian tail function, $\epsilon$ is the channel 
erasure probability, and $P_{ew}$ is the target word error probability.
The next table shows how good is the proposed product code based on MDS components.

\begin{table}[!h]
\begin{center}
\begin{tabular}{|l|c|c|}
\hline
 & Coding Rate $R$ & Erasure Prob. $\epsilon$ \\ 
 & for $\epsilon=0.15$ & for $R=0.75$ \\ \hline \hline
Polyanskiy-Poor-Verd\'u      & $0.794$ : $P_{ew}=1.0\cdot10^{-2}$ & $0.189$ \\ \hline
$[14,12]_q \otimes [16,14]_q$ & $0.750$ : $P_{ew}=1.0\cdot10^{-2}$ & $0.150$ \\ \hline
Regular-$(3,12)$ LDPC         & $0.750$ : $P_{ew}=2.9\cdot 10^{-2}$ & $0.135$ \\ \hline
\end{tabular}
\end{center}
\caption{Finite-length performance of the $[14,12]_q\otimes [16,14]_q$ product code.
The value of $\epsilon$ in the third column is given for 
$P_{ew}=10^{-2}$ at all rows.\label{tab_polyanskiy}}
\end{table}

\subsection{Unequal probability erasures \label{perf_erasure_sub3}}
In communication and storage systems, erasure events of unequal probabilities may occur.
In order to observe the effect of a double-diversity coloring on the performance
in multiple erasure channels, we define the $SEC(q,\{\epsilon_i\}_{i=1}^M)$.
On this channel, symbol erasure events are independent but the probability of erasing a symbol
is $\epsilon_i$ if it is associated to an edge in $\G$ with color $\phi(e)=i$.
The union bound is easily modified to get
\begin{equation}
P_{ew}^{\G} \le P^U(\epsilon_1, \ldots, \epsilon_M),
\end{equation}
where
\begin{equation}
\label{equ_Union_bound_multiple}
P^U(\epsilon_1, \ldots, \epsilon_M) = \hspace{-1mm} \sum_{w=d_1d_2}^{N} 
\sum_{
{\scriptsize
\begin{array}{l}
w_1,\ldots,w_M\\ : \sum_i w_i =w
\end{array}
}
} \hspace{-3mm} \tau(w_1,\ldots,w_M) \prod_{i=1}^M \epsilon_i^{w_i}.
\end{equation}
The coefficient $\tau(w_1,\ldots,w_M)$ is the number of stopping sets of size $w=\sum_{i=1}^M w_i$,
where $i$ symbol edges have color $i$, $i=1 \ldots M$. Clearly, the coefficients
$\tau(w_1,\ldots,w_M)$ depend on the edge coloring $\phi$. For double-diversity colorings and $M\ge2$,
these coefficients satisfy the following property:\\
For any stopping set $\cS$ such that $|\cS|=w$, $\tau(w_1,\ldots,w_M)$ does exist
for $\sum_{i=1}^M w_i=w$ and $w_i>0$ only, i.e. no weak compositions of $w$ are authorized by $\phi$.\\
Hence, the product code should perform well if one of the $\epsilon_i$ is close to $1$
and the remaining $\epsilon_i$ are small enough. The extreme case is true thanks to double
diversity yielding $P^U(0^{M-1}, 1^1)=0$, where $(0^{M-1}, 1^1)$ represents all vectors
with all positions at $0$ except for one position set to $1$.
Figure~\ref{fig_ErasurePerf4colors1210}
shows the performance of $[12,10]_q^{\otimes 2}$ on the $SEC(q,\{\epsilon_i\}_{i=1}^M)$ channel with $M=4$ colors.
The edge coloring is the double-diversity coloring produced by the DECA algorithm 
and drawn in Figure~\ref{fig_matrix_color_6_6}. The expression of $P^U(\epsilon_1, \ldots, \epsilon_M)$
is determined by stopping sets enumeration as in Theorems~\ref{th_stopping_sets_d} and \ref{th_stopping_sets_d1_d2}. 
Details are omitted and the very long expression of $P^U(\epsilon_1, \ldots, \epsilon_M)$ is not shown.
The special case $\epsilon_1=\epsilon_2=\epsilon_3$
is considered and the performance is plotted as a function of $\epsilon_4$.
For a fixed $\epsilon_1$, double diversity dramatically improves the performance with respect to $\epsilon_4$.

\begin{figure}[!t]
\begin{center}
\includegraphics[height=0.98\columnwidth,angle=270]{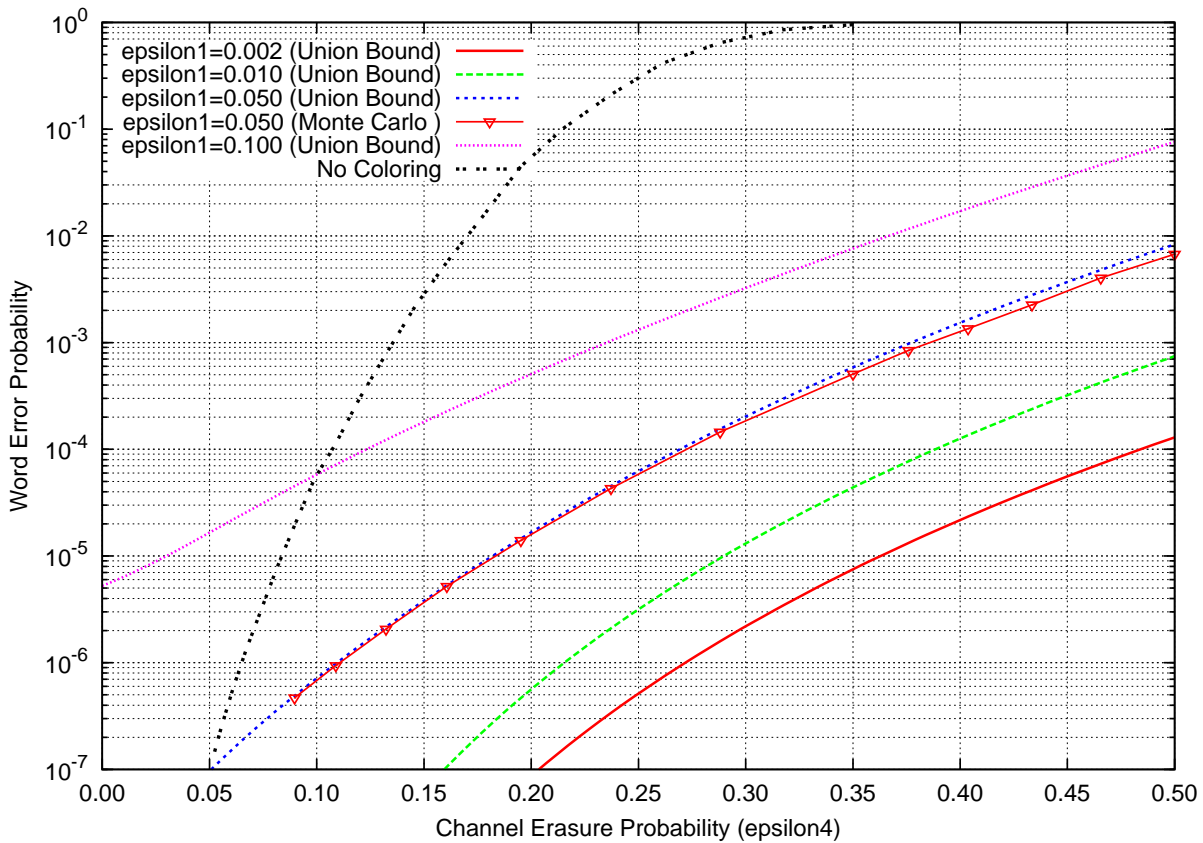}
\caption{Product code $[12,10]_q^{\otimes 2}$ with double-diversity edge coloring. 
Word error rate performance versus $\epsilon_4$, 
for iterative decoding on the $SEC(q,\{\epsilon_i\}_{i=1}^4)$ channel with $\epsilon_1=\epsilon_2=\epsilon_3$.\label{fig_ErasurePerf4colors1210}}
\end{center}
\end{figure}

\section{Conclusions \label{sec_conclusions}}
Non-binary product codes with MDS components are studied in this paper
in the context of iterative row-column algebraic decoding. 
Channels with both independent and block erasures are considered. 
The rootcheck concept and associated double-diversity edge colorings 
were described after introducing a compact graph representation for product codes.
For solving erased symbols, an upper bound of the number of decoding iterations 
is given as a function of the graph size and the color palette size $M$. 
Stopping sets are defined in the context of MDS components and 
a relationship is established with the graph representation of the product code.
A full characterization of these stopping sets is given up to a weight
$(d_1+1)(d_2+1)$. Then, we proposed a differential evolution edge coloring algorithm
to design colorings with a large population of minimal rootcheck order symbols.
The complexity of this algorithm per iteration is $o(M^{\aleph})$, where $\aleph$
is the differential evolution parameter. The performance of MDS-based product codes 
with and without double-diversity coloring is analyzed. In addition, 
ML and iterative decoding are proven to coincide at small channel erasure probability.
Original results found in this paper are listed in Section~\ref{sec_main_results}.

A complete enumeration of product code codewords is still an open problem in coding theory.
Following the enumeration of bipartite graphs in Section~\ref{sec_stopping_sets_sub4} 
(see also Table~\ref{tab_seq_spec_parts}) and following
the DECA algorithm that aims at improving $\eta(\phi)$ in Section~\ref{sec_edge_coloring_sub2}, 
two open problems can be stated.\\
$\bullet$ In number theory. There exists no recursive or closed form expression
for the special partition function, i.e. the number of special partitions of an integer.
Also, in a way similar to the Hardy-Ramanujan formula, the asymptotic behavior
is unknown for the number of special partitions. Special partitions are introduced
in Definition~\ref{def_special_partition}.\\
$\bullet$ In graph theory and combinatorics. Consider a matrix of size $H \times W$
and a coloring palette of size $M$. For simplicity, assume that $H\cdot W$ is multiple of $M$.
A matrix entry is called {\em edge}. A color is assigned to each edge in the matrix.
All $M$ colors are equally used. 
A matrix edge/entry $(i,j)$ of color $c$ is said to be {\em good} if it is the unique entry
with color $c$ either on row $i$ or on column $j$. The number of good entries
is denoted by $\eta(\phi)$, see also (\ref{equ_eta_phi}). 
Given the matrix height $H$, width $W$,
and the palette size $M$, find the maximum achievable number of good entries $\eta(\phi)$ 
over the set of all edge colorings $\phi$. A simpler problem would be to find 
an upper bound of $\eta(\phi)$.
\newpage
\section*{Appendix A\\Proof of Theorem~\ref{th_stopping_sets_d1_d2}}
Of course, $C_1$ and $C_2$ are interchangeable which explains why we stated the theorem for $d_1 < d_2$.
From Lemma~\ref{lem_max_support}, the maximal rectangle height satisfies $\ell_1^0 \le (d_1+2)$.
Similarly, under the condition $d_2 < 3d_1-1$, the maximal rectangle width satisfies $\ell_2^0 \le (d_2+3)$.
From $d_1 \times d_2$ up to the maximal size $(d_1+2) \times (d_2+3)$, 
there are twelve different sizes listed in Figure~\ref{fig_rectangle_d1_d2}.
The most right column tells us when sizes located on the same row are equal.
Also, the first entries on rows $4$ and $5$ are equal if $d_2=d_1+1$.
For these rectangular supports, the stopping set weight $w$
takes values from rows $1$-$4$ (and the ranges between these values) in the table 
drawn in Figure~\ref{fig_rectangle_d1_d2}, i.e. $d_1d_2 \le w \le (d_1+1)(d_2+1)$. 
\vspace*{1cm}
\begin{figure}[!h]
\begin{center}
\includegraphics[width=0.7\columnwidth]{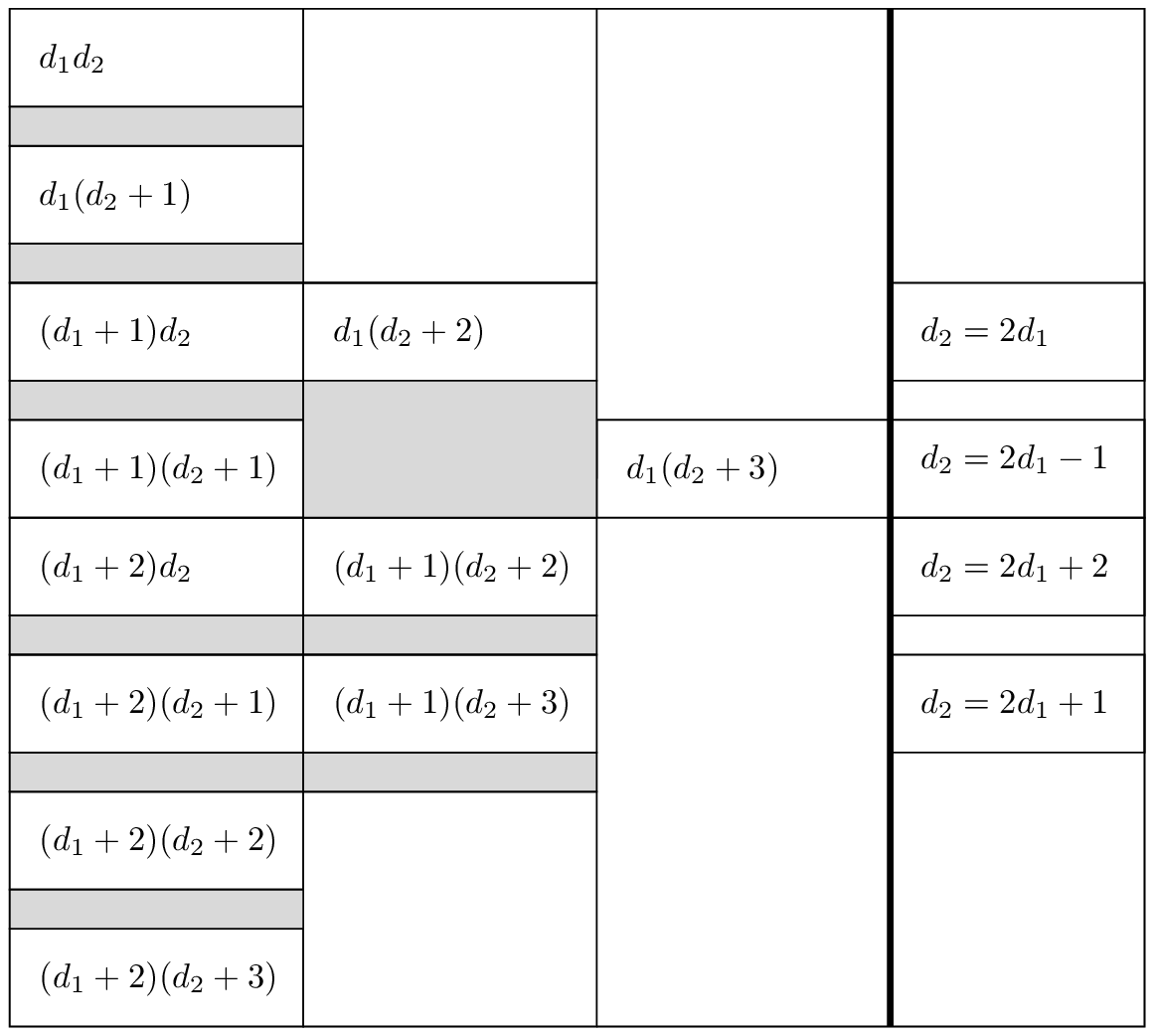}
\caption{Size of the rectangular support $\cR(\cS)$ given in the three left columns. 
The twelve different sizes are listed in increasing order within each column. The right
column of this table indicates when sizes on the same row are equal.\label{fig_rectangle_d1_d2}}
\end{center}
\end{figure}

The proof shall consider $d_2>2d_1$ in its sub-section C.
There exists no integer $d_2$ in the range $]2d_1, 3d_1-1[$ for $d_1=2$.
Only for sub-section C and $d_1=2$, we consider a rectangular support
with a width up to $d_2+4$ which enlarges the range of $d_2$
to $2d_1 < d_2 < 4d_1-1$ and permits to keep the case $d_1=2$ valid in sub-section C.\\ 

\begin{itemize}
\itemsep=3mm
\item The case $w < d_1d_2$.\\
The proof is similar to $w<d^2$ in Theorem~\ref{th_stopping_sets_d}. Here,
we just deduce that  $ w \ge d_1\ell_2$ and $ w \ge d_1\ell_2$ leading to the contradiction
$w \ge d_1d_2$. Therefore $\tau_w=0$ for $w < d_1d_2$ under type II iterative decoding.
\item The case $w=d_1d_2$.\\
We use similar inequalities as in the previous case which resembles the proof in Theorem~\ref{th_stopping_sets_d}
for $w=d^2$. We get that $\cR(\cS)=\cS$. All stopping set of size $d_1d_2$ are obvious. 
Their number is given by choosing $d_1$ rows out of $n_1$ and $d_2$ columns out of $n_2$.
\item The case $ d_1d_2 < w < d_1(d_2+1)$.\\
Given that $\ell_1\ell_2 \ge w > d_1d_2$, we get $\ell_1 \ge d_1$ and $\ell_2 \ge d_2$,
since $d_1 \times d_2$ is the smallest $\cR(\cS)$.
Take $\ell_1=d_1$, then $\ell_2 \ge d_2+1$ because $w > d_1d_2$. The weight of each column must be at least $d_1$
giving us $w \ge d_1\ell_2 \ge d_1(d_2+1)$, which is a contradiction unless $\tau_w=0$.
The same arguments hold for $\ell_1>d_1$.
\item The case $w=d_1(d_2+1)$.\\
The admissible rectangular support can have all sizes $\ell_1 \times \ell_2$ 
listed in Figure~\ref{fig_rectangle_d1_d2} starting from $d_1 \times (d_2+1)$.
\begin{itemize}
\item The smallest $\cR(\cS)$ is $d_1 \times (d_2+1)$. All corresponding stopping sets are obvious.
Their number is
\[
{n_1 \choose d_1}  {n_2 \choose d_2+1}.
\]
\item The next $\cR(\cS)$ has size $(d_1+1)\times d_2$. 
The number of zeros is $\beta=(d_1+1)d_2-d_1(d_2+1)=d_2-d_1>0$. This result
contradicts Lemma~\ref{lem_max_zeros} where $\beta=0$.
Hence, this size of the rectangular support yields no stopping sets, $\tau_w=0$ in this sub-case.
\item The next $\cR(\cS)$ has size $d_1\times (d_2+2)$. Again,  Lemma~\ref{lem_max_zeros} on
the existence of a stopping set tells us that $\beta=0$, but $\beta=d_1(d_2+2)-d_1(d_2+1)=d_1>0$.
Then $\tau_w=0$.
\item All rectangle sizes from rows $4-8$ in the table in Figure~\ref{fig_rectangle_d1_d2}
are larger than the previous case and make a contradiction on $\beta$ unless $\tau_w=0$.
\end{itemize}
\end{itemize}
Starting from this point, $d_2$ should be compared to $2d_1$ in order 
to sort the values of the stopping set size as given in the table in Figure~\ref{fig_rectangle_d1_d2}.

\subsection*{A. Minimum distances satisfying $d_2 < 2d_1$}
\begin{itemize}
\item The case $d_1(d_2 + 1)<w<(d_1+1)d_2$.\\
The smallest $\cR(\cS)$ has size $(d_1+1)\times d_2$ and the largest has size
$(d_1+2)\times (d_2+3)$. All these stopping sets contradict Lemma~\ref{lem_max_zeros}
if $\beta$ is computed from the size of $\cR(\cS)$ and $w$. Then $\tau_w=0$.
\item The case $w=(d_1+1)d_2$.
\begin{itemize}
\item $\cR(\cS)$ has size $(d_1+1)\times d_2$. Stopping sets are obvious and their number is
\[
{n_1 \choose d_1+1}  {n_2 \choose d_2}.
\]
\item $\cR(\cS)$ has size $d_1\times (d_2+2)$. Lemma~\ref{lem_max_zeros} gives $\beta=0$
but $\beta=d_1(d_2+2)-(d_1+1)d_2=2d_1-d_2 \ge 1$. Then $\tau_w=0$ in this sub-case.
\item $\cR(\cS)$ has size $(d_1+1)\times (d_2+1)$. We have $\beta=d_1+1$
and $d_2-d_1$ columns have no $0$. The $\beta$ zeros should be
in a $(d_1+1)\times(d_1+1)$ permutation matrix in the remaining $\beta$ columns. 
These stopping sets are not obvious and their number is 
\[
(d_1+1)!  {d_2+1 \choose d_2-d_1}  {n_1 \choose d_1+1}  {n_2 \choose d_2+1}.
\]
\item All other $\cR(\cS)$ greater than the previous case have $\tau_w=0$
because of a contradiction on $\beta$.
\end{itemize}
\item The case $(d_1+1)d_2 < w < d_1(d_2+2)$.\\
Let us write $w=(d_1+1)d_2+\lambda$, where $\lambda$ belongs to $[1, 2d_1-d_2-1]$.
If $d_2=2d_1-1$ this range for $\lambda$ is empty and we obtain $\tau_w=0$.
Then, we consider $d_2 < 2d_1-1$.
\begin{itemize}
\item $\cR(\cS)$ has size $d_1\times (d_2+2)$. The number of zeros
is $\beta=d_1(d_2+2)-w>0$ which contradicts Lemma~\ref{lem_max_zeros}.
There are no stopping sets for this rectangular size.
\item $\cR(\cS)$ has size $(d_1+1)\times (d_2+1)$. 
We have $\beta=d_1+1-\lambda \in [d_2-d_1+2, d_1]$. The non-obvious
stopping sets are built by selecting $\beta$ columns and $\beta$ rows
and then embedding any $0$-permutation matrix, their number is
\[
(d_1+1-\lambda)!  {d_1+1 \choose \lambda}  {d_2+1 \choose d_1+1-\lambda}  {n_1 \choose d_1+1}  {n_2 \choose d_2+1}. \nonumber
\]
\item All other $\cR(\cS)$ greater than the previous case have $\tau_w=0$
because of a contradiction on $\beta$.
\end{itemize}
\item The case $w=d_1(d_2+2)$.
\begin{itemize}
\item For $\cR(\cS)$ with size $d_1(d_2+2)$, we the following number
of obvious stopping sets
\[
{n_1 \choose d_1}  {n_2 \choose d_2+2}.
\]
\item The next size for $\cR(\cS)$ to be considered is $(d_1+1) \times (d_2+1)$. 
The number of zeros is $\beta=d_2-d_1+1$. As usual, these non-obvious stopping
sets are constructed by a $0$-permutation matrix of size $\beta$ inside $\cR(\cS)$.
Their number is
\[
(d_2-d_1+1)!   {d_1+1 \choose d_2-d_1+1}  {d_2+1 \choose d_2-d_1+1} 
              {n_1 \choose d_1+1}  {n_2 \choose d_2+1}. \nonumber
\]
\item Both $d_1\times (d_2+3)$ and $(d_1+2)\times d_2$ lead to a contradiction on $\beta$.
Now we consider the rectangle of size $(d_1+1) \times (d_2+2)$. The number
of zeros is $\beta=d_2+2$. All columns must have a single $0$.
Regarding the rows, let $r_0$, $r_1$, and $r_2$ be the number of rows
with $0$, $1$, and $2$ zeros respectively. We have $r_0+r_1+r_2=d_1+1$
and $\beta=2r_2+r_1$. Combining the two previous equalities yields
$2r_0+r_1=2d_1-d_2$. Many similar cases where encountered in the proof of Theorem~\ref{th_stopping_sets_d}.
The number of these non-obvious stopping sets becomes
\[
\sum_{2r_0+r_1=2d_1-d_2}  {d_1+1 \choose r_0} {d_1+1-r_0 \choose r_1} \frac{(d_2+2)!}{2^{r_0+d_2-d_1+1}} 
 {n_1 \choose d_1+1}  {n_2 \choose d_2+2}. \nonumber 
\]
\end{itemize}
\item The case $d_1(d_2+2)<w<(d_1+1)(d_2+1)$.\\
Let us write $w=d_1(d_2+2)+\lambda$, where $\lambda$ belongs to the interval $[1, d_2-d_1]$.
\begin{itemize}
\item The smallest rectangle has size $(d_1+1)(d_2+1)$. 
The number of zeros is $\beta=d_2-d_1+1-\lambda \in [1, d_2-d_1]$. The number of these
non-obvious stopping sets is found by counting all $\beta \times \beta$ permutation matrices
in all positions,
\[
(d_2-d_1+1-\lambda)!   {d_1+1 \choose 2d_1-d_2+\lambda}  {d_2+1 \choose d_1+\lambda} 
  {n_1 \choose d_1+1}  {n_2 \choose d_2+1}. \nonumber
\]
\item The next $\cR(\cS)$ has size $(d_1+1)(d_2+2)$ according to the table in Figure~\ref{fig_rectangle_d1_d2},
since both sizes $d_1(d_2+3)$ and $(d_1+2)d_2$ lead to a contradiction on $\beta$. 
The number of zeros for this rectangular support is $\beta=d_2+2-\lambda \in [d_1+2,d_2+1]$.
The $(d_2+2)$ columns satisfy: $\lambda$ columns have no zero and $\beta$
columns have a unique zero. As usual, we solve $\beta=2r_2+r_1$ and $r_0+r_1+r_2=d_1+1$
to get $2r_0+r_1=2d_1-d_2+\lambda$ and $r_2=r_0+\lambda-d_1-1$. 
The number of non-obvious stopping sets in this case is
\[
\sum_{2r_0+r_1=2d_1-d_2+\lambda}  {d_1+1 \choose r_0} {d_1+1-r_0 \choose r_1} \frac{(d_2+2)!}{2^{r_2}  \lambda!} 
  {n_1 \choose d_1+1}  {n_2 \choose d_2+2}. \nonumber
\]
\item Rectangular supports larger than $(d_1+1)(d_2+2)$ do not correspond to stopping sets
for the given range of $w$, i.e. $\tau_w=0$.
\end{itemize}
\item The last case $w=(d_1+1)(d_2+1) \le d_1(d_2+3)$.\\
\begin{itemize}
\item The number of obvious stopping sets for the smallest $\cR(\cS)$ is
\[
{n_1 \choose d_1+1}  {n_2 \choose d_2+1}.
\]
If $d_2=2d_1-1$ then $(d_1+1)(d_2+1)=d_1(d_2+3)$ and corresponds to
the following obvious stopping sets
\[
{n_1 \choose d_1}  {n_2 \choose d_2+3}.
\]
Similarly, if $d_2=d_1+1$ then $(d_1+1)(d_2+1)=(d_1+2)d_2$ and corresponds to
the obvious stopping sets with number
\[
{n_1 \choose d_1+2}  {n_2 \choose d_2}.
\]
Notice that $d_1\times (d_2+3)$ and $(d_1+2)\times d_2$ have no non-obvious 
stopping sets (from Lemma~\ref{lem_max_zeros}). 
\item The next $\cR(\cS)$ is $(d_1+1)(d_2+2)$. The corresponding number of zeros
is $\beta=d_1+1$. Similar cases were encountered before.
The number of these non-obvious stopping sets is
\[
\sum_{2r_0+r_1=d_1+1}  {d_1+1 \choose r_0} {d_1+1-r_0 \choose r_1} \frac{(d_2+2)!}{2^{r_0}(d_2-d_1+1)!} 
  {n_1 \choose d_1+1}  {n_2 \choose d_2+2}. \nonumber
\]
\item Consider $\cR(\cS)$ with size $(d_1+2)(d_2+1)$. We have $\beta=d_2+1$.
If $d_2 > d_1+1$, then we find $\tau_w=0$ by contradicting arguments on $\beta$.
But if $d_2=d_1+1$, the number of non-obvious stopping sets becomes
\[
\sum_{2r_0+r_1=d_2+1}  {d_2+1 \choose r_0} {d_2+1-r_0 \choose r_1} \frac{(d_1+2)!}{2^{r_0}} 
  {n_1 \choose d_1+2}  {n_2 \choose d_2+1}. \nonumber
\]
\item Consider $\cR(\cS)$ with size $(d_1+1)(d_2+3)$. We have $\beta=2(d_1+1)$.
If $d_2 < 2d_1-1$ there are no stopping sets. When $d_2=2d_1-1$, we
get $\beta=2(d_1+1)=d_2+3$. The number of non-obvious stopping sets is found to be
(method as in previous cases)
\[
\sum_{3r_0+2r_1+r_2=d_1+1}  {d_1+1 \choose r_0} {d_1+1-r_0 \choose r_1} 
 {d_1+1-r_0-r_1 \choose r_2} \frac{(d_2+3)!}{2^{r_2}6^{r_3}}
   {n_1 \choose d_1+1}  {n_2 \choose d_2+3}, 
\]
where $r_3=d_1+1-r_0-r_1-r_2$.
\item Consider the next $\cR(\cS)$ with size $(d_1+2)(d_2+2)$ 
as given in the table in Figure~\ref{fig_rectangle_d1_d2}. 
We have $\beta=d_1+d_2+3$. No stopping sets are found (by contradiction on $\beta$)
except for $d_2=d_1+1$. In this case, we get $\beta=2(d_1+2)$.
The rectangle has two zeros in each row.
This problem is solved in a similar method as in the proofs of
Lemma~\ref{lem_graph_bipartite_deg2} and Lemma~\ref{lem_graph_bipartite_deg2_1}. 
Indeed, we have to enumerate bipartite graphs with $d_1+2$ left vertices
all of degree $2$. These graphs have $d_1+3$ right vertices. 
Two cases should be distinguished: a- The extra vertex on the right
has no edges, b- The extra vertex at the right has one edge.
The number of these stopping sets is
\[
\left((d_2+2)x_{d_1+2} + \frac{(d_2+2)y_{d_1+2}}{2} \right) 
 {n_1 \choose d_1+2}  {n_2 \choose d_2+2}, \nonumber
\]
where $x_{d_1+2}$ and $y_{d_1+2}$ are determined 
from Lemma~\ref{lem_graph_bipartite_deg2} and Lemma~\ref{lem_graph_bipartite_deg2_1}.
\item The largest rectangular support for $w=(d_1+1)(d_2+1)$ is $(d_1+2)(d_2+3)$.
The number of zeros is $\beta=d_2+2d_1+5$. From Lemma~\ref{lem_max_zeros} we
get that $\beta$ must be less than or equal to both $2(d_2+3)$ and $3(d_1+2)$.
The first condition is satisfied if $d_2=d_1+1$ and $d_1=2$, also the second condition
is satisfied if $d_2=2d_1-1$ and $d_1=2$. Consequently, for this $w$ and this size
of $\cR(\cS)$, non-obvious stopping sets exist only for $d_1=2$, $d_2=3$, and $\beta=12$
in a rectangle of size $4 \times 6$. Their number is
\[
1860  {n_1 \choose d_1+2}  {n_2 \choose d_2+3}.
\]
\end{itemize}
\end{itemize}

\subsection*{B. Minimum distances satisfying $d_2=2d_1$}
\begin{itemize}
\item The case $d_1(d_2 + 1)<w<(d_1+1)d_2=d_1(d_2+2)$.\\
Write $w=d_1(d_2 + 1)+\lambda$, where $\lambda$ is in the range $[1,d_1-1]$.
For all sizes of $\cR(\cS)$ in the table in Figure~\ref{fig_rectangle_d1_d2}, 
we find $\beta=\ell_1\ell_2-w$ and we notice that it contradicts Lemma~\ref{lem_max_zeros}.
Thus, there are no stopping sets for $w$ in the range $]d_1(d_2 + 1), (d_1+1)d_2[$.
\item The case $w=(d_1+1)d_2=d_1(d_2+2)$.\\
\begin{itemize}
\item Obvious stopping sets do exist and their number is
\[
{n_1 \choose d_1+1}  {n_2 \choose d_2} + {n_1 \choose d_1}  {n_2 \choose d_2+2}.
\]
\item For rectangles larger than $(d_1+1) \times d_2$ and $d_1\times (d_2+2)$,
all sizes yield no stopping sets (by contradiction on $\beta$) except
for $(d_1+1) \times (d_2+1)$ and $(d_1+1) \times (d_2+2)$ where the number
of non-obvious stopping sets is respectively 
\[
(d_1+1)!  {d_2+1 \choose d_1+1}  {n_1 \choose d_1+1}  {n_2 \choose d_2+1},
\]
and
\[
\frac{(d_2+2)!}{2^{d_1+1}}  {n_1 \choose d_1+1}  {n_2 \choose d_2+2}.
\]
\end{itemize}
\item The case $(d_1+1)d_2=d_1(d_2+2) < w < d_1(d_2+3)$.\\
Write $w=(d_1+1)d_2+\lambda$, where $\lambda$ is in the range $[1,d_1-1]$.
\begin{itemize}
\item The smallest rectangular support with a non-zero number of stopping sets
is $(d_1+1)\times (d_2+1)$. We have $\beta=d_1+1-\lambda$ belonging
to the range $[2, d_1]$. The number of corresponding non-obvious stopping sets
is
\[
(d_1+1-\lambda)!  {d_1+1 \choose \lambda}  {d_2+1 \choose d_1+\lambda} 
                  {n_1 \choose d_1+1}  {n_2 \choose d_2+1} \nonumber.
\]
\item For $\cR(\cS)$ with size $(d_1+1)\times (d_2+2)$, we have $\beta=d_2+2-\lambda$ varying
in the range $[d_1+3,d_2+1]$. The rectangle have $\lambda$ columns without zeros.
Given $r_2=d_1+1-r_0-r_1=d_1+1+r_0-\lambda$,
the number of non-obvious stopping sets is
\[
\label{equ_d_1_1_d_2_2}
\sum_{2r_0+r_1=\lambda}  {d_1+1 \choose r_0} {d_1+1-r_0 \choose r_1} \frac{(d_2+2)!}{2^{r_2} \lambda!} 
 {n_1 \choose d_1+1}  {n_2 \choose d_2+2}.
\]
Larger rectangles $\cR(\cS)$ lead to a contradiction on $\beta$, so they do not create stopping
sets for this given weight $w$.
\end{itemize}
\item The case $w=d_1(d_2+3)$.\\
Obvious stopping sets are given by
\[
{n_1 \choose d_1}  {n_2 \choose d_2+3}.
\]
\begin{itemize}
\item Take $\cR(\cS)$ with size $(d_1+1)(d_2+1)$. Then $\beta=1$ (recall that $d_2=2d_1$
in this sub-section). The number of non-obvious stopping sets with a unique zero
in their rectangular support is
\[
(d_1+1)  (d_2+1)  {n_1 \choose d_1+1}  {n_2 \choose d_2+1}.
\]
\item Take $\cR(\cS)$ with size $(d_1+1)(d_2+2)$. Then $\beta=d_1+2$.
The number of stopping sets is given by (\ref{equ_d_1_1_d_2_2})
after setting $\lambda=d_1$.
\item Take $\cR(\cS)$ with size $(d_1+1)(d_2+3)$. Then $\beta=d_2+3$.
In $\cR(\cS)$, all columns have a unique zero. 
Define $r_3=d_1+1-r_0-r_1-r_2=2r_0+r_1+1$, then 
the number of non-obvious stopping sets in this sub-case becomes
\end{itemize}
\[
\sum_{3r_0+2r_1+r_2=d_1}  {d_1+1 \choose r_0} {d_1+1-r_0 \choose r_1} {d_1+1-r_0-r_1 \choose r_2} 
 \frac{(d_2+3)!}{2^{r_2}6^{r_3}}  {n_1 \choose d_1+1}  {n_2 \choose d_2+3} \nonumber
\]
All remaining rectangle sizes (smaller or larger) have no stopping sets.
\item For $d_2=2d_1$ the range $]d_1(d_2+3), (d_1+1)(d_2+1)[$ is empty.
We complete this sub-section with the last case $w=(d_1+1)(d_2+1)$.
\begin{itemize}
\item The number of obvious stopping sets for the smallest $\cR(\cS)$ is
\[
{n_1 \choose d_1+1}  {n_2 \choose d_2+1}.
\]
The size $(d_1+2)\times d_2$ rectangle has no stopping sets.
\item The next $\cR(\cS)$ is $(d_1+1)(d_2+2)$. The corresponding number of zeros
is $\beta=d_1+1$. The number of non-obvious stopping sets is (expression identical
to the case $d_2 \le 2d_1-1$):
\[
\sum_{2r_0+r_1=d_1+1} {d_1+1 \choose r_0} {d_1+1-r_0 \choose r_1} \frac{(d_2+2)!}{2^{r_0}(d_1+1)!} 
  {n_1 \choose d_1+1}  {n_2 \choose d_2+2}. \nonumber
\]
\item Consider $\cR(\cS)$ with size $(d_1+2)(d_2+1)$. We have $\beta=d_2+1$.
For $d_2=2d_1$ this $\beta$ contradicts the upper bound in Lemma~\ref{lem_max_zeros}.
Then $\tau_w=0$. 
\item Consider $\cR(\cS)$ with size $(d_1+1)(d_2+3)$. We have $\beta=2(d_1+1)=d_2+2$.
In $\cR(\cS)$, all columns must have at most one zero but rows can afford up to three zeros.
The number of non-obvious stopping sets is found to be (method as in previous cases)
\end{itemize}
\[
\sum_{3r_0+2r_1+r_2=d_1+1}  {d_1+1 \choose r_0} {d_1+1-r_0 \choose r_1} {d_1+1-r_0-r_1 \choose r_2} 
  \frac{(d_2+3)!}{2^{r_2}6^{2r_0+r_1}}  {n_1 \choose d_1+1}  {n_2 \choose d_2+3}. \nonumber
\]
\begin{itemize}
\item Consider the next supports $\cR(\cS)$ with size $(d_1+2)(d_2+2)$ and $(d_1+2)(d_2+3)$
as given in the table in Figure~\ref{fig_rectangle_d1_d2}. 
The $\beta$ for both sizes contradicts the upper bound in Lemma~\ref{lem_max_zeros}.
We deduce that $\tau_w=0$ in these cases.
\end{itemize}
\end{itemize}

\subsection*{C. Minimum distances satisfying $d_2>2d_1$}
For $2<d_1<d_2 < 3d_1-1$, the width of $\cR(\cS)$ cannot exceed $d_2+3$.
In the special case $d_1=2$, as stated earlier, a width up to $d_2+4$ should be considered.
Then, for $d_1=2$, the rectangular supports are ordered in increasing size according to Table~\ref{tab_d1=2}.
The first and second rows list the stopping set weight $w$ in increasing order.\\
\vspace{-3mm}
\begin{table}[!h]
\begin{center}
\begin{tabular}{|l|}
\hline
$d_1d_2 < d_1(d_2+1)<d_1(d_2+2)<(d_1+1)d_2$\\
$< d_1(d_2+3)<\mathbf{d_1(d_2+4)}\le (d_1+1)(d_2+1)$\\
$< (d_1+2)d_2 \le (d_1+1)(d_2+2) < (d_1+1)(d_2+3)$\\
$\le (d_1+2)(d_2+1)< \mathbf{(d_1+1)(d_2+4)} < (d_1+2)(d_2+2)$ \\
$< (d_1+2)(d_2+3)<\mathbf{(d_1+2)(d_2+4)}$\\ \hline
\end{tabular}
\end{center}
\caption{Table of rectangular sizes for the special case where the first
component code has $d_1=2$.\label{tab_d1=2}}
\end{table}

\begin{itemize}
\item The case $d_1(d_2 + 1)<w<d_1(d_2+2)<(d_1+1)d_2$.\\
Write $w=d_1(d_2 + 1)+\lambda$, where $\lambda$ is in the range $[1,d_1-1]$.
For all sizes of $\cR(\cS)$ in the table in Figure~\ref{fig_rectangle_d1_d2}, 
we find $\beta=\ell_1\ell_2-w$ and we notice that it contradicts Lemma~\ref{lem_max_zeros}.
There are no stopping sets for $w$ in the range $]d_1(d_2 + 1), d_1(d_2+2)[$.
\item The case $w=d_1(d_2+2)$.\\
\begin{itemize}
\item Obvious stopping sets do exist and their number is
\[
{n_1 \choose d_1}  {n_2 \choose d_2+2}.
\]
\item For rectangles larger than $d_1(d_2+2)$ we found no other stopping sets,
by contradiction on $\beta$.
\end{itemize}
\item The case $d_1(d_2+2) < w < (d_1+1)d_2$.\\
Write $w=d_1(d_2+2)+\lambda$, where $\lambda$ is in the range $[1,d_2-2d_1-1]$.
For all rectangular supports, from $\beta$ we deduce that $\tau_w=0$.
\newpage
\item The case $w=(d_1+1)d_2$.\\
Obvious stopping sets are given by
\[
{n_1 \choose d_1+1}  {n_2 \choose d_2}.
\]
\begin{itemize}
\item Take $\cR(\cS)$ with size $(d_1+1)(d_2+1)$. Then $\beta=d_1+1$. 
The number of non-obvious stopping sets for this sub-case is
\[
(d_1+1)!  {d_2+1 \choose d_1+1}  {n_1 \choose d_1+1}  {n_2 \choose d_2+1}.
\]
\item Take $\cR(\cS)$ with size $(d_1+1)(d_2+2)$. Then $\beta=2d_1+2$.
All rows have two zeros. Also, $d_2-2d_1$ columns have no zeros,
while the remaining columns include a unique zero.
The number of non-obvious stopping sets in this sub-case is
\[
\frac{(d_2+2)!}{2^{d_1+1}(d_2-2d_1)!}  {n_1 \choose d_1+1}  {n_2 \choose d_2+2}.
\]
\item Take $\cR(\cS)$ with size $(d_1+1)(d_2+3)$. From Lemma~\ref{lem_max_zeros}
we find that no stopping sets exist, except for $d_1=2$ and $d_2=6$.
In this case, $\beta=3d_1+3=9$. Each row in the rectangle
have three zeros. The number of stopping sets is $\tau_w=\frac{9!}{6^3}=1680$
for $d_1=2$ and $d_2=6$.

\item Take $\cR(\cS)$ with size $(d_1+2)(d_2+1)$. Then $\beta=d_2+d_1+2$.
The reader can easily check that the bound in Lemma~\ref{lem_max_zeros} is not satisfied.
We deduce that $\tau_w=0$ for this rectangle size and this weight $w$.
All remaining rectangle sizes have no stopping sets.
\end{itemize}
\item The case $(d_1+1)d_2 < w < d_1(d_2+3)$.\\
$\tau_w=0$ for $d_1=2$. We pursue this case for $d_1>2$.\\
Write $w=(d_1+1)d_2+\lambda$ where $\lambda$ belongs
to the non-empty interval $[1, 3d_1-d_2-1]$.
\begin{itemize}
\item The next rectangular support with a non-zero number of stopping sets
is $(d_1+1)(d_2+1)$. The number of zeros is $\beta=d_1+1-\lambda$
varying in the range $[d_2-2d_1+2, d_1]$. 
Non-obvious stopping sets
are enumerated by selecting the location and permuting the $\beta$ zeros
inside $\cR(\cS)$. Their number is
\[
(d_1+1-\lambda)!  {d_2+1 \choose d_1+1-\lambda}  {d_1+1 \choose \lambda} 
                   {n_1 \choose d_1+1}  {n_2 \choose d_2+1} \nonumber.
\]
\item Take $\cR(\cS)$ with size $(d_1+1)(d_2+2)$.
We have $\beta=2d_1+2-\lambda$ inside the interval $[d_2-d_1+3, 2d_1+1]$.
As made before, we find $2r_0+r_1=\lambda$ and $r_2=d_1+1+r_0-\lambda$. The columns in $\cR(\cS)$
have at most one zero and rows have at most two zeros. 
The number of these non-obvious stopping sets is
\end{itemize}
\[
\sum_{2r_0+r_1=\lambda}  {d_1+1 \choose r_0} {d_1+1-r_0 \choose r_1} 
 \frac{(d_2+2)!}{2^{r_2}(d_2-2d_1+\lambda)!}  {n_1 \choose d_1+1}  {n_2 \choose d_2+2}. 
\]
\begin{itemize}
\item Consider $\cR(\cS)$ with size $(d_1+2)(d_2+1)$. 
Here $\beta=d_2+d_1+2-\lambda$ contradicts Lemma~\ref{lem_max_zeros} 
because $d_2 > 2d_1$. We have $\tau_w=0$.
All remaining rectangles (smaller or larger) have no stopping sets.
\end{itemize}
\item The case $w=d_1(d_2+3)$.
\begin{itemize}
\item The number of obvious stopping sets in a $d_1 \times (d_2+3)$ rectangular
support is
\[
{n_1 \choose d_1}  {n_2 \choose d_2+3}.
\]
\item Consider  $\cR(\cS)$ with size $(d_1+1)(d_2+1)$.
Here $\beta=d_2-2d_1+1$ is restricted to the interval $]1,d_1[$
given the constraints $2d_1 < d_2 < 3d_1-1$ for $d_1>2$.
For $d_1=2$, $d_2=6$ is the only valid value, with $\beta=3$. 
The number of non-obvious stopping sets is (for $d_1 \ge 2$)
\[
\beta! {d_1+1 \choose 3d_1-d_2} {d_2+1 \choose 2d_1} {n_1 \choose d_1+1} {n_2 \choose d_2+1}.
\]
\item Consider  $\cR(\cS)$ with size $(d_1+1)(d_2+2)$. 
The number of zeros is $\beta=d_2-d_1+2 \in ]d_1+2, 2d_1+1[$. The number
of non-obvious stopping sets is given by
\end{itemize}
\[
\sum_{2r_0+r_1=3d_1-d_2}  {d_1+1 \choose r_0} {d_1+1-r_0 \choose r_1} 
 \frac{(d_2+2)!}{2^{r_2}d_1!}  {n_1 \choose d_1+1}  {n_2 \choose d_2+2} ,
\]
where $r_2=d_2-2d_1+1+r_0$. For $d_2=2$, the above expression is valid
for $d_2=6$ only. 
\begin{itemize}
\item Consider $\cR(\cS)$ with size $(d_1+1)(d_2+3)$. Here $\beta=d_2+3$.
All columns in the rectangle have one zero. Rows can have up to three zeros.
The number of non-obvious stopping sets is 
\[
\sum_{3r_0+2r_1+r_2=3d_1-d_2} {d_1+1 \choose r_0} {d_1+1-r_0 \choose r_1}
 {d_1+1-r_0-r_1 \choose r_2}
\frac{(d_2+3)!}{2^{r_2}6^{r_3}} {n_1 \choose d_1+1} {n_2 \choose d_2+3}, \nonumber
\]
where $r_3=d_1+1-r_0-r_1-r_2$. The above expression is also valid for $(d_1,d_2)=(2,6)$.
\item The rectangular support $\cR(\cS)$ of size $(d_1+2)(d_2+1)$ gives no stopping sets.
The remaining rectangular supports from Figure~\ref{fig_rectangle_d1_d2} 
and Table~\ref{tab_d1=2} yield no stopping sets for $w=d_1(d_2+3)$.\\
\end{itemize}
\noindent
Recall that a maximal rectangle width of $d_2+3$ should be considered for $d_1>2$
and it goes up to $d_2+4$ for $d_1=2$ as shown in Table~\ref{tab_d1=2}.
New obvious stopping sets are found, they appear for $d_1=2$ only with a rectangular
width equal to $d_2+4$. Their rectangular support corresponds to the sizes in boldface
in Table~\ref{tab_d1=2} for $w>d_1(d_2+3)$:
${ n_1 \choose d_1} {n_2 \choose d_2+4}$ obvious stopping sets of size $d_1 \times (d_2+4)$,
${ n_1 \choose d_1+1} {n_2 \choose d_2+4}$ obvious stopping sets of size $(d_1+1) \times (d_2+4)$,
and ${ n_1 \choose d_1+2} {n_2 \choose d_2+4}$ obvious stopping sets of size $(d_1+2) \times (d_2+4)$.
Given that this theorem enumerates stopping sets for $w\le (d_1+1)(d_2+1)$, 
one should only count obvious $d_1 \times (d_2+4)$ sets.
\item The case $d_1(d_2+3)<w<(d_1+1)(d_2+1)$.\\
Write $w=d_1(d_2+3)+\lambda$ where $\lambda \in [1, d_2-2d_1]$.
The results for the three rectangles listed below are valid for $d_1 \ge 2$.
\begin{itemize}
\item Consider $\cR(\cS)$ of size $(d_1+1)(d_2+1)$. We have $\beta=d_2-2d_1+1-\lambda$
varying in the range $[1, d_2-2d_1]$. The number of non-obvious stopping sets is
\[
(d_2-2d_1+1-\lambda)! {d_1+1 \choose 3d_1-d_2+\lambda}  
 {d_2+1 \choose 2d_1+\lambda} 
  {n_1 \choose d_1+1}  {n_2 \choose d_2+1}. 
\]
\item Now consider $\cR(\cS)$ of size $(d_1+1)(d_2+2)$. 
We have $\beta=d_2-d_1+2-\lambda \in [d_1+2, d_2-d_1+1]$.
As done before, the expression of the number of non-obvious stopping sets
involves $r_0$ and $r_1$ as follows.
\end{itemize}
\[
\sum_{2r_0+r_1=3d_1-d_2+\lambda}  {d_1+1 \choose r_0} {d_1+1-r_0 \choose r_1} 
\frac{(d_2+2)!}{2^{r_2}(d_1+\lambda)!}  {n_1 \choose d_1+1}  {n_2 \choose d_2+2}, 
\]
where $r_2=d_1+1-r_0-r_1$.
\begin{itemize}
\item We consider the next $(d_1+1)(d_2+3)$ rectangular support. 
Now $\beta=d_2+3-\lambda \in [2d_1+3, d_2+2]$. The number of non-obvious stopping sets
is
\[
\sum_{3r_0+2r_1+r_2=3d_1-d_2+\lambda} {d_1+1 \choose r_0} {d_1+1-r_0 \choose r_1} 
{d_1+1-r_0-r_1 \choose r_2}
\frac{(d_2+3)!}{2^{r_2}6^{r_3}\lambda!}{n_1 \choose d_1+1}{n_2 \choose d_2+3}, 
\]
where $r_3=d_1+1-r_0-r_1-r_2$.
\item The remaining larger rectangular supports give no stopping sets,
except for $(d_1+1)(d_2+4)$ for $d_1=2$ and $d_2=6$ where $\tau_w=22050$.
These $22050$ rectangles of size $3 \times 10$, where $\beta=10$ and $w=20$, 
have one zero in each column but a row may have up to four zeros.
\end{itemize}
\newpage
\item The last case $w=(d_1+1)(d_2+1)$.
\begin{itemize}
\item Obvious stopping sets are given by
\[
{n_1 \choose d_1+1}  {n_2 \choose d_2+1}.
\]
\item The next $\cR(\cS)$ from the table is $(d_1+1)\times (d_2+2)$.
We have $\beta=d_1+1$. The number of stopping sets is
\[
(d_1+1)!  {d_2+2 \choose d_1+1}   {n_1 \choose d_1+1}  {n_2 \choose d_2+2}.
\]
\item Now consider the $(d_1+1)\times (d_2+3)$ rectangle. We have
$\beta=2d_1+2$. The number of stopping sets is
\end{itemize}
\begin{align}
\sum_{3r_0+2r_1+r_2=d_1+1} & {d_1+1 \choose r_0} {d_1+1-r_0 \choose r_1}{d_1+1-r_0-r_1 \choose r_2}\nonumber \\
&  \frac{(d_2+3)!}{2^{r_2}6^{r_3}(d_2-2d_1+1)!}{n_1 \choose d_1+1}{n_2 \choose d_2+3},\nonumber 
\end{align}
where $r_3=d_1+1-r_0-r_1-r_2$, the above expression being valid
for $d_1\ge 2$.\\
\begin{itemize}
\item No stopping sets are found for the remaining three rectangular supports for $d_1>2$.
On the other hand, for $d_1=2$, stopping sets are found only with a rectangle $(d_1+1)(d_2+4)$.
In this case, we have $\beta=3(d_1+1)$. The number of non-obvious stopping sets is
$\tau_w=11130$ for $(d_1,d_2)=(2,5)$ and $\tau_w=111300$ for $(d_1,d_2)=(2,6)$.
\end{itemize}
\end{itemize}

Q.E.D.
~\\
~\\
\section*{Acknowledgment}
The work of Joseph J. Boutros was supported by the Qatar National Research Fund (QNRF), 
a member of Qatar Foundation, 
under NPRP project 5-401-2-161 on layered coding.
The authors would like to thank Dr. Mireille Sarkiss, from CEA-LIST Paris, for her
precious support.
\clearpage

\end{document}